\newtheorem{theorem}{Theorem}
\newtheorem*{theorem*}{Theorem}
\newtheorem{axiom}{Axiom}
\newtheorem{conjecture}{Conjecture}
\newtheorem{corollary}{Corollary}
\newtheorem{definition}{Definition}
\newtheorem{example}{Example}
\newtheorem{exercise}{Exercise}
\newtheorem{lemma}{Lemma}
\newtheorem{proposition}{Proposition}
\newtheorem{remark}{Remark}
\chardef\@x10\chardef\@xv60
\def\tcitime{
\def\@time{%
  \@minute\time\@hour\@minute\divide\@hour\@xv
  \ifnum\@hour<\@x 0\fi\the\@hour:%
  \multiply\@hour\@xv\advance\@minute-\@hour
  \ifnum\@minute<\@x 0\fi\the\@minute
  }}%
\def\QCTOpt[#1]#2{%
  \def\QCTOptB{#1}
  \def\QCTOptA{#2}
}
\def\QCTNOpt#1{%
  \def\QCTOptA{#1}
  \let\QCTOptB\empty
}
\def\Qct{%
  \@ifnextchar[{%
    \QCTOpt}{\QCTNOpt}
}
\def\QCBOpt[#1]#2{%
  \def\QCBOptB{#1}
  \def\QCBOptA{#2}
}
\def\QCBNOpt#1{%
  \def\QCBOptA{#1}
  \let\QCBOptB\empty
}
\def\Qcb{%
  \@ifnextchar[{%
    \QCBOpt}{\QCBNOpt}
}
\def\PrepCapArgs{%
  \ifx\QCBOptA\empty
    \ifx\QCTOptA\empty
      {}%
    \else
      \ifx\QCTOptB\empty
        {\QCTOptA}%
      \else
        [\QCTOptB]{\QCTOptA}%
      \fi
    \fi
  \else
    \ifx\QCBOptA\empty
      {}%
    \else
      \ifx\QCBOptB\empty
        {\QCBOptA}%
      \else
        [\QCBOptB]{\QCBOptA}%
      \fi
    \fi
  \fi
}
\def\GRAPHICSPS#1{%
 \ifcase\GRAPHICSTYPE
   \special{ps: #1}%
 \or
   \special{language "PS", include "#1"}%
 \fi
}%
\def\graffile#1#2#3#4{%
    \bgroup
    \leavevmode
    \@ifundefined{bbl@deactivate}{\def~{\string~}}{\activesoff}
    \raise -#4 \BOXTHEFRAME{%
        \hbox to #2{\raise #3\hbox to #2{\null #1\hfil}}}%
    \egroup
}%
\def\draftbox#1#2#3#4{%
 \leavevmode\raise -#4 \hbox{%
  \frame{\rlap{\protect\tiny #1}\hbox to #2%
   {\vrule height#3 width\z@ depth\z@\hfil}%
  }%
 }%
}%
\newif\ifwasdraft
\def\GRAPHIC#1#2#3#4#5{%
 \ifnum\draft=\@ne\draftbox{#2}{#3}{#4}{#5}%
  \else\graffile{#1}{#3}{#4}{#5}%
  \fi
 }%
\def\addtoLaTeXparams#1{%
    \edef\LaTeXparams{\LaTeXparams #1}}%
\newif\ifBoxFrame \BoxFramefalse
\newif\ifOverFrame \OverFramefalse
\newif\ifUnderFrame \UnderFramefalse
\def\BOXTHEFRAME#1{%
   \hbox{%
      \ifBoxFrame
         \frame{#1}%
      \else
         {#1}%
      \fi
   }%
}
\def\doFRAMEparams#1{\BoxFramefalse\OverFramefalse\UnderFramefalse\readFRAMEparams#1\end}%
\def\readFRAMEparams#1{%
 \ifx#1\end%
  \let\next=\relax
  \else
  \ifx#1i\dispkind=\z@\fi
  \ifx#1d\dispkind=\@ne\fi
  \ifx#1f\dispkind=\tw@\fi
  \ifx#1t\addtoLaTeXparams{t}\fi
  \ifx#1b\addtoLaTeXparams{b}\fi
  \ifx#1p\addtoLaTeXparams{p}\fi
  \ifx#1h\addtoLaTeXparams{h}\fi
  \ifx#1X\BoxFrametrue\fi
  \ifx#1O\OverFrametrue\fi
  \ifx#1U\UnderFrametrue\fi
  \ifx#1w
    \ifnum\draft=1\wasdrafttrue\else\wasdraftfalse\fi
    \draft=\@ne
  \fi
  \let\next=\readFRAMEparams
  \fi
 \next
 }%
\def\IFRAME#1#2#3#4#5#6{%
      \bgroup
      \let\QCTOptA\empty
      \let\QCTOptB\empty
      \let\QCBOptA\empty
      \let\QCBOptB\empty
      #6%
      \parindent=0pt%
      \leftskip=0pt
      \rightskip=0pt
      \setbox0 = \hbox{\QCBOptA}%
      \@tempdima = #1\relax
      \ifOverFrame
          \typeout{This is not implemented yet}%
          \show\HELP
      \else
         \ifdim\wd0>\@tempdima
            \advance\@tempdima by \@tempdima
            \ifdim\wd0 >\@tempdima
               \textwidth=\@tempdima
               \setbox1 =\vbox{%
                  \noindent\hbox to \@tempdima{\hfill\GRAPHIC{#5}{#4}{#1}{#2}{#3}\hfill}\\%
                  \noindent\hbox to \@tempdima{\parbox[b]{\@tempdima}{\QCBOptA}}%
               }%
               \wd1=\@tempdima
            \else
               \textwidth=\wd0
               \setbox1 =\vbox{%
                 \noindent\hbox to \wd0{\hfill\GRAPHIC{#5}{#4}{#1}{#2}{#3}\hfill}\\%
                 \noindent\hbox{\QCBOptA}%
               }%
               \wd1=\wd0
            \fi
         \else
            \ifdim\wd0>0pt
              \hsize=\@tempdima
              \setbox1 =\vbox{%
                \unskip\GRAPHIC{#5}{#4}{#1}{#2}{0pt}%
                \break
                \unskip\hbox to \@tempdima{\hfill \QCBOptA\hfill}%
              }%
              \wd1=\@tempdima
           \else
              \hsize=\@tempdima
              \setbox1 =\vbox{%
                \unskip\GRAPHIC{#5}{#4}{#1}{#2}{0pt}%
              }%
              \wd1=\@tempdima
           \fi
         \fi
         \@tempdimb=\ht1
         \advance\@tempdimb by \dp1
         \advance\@tempdimb by -#2%
         \advance\@tempdimb by #3%
         \leavevmode
         \raise -\@tempdimb \hbox{\box1}%
      \fi
      \egroup%
}%
\def\DFRAME#1#2#3#4#5{%
 \begin{center}
     \let\QCTOptA\empty
     \let\QCTOptB\empty
     \let\QCBOptA\empty
     \let\QCBOptB\empty
     \ifOverFrame 
        #5\QCTOptA\par
     \fi
     \GRAPHIC{#4}{#3}{#1}{#2}{\z@}
     \ifUnderFrame 
        \nobreak\par\nobreak#5\QCBOptA
     \fi
 \end{center}%
 }%
\def\FFRAME#1#2#3#4#5#6#7{%
 \begin{figure}[#1]%
  \let\QCTOptA\empty
  \let\QCTOptB\empty
  \let\QCBOptA\empty
  \let\QCBOptB\empty
  \ifOverFrame
    #4
    \ifx\QCTOptA\empty
    \else
      \ifx\QCTOptB\empty
        \caption{\QCTOptA}%
      \else
        \caption[\QCTOptB]{\QCTOptA}%
      \fi
    \fi
    \ifUnderFrame\else
      \label{#5}%
    \fi
  \else
    \UnderFrametrue%
  \fi
  \begin{center}\GRAPHIC{#7}{#6}{#2}{#3}{\z@}\end{center}%
  \ifUnderFrame
    #4
    \ifx\QCBOptA\empty
      \caption{}%
    \else
      \ifx\QCBOptB\empty
        \caption{\QCBOptA}%
      \else
        \caption[\QCBOptB]{\QCBOptA}%
      \fi
    \fi
    \label{#5}%
  \fi
  \end{figure}%
 }%
\def\makeactives{
  \catcode`\"=\active
  \catcode`\;=\active
  \catcode`\:=\active
  \catcode`\'=\active
  \catcode`\~=\active
}
   \gdef\activesoff{%
      \def"{\string"}
      \def;{\string;}
      \def:{\string:}
      \def'{\string'}
      \def~{\string~}
    }
\def\FRAME#1#2#3#4#5#6#7#8{%
 \bgroup
 \ifnum\draft=\@ne
   \wasdrafttrue
 \else
   \wasdraftfalse%
 \fi
 \def\LaTeXparams{}%
 \dispkind=\z@
 \def\LaTeXparams{}%
 \doFRAMEparams{#1}%
 \ifnum\dispkind=\z@\IFRAME{#2}{#3}{#4}{#7}{#8}{#5}\else
  \ifnum\dispkind=\@ne\DFRAME{#2}{#3}{#7}{#8}{#5}\else
   \ifnum\dispkind=\tw@
    \edef\@tempa{\noexpand\FFRAME{\LaTeXparams}}%
    \@tempa{#2}{#3}{#5}{#6}{#7}{#8}%
    \fi
   \fi
  \fi
  \ifwasdraft\draft=1\else\draft=0\fi{}%
  \egroup
 }%
\def\TEXUX#1{"texux"}
\def\limfunc#1{\mathop{\rm #1}}%
\long\def\QQQ#1#2{%
     \long\expandafter\def\csname#1\endcsname{#2}}%
\long\def\QQA#1#2{}%
\def\QTR#1#2{{\csname#1\endcsname #2}}
\def\EXPAND#1[#2]#3{}%
\def\NOEXPAND#1[#2]#3{}%
\def\LaTeXparent#1{}%
\def\ChildStyles#1{}%
\def\ChildDefaults#1{}%
\def\QTagDef#1#2#3{}%
  \providecommand{\UNICODE}[2][]{}
\def\QQfnmark#1{\footnotemark}
 \def\abstract{%
  \if@twocolumn
   \section*{Abstract (Not appropriate in this style!)}%
   \else \small 
   \begin{center}{\bf Abstract\vspace{-.5em}\vspace{\z@}}\end{center}%
   \quotation 
   \fi
  }%
   \def\registered{\relax\ifmmode{}\r@gistered
                    \else$\m@th\r@gistered$\fi}%
 \def\r@gistered{^{\ooalign
  {\hfil\raise.07ex\hbox{$\scriptstyle\rm\text{R}$}\hfil\crcr
  \mathhexbox20D}}}}{}%
\newdimen\theight
\def\Column{%
 \vadjust{\setbox\z@=\hbox{\scriptsize\quad\quad tcol}%
  \theight=\ht\z@\advance\theight by \dp\z@\advance\theight by \lineskip
  \kern -\theight \vbox to \theight{%
   \rightline{\rlap{\box\z@}}%
   \vss
   }%
  }%
 }%
\def\qed{%
 \ifhmode\unskip\nobreak\fi\ifmmode\ifinner\else\hskip5\p@\fi\fi
 \hbox{\hskip5\p@\vrule width4\p@ height6\p@ depth1.5\p@\hskip\p@}%
 }%
\def\miss{\hbox{\vrule height2\p@ width 2\p@ depth\z@}}%
\def\tcol#1{{\baselineskip=6\p@ \vcenter{#1}} \Column}  %
\def\newfmtname{LaTeX2e}
  \DeclareOldFontCommand{\rm}{\normalfont\rmfamily}{\mathrm}
  \DeclareOldFontCommand{\sf}{\normalfont\sffamily}{\mathsf}
  \DeclareOldFontCommand{\tt}{\normalfont\ttfamily}{\mathtt}
  \DeclareOldFontCommand{\bf}{\normalfont\bfseries}{\mathbf}
  \DeclareOldFontCommand{\it}{\normalfont\itshape}{\mathit}
  \DeclareOldFontCommand{\sl}{\normalfont\slshape}{\@nomath\sl}
  \DeclareOldFontCommand{\sc}{\normalfont\scshape}{\@nomath\sc}
\def\alpha{{\Greekmath 010B}}%
\def\beta{{\Greekmath 010C}}%
\def\gamma{{\Greekmath 010D}}%
\def\delta{{\Greekmath 010E}}%
\def\epsilon{{\Greekmath 010F}}%
\def\zeta{{\Greekmath 0110}}%
\def\eta{{\Greekmath 0111}}%
\def\theta{{\Greekmath 0112}}%
\def\iota{{\Greekmath 0113}}%
\def\kappa{{\Greekmath 0114}}%
\def\lambda{{\Greekmath 0115}}%
\def\mu{{\Greekmath 0116}}%
\def\nu{{\Greekmath 0117}}%
\def\xi{{\Greekmath 0118}}%
\def\pi{{\Greekmath 0119}}%
\def\rho{{\Greekmath 011A}}%
\def\sigma{{\Greekmath 011B}}%
\def\tau{{\Greekmath 011C}}%
\def\upsilon{{\Greekmath 011D}}%
\def\phi{{\Greekmath 011E}}%
\def\chi{{\Greekmath 011F}}%
\def\psi{{\Greekmath 0120}}%
\def\omega{{\Greekmath 0121}}%
\def\varepsilon{{\Greekmath 0122}}%
\def\vartheta{{\Greekmath 0123}}%
\def\varpi{{\Greekmath 0124}}%
\def\varrho{{\Greekmath 0125}}%
\def\varsigma{{\Greekmath 0126}}%
\def\varphi{{\Greekmath 0127}}%
\def\nabla{{\Greekmath 0272}}
\def\FindBoldGroup{%
   {\setbox0=\hbox{$\mathbf{x\global\edef\theboldgroup{\the\mathgroup}}$}}%
}
\def\Greekmath#1#2#3#4{%
    \if@compatibility
        \ifnum\mathgroup=\symbold
           \mathchoice{\mbox{\boldmath$\displaystyle\mathchar"#1#2#3#4$}}%
                      {\mbox{\boldmath$\textstyle\mathchar"#1#2#3#4$}}%
                      {\mbox{\boldmath$\scriptstyle\mathchar"#1#2#3#4$}}%
                      {\mbox{\boldmath$\scriptscriptstyle\mathchar"#1#2#3#4$}}%
        \else
           \mathchar"#1#2#3#4%
        \fi 
    \else 
        \FindBoldGroup
        \ifnum\mathgroup=\theboldgroup 
           \mathchoice{\mbox{\boldmath$\displaystyle\mathchar"#1#2#3#4$}}%
                      {\mbox{\boldmath$\textstyle\mathchar"#1#2#3#4$}}%
                      {\mbox{\boldmath$\scriptstyle\mathchar"#1#2#3#4$}}%
                      {\mbox{\boldmath$\scriptscriptstyle\mathchar"#1#2#3#4$}}%
        \else
           \mathchar"#1#2#3#4%
        \fi     	    
	  \fi}
\newif\ifGreekBold  \GreekBoldfalse
\let\SAVEPBF=\pbf
\def\pbf{\GreekBoldtrue\SAVEPBF}%
  \newcounter{equationnumber}  
  \def\mathletters{%
     \addtocounter{equation}{1}
     \edef\@currentlabel{\theequation}%
     \setcounter{equationnumber}{\c@equation}
     \setcounter{equation}{0}%
     \edef\theequation{\@currentlabel\noexpand\alph{equation}}%
  }
    \def\BibTeX{{\rm B\kern-.05em{\sc i\kern-.025em b}\kern-.08em
                 T\kern-.1667em\lower.7ex\hbox{E}\kern-.125emX}}}{}%
\def\AmS{{\protect\usefont{OMS}{cmsy}{m}{n}%
                A\kern-.1667em\lower.5ex\hbox{M}\kern-.125emS}}}{}%
\def\@@eqncr{\let\@tempa\relax
    \ifcase\@eqcnt \def\@tempa{& & &}\or \def\@tempa{& &}%
      \else \def\@tempa{&}\fi
     \@tempa
     \if@eqnsw
        \iftag@
           \@taggnum
        \else
           \@eqnnum\stepcounter{equation}%
        \fi
     \fi
     \global\tag@false
     \global\@eqnswtrue
     \global\@eqcnt\z@\cr}
\def\TCItag{\@ifnextchar*{\@TCItagstar}{\@TCItag}}
\def\@TCItag#1{%
    \global\tag@true
    \global\def\@taggnum{(#1)}}
\def\@TCItagstar*#1{%
    \global\tag@true
    \global\def\@taggnum{#1}}
\let\DOTSI\relax
\def\RIfM@{\relax\ifmmode}%
\def\FN@{\futurelet\next}%
\def\iint{\DOTSI\intno@\tw@\FN@\ints@}%
\def\iiint{\DOTSI\intno@\thr@@\FN@\ints@}%
\def\iiiint{\DOTSI\intno@4 \FN@\ints@}%
\def\idotsint{\DOTSI\intno@\z@\FN@\ints@}%
\def\ints@{\findlimits@\ints@@}%
\newif\iflimtoken@
\newif\iflimits@
\def\findlimits@{\limtoken@true\ifx\next\limits\limits@true
 \else\ifx\next\nolimits\limits@false\else
 \limtoken@false\ifx\ilimits@\nolimits\limits@false\else
 \ifinner\limits@false\else\limits@true\fi\fi\fi\fi}%
\def\multint@{\int\ifnum\intno@=\z@\intdots@                          
 \else\intkern@\fi                                                    
 \ifnum\intno@>\tw@\int\intkern@\fi                                   
 \ifnum\intno@>\thr@@\int\intkern@\fi                                 
 \int}
\def\multintlimits@{\intop\ifnum\intno@=\z@\intdots@\else\intkern@\fi
 \ifnum\intno@>\tw@\intop\intkern@\fi
 \ifnum\intno@>\thr@@\intop\intkern@\fi\intop}%
\def\intic@{%
    \mathchoice{\hskip.5em}{\hskip.4em}{\hskip.4em}{\hskip.4em}}%
\def\negintic@{\mathchoice
 {\hskip-.5em}{\hskip-.4em}{\hskip-.4em}{\hskip-.4em}}%
\def\ints@@{\iflimtoken@                                              
 \def\ints@@@{\iflimits@\negintic@
   \mathop{\intic@\multintlimits@}\limits                             
  \else\multint@\nolimits\fi                                          
  \eat@}
 \else                                                                
 \def\ints@@@{\iflimits@\negintic@
  \mathop{\intic@\multintlimits@}\limits\else
  \multint@\nolimits\fi}\fi\ints@@@}%
\def\intkern@{\mathchoice{\!\!\!}{\!\!}{\!\!}{\!\!}}%
\def\plaincdots@{\mathinner{\cdotp\cdotp\cdotp}}%
\def\intdots@{\mathchoice{\plaincdots@}%
 {{\cdotp}\mkern1.5mu{\cdotp}\mkern1.5mu{\cdotp}}%
 {{\cdotp}\mkern1mu{\cdotp}\mkern1mu{\cdotp}}%
 {{\cdotp}\mkern1mu{\cdotp}\mkern1mu{\cdotp}}}%
\def\RIfM@{\relax\protect\ifmmode}
\def\text{\RIfM@\expandafter\text@\else\expandafter\mbox\fi}
\let\nfss@text\text
\def\text@#1{\mathchoice
   {\textdef@\displaystyle\f@size{#1}}%
   {\textdef@\textstyle\tf@size{\firstchoice@false #1}}%
   {\textdef@\textstyle\sf@size{\firstchoice@false #1}}%
   {\textdef@\textstyle \ssf@size{\firstchoice@false #1}}%
   \glb@settings}
\def\textdef@#1#2#3{\hbox{{%
                    \everymath{#1}%
                    \let\f@size#2\selectfont
                    #3}}}
\newif\iffirstchoice@
\def\Let@{\relax\iffalse{\fi\let\\=\cr\iffalse}\fi}%
\def\vspace@{\def\vspace##1{\crcr\noalign{\vskip##1\relax}}}%
\def\multilimits@{\bgroup\vspace@\Let@
 \baselineskip\fontdimen10 \scriptfont\tw@
 \advance\baselineskip\fontdimen12 \scriptfont\tw@
 \lineskip\thr@@\fontdimen8 \scriptfont\thr@@
 \lineskiplimit\lineskip
 \vbox\bgroup\ialign\bgroup\hfil$\m@th\scriptstyle{##}$\hfil\crcr}%
\def\Sb{_\multilimits@}%
\def\endSb{\crcr\egroup\egroup\egroup}%
\def\Sp{^\multilimits@}%
\newdimen\ex@
\def\rightarrowfill@#1{$#1\m@th\mathord-\mkern-6mu\cleaders
 \hbox{$#1\mkern-2mu\mathord-\mkern-2mu$}\hfill
 \mkern-6mu\mathord\rightarrow$}%
\def\leftarrowfill@#1{$#1\m@th\mathord\leftarrow\mkern-6mu\cleaders
 \hbox{$#1\mkern-2mu\mathord-\mkern-2mu$}\hfill\mkern-6mu\mathord-$}%
\def\leftrightarrowfill@#1{$#1\m@th\mathord\leftarrow
\mkern-6mu\cleaders
 \hbox{$#1\mkern-2mu\mathord-\mkern-2mu$}\hfill
 \mkern-6mu\mathord\rightarrow$}%
\def\overrightarrow{\mathpalette\overrightarrow@}%
\def\overrightarrow@#1#2{\vbox{\ialign{##\crcr\rightarrowfill@#1\crcr
 \noalign{\kern-\ex@\nointerlineskip}$\m@th\hfil#1#2\hfil$\crcr}}}%
\def\overleftarrow{\mathpalette\overleftarrow@}%
\def\overleftarrow@#1#2{\vbox{\ialign{##\crcr\leftarrowfill@#1\crcr
 \noalign{\kern-\ex@\nointerlineskip}$\m@th\hfil#1#2\hfil$\crcr}}}%
\def\overleftrightarrow{\mathpalette\overleftrightarrow@}%
\def\overleftrightarrow@#1#2{\vbox{\ialign{##\crcr
   \leftrightarrowfill@#1\crcr
 \noalign{\kern-\ex@\nointerlineskip}$\m@th\hfil#1#2\hfil$\crcr}}}%
\def\underrightarrow{\mathpalette\underrightarrow@}%
\def\underrightarrow@#1#2{\vtop{\ialign{##\crcr$\m@th\hfil#1#2\hfil
  $\crcr\noalign{\nointerlineskip}\rightarrowfill@#1\crcr}}}%
\def\underleftarrow{\mathpalette\underleftarrow@}%
\def\underleftarrow@#1#2{\vtop{\ialign{##\crcr$\m@th\hfil#1#2\hfil
  $\crcr\noalign{\nointerlineskip}\leftarrowfill@#1\crcr}}}%
\def\underleftrightarrow{\mathpalette\underleftrightarrow@}%
\def\underleftrightarrow@#1#2{\vtop{\ialign{##\crcr$\m@th
  \hfil#1#2\hfil$\crcr
 \noalign{\nointerlineskip}\leftrightarrowfill@#1\crcr}}}%
\def\qopnamewl@#1{\mathop{\operator@font#1}\nlimits@}
\let\nlimits@\displaylimits
\def\setboxz@h{\setbox\z@\hbox}
\def\varlim@#1#2{\mathop{\vtop{\ialign{##\crcr
 \hfil$#1\m@th\operator@font lim$\hfil\crcr
 \noalign{\nointerlineskip}#2#1\crcr
 \noalign{\nointerlineskip\kern-\ex@}\crcr}}}}
 \def\rightarrowfill@#1{\m@th\setboxz@h{$#1-$}\ht\z@\z@
  $#1\copy\z@\mkern-6mu\cleaders
  \hbox{$#1\mkern-2mu\box\z@\mkern-2mu$}\hfill
  \mkern-6mu\mathord\rightarrow$}
\def\leftarrowfill@#1{\m@th\setboxz@h{$#1-$}\ht\z@\z@
  $#1\mathord\leftarrow\mkern-6mu\cleaders
  \hbox{$#1\mkern-2mu\copy\z@\mkern-2mu$}\hfill
  \mkern-6mu\box\z@$}
\def\projlim{\qopnamewl@{proj\,lim}}
\def\injlim{\qopnamewl@{inj\,lim}}
\def\varinjlim{\mathpalette\varlim@\rightarrowfill@}
\def\varprojlim{\mathpalette\varlim@\leftarrowfill@}
\def\varliminf{\mathpalette\varliminf@{}}
\def\varliminf@#1{\mathop{\underline{\vrule\@depth.2\ex@\@width\z@
   \hbox{$#1\m@th\operator@font lim$}}}}
\def\varlimsup{\mathpalette\varlimsup@{}}
\def\varlimsup@#1{\mathop{\overline
  {\hbox{$#1\m@th\operator@font lim$}}}}
\def\align{\@verbatim \frenchspacing\@vobeyspaces \@alignverbatim
You are using the "align" environment in a style in which it is not defined.}
\let\csname endalign*\endcsname =\endtrivlist
\def\alignat{\@verbatim \frenchspacing\@vobeyspaces \@alignatverbatim
You are using the "alignat" environment in a style in which it is not defined.}
\let\csname endalignat*\endcsname =\endtrivlist
\def\xalignat{\@verbatim \frenchspacing\@vobeyspaces \@xalignatverbatim
You are using the "xalignat" environment in a style in which it is not defined.}
\let\csname endxalignat*\endcsname =\endtrivlist
\def\gather{\@verbatim \frenchspacing\@vobeyspaces \@gatherverbatim
You are using the "gather" environment in a style in which it is not defined.}
\let\csname endgather*\endcsname =\endtrivlist
\def\multiline{\@verbatim \frenchspacing\@vobeyspaces \@multilineverbatim
You are using the "multiline" environment in a style in which it is not defined.}
\let\csname endmultiline*\endcsname =\endtrivlist
\def\arrax{\@verbatim \frenchspacing\@vobeyspaces \@arraxverbatim
You are using a type of "array" construct that is only allowed in AmS-LaTeX.}
\def\tabulax{\@verbatim \frenchspacing\@vobeyspaces \@tabulaxverbatim
You are using a type of "tabular" construct that is only allowed in AmS-LaTeX.}
\let\csname endarrax*\endcsname =\endtrivlist
\let\csname endtabulax*\endcsname =\endtrivlist
 \def\endequation{%
     \ifmmode\ifinner 
      \iftag@
        \addtocounter{equation}{-1} 
        $\hfil
           \displaywidth\linewidth\@taggnum\egroup \endtrivlist
        \global\tag@false
        \global\@ignoretrue   
      \else
        $\hfil
           \displaywidth\linewidth\@eqnnum\egroup \endtrivlist
        \global\tag@false
        \global\@ignoretrue 
      \fi
     \else   
      \iftag@
        \addtocounter{equation}{-1} 
        \eqno \hbox{\@taggnum}
        \global\tag@false%
        $$\global\@ignoretrue
      \else
        \eqno \hbox{\@eqnnum}
        $$\global\@ignoretrue
      \fi
     \fi\fi
 } 
 \newif\iftag@ \tag@false
 \def\TCItag{\@ifnextchar*{\@TCItagstar}{\@TCItag}}
 \def\@TCItag#1{%
     \global\tag@true
     \global\def\@taggnum{(#1)}}
 \def\@TCItagstar*#1{%
     \global\tag@true
     \global\def\@taggnum{#1}}
     \def\tag{\@ifnextchar*{\@tagstar}{\@tag}}
     \def\@tag#1{%
         \global\tag@true
         \global\def\@taggnum{(#1)}}
     \def\@tagstar*#1{%
         \global\tag@true
         \global\def\@taggnum{#1}}
\newcommand{\argmax}{\operatorname*{argmax}}
\begin{document}

\title{Posterior Average Effects}
\author{St\'ephane
Bonhomme\thanks{%
University of Chicago. Email: sbonhomme@uchicago.edu} \and Martin Weidner\thanks{University of Oxford. Email: martin.weidner@economics.ox.ac.uk}}

\date{$\quad $\\ REVISED DRAFT:  September 2021 }
\vskip 3cm\maketitle

\begin{abstract}
\noindent    
Economists are often interested in estimating averages with respect to distributions of unobservables, such as moments of individual fixed-effects, or average partial effects in discrete choice models. For such quantities, we propose and study posterior average effects (PAE), where the average is computed {conditional} on the sample, in the spirit of empirical Bayes and shrinkage methods. While the usefulness of shrinkage for prediction is well-understood, a justification of posterior conditioning to estimate population averages is currently lacking. We show that PAE have minimum worst-case specification error under various forms of misspecification of the parametric distribution of unobservables. In addition, we introduce a measure of informativeness of the posterior conditioning, which quantifies the worst-case specification error of PAE relative to parametric model-based estimators.  As illustrations, we report PAE estimates of distributions of neighborhood effects in the US, and of permanent and transitory components in a model of income dynamics.

\bigskip

\noindent \textsc{JEL codes:}\textbf{\ } C13, C23.

\noindent \textsc{Keywords:}\textbf{\ } model misspecification, robustness, sensitivity analysis, empirical Bayes, posterior conditioning, latent variables.
\end{abstract}

\baselineskip21pt

\bigskip

\bigskip

\setcounter{page}{0}\thispagestyle{empty}

\newpage

\section{Introduction\label{Intro_sec}}

In many settings, applied researchers wish to estimate population averages with respect to a distribution of unobservables. This includes moments of individual fixed-effects in panel data, and average partial effects in discrete choice models, which are expectations with respect to some distribution of shocks or heterogeneity. The standard approach in applied work is to assume a parametric form for the distribution of unobservables, and to compute the average effect under that assumption. For example, in binary choice, researchers often assume normality of the error term, and compute average partial effects under normality. This ``model-based'' estimation of average effects is justified under the assumption that the parametric model is {correctly specified}. 

In this paper, we consider a different approach, where the average effect is computed {conditional on the observation sample}. We refer to such estimators as ``posterior average effects'' (PAE). Posterior averaging is appealing for prediction purposes, and it plays a central role in Bayesian and empirical Bayes approaches (e.g., Berger, 1980, Morris, 1983). Here we focus instead on the estimation of population expectations. Our goal is twofold: to propose a novel class of estimators, and to provide a frequentist framework to understand when and why posterior conditioning may be useful in estimation. Our main result will show that PAE have robustness properties when the parametric model is {misspecified}.  

PAE are closely related to empirical Bayes (EB) estimators, which are increasingly popular in applied economics. Consider a fixed-effects model of teacher quality, which is our main example. When the number of observations per teacher is small, the dispersion of teacher fixed-effects is likely to overstate that of true teacher quality, since teacher effects are estimated with noise. An alternative approach is to postulate a prior distribution for teacher quality --- typically, a normal --- and report posterior estimates, holding fixed the values of the mean and variance parameters. The hope is that such EB estimates, which are shrunk toward the prior, are less affected by noise than the teacher fixed-effects (e.g., Kane and Staiger, 2008, Chetty \textit{et al.}, 2014, Angrist \textit{et al.}, 2017). However, while EB estimates are well-justified predictors of the quality of individual teachers, it is not obvious how to aggregate them across teachers when the goal is to estimate a population average such as a moment or a distribution function.   

As an example, suppose we wish to estimate the distribution function of teacher quality evaluated at a point. Since this quantity is an average of indicator functions, the PAE is simply an average of posterior means --- that is, of EB estimates --- {of the indicator functions}. This estimator is available in closed form. However, the PAE differs from the empirical distribution of the EB estimates of teacher effects. In particular, while the  variance of EB estimates is too small relative to that of latent teacher quality, the PAE has the correct variance. Related applications of PAE include settings involving neighborhood/place effects (Chetty and Hendren, 2017, Finkelstein \textit{et al.}, 2017) or hospital quality (Hull, 2018).

Importantly, although posterior averages have desirable properties for predicting individual parameters, their usefulness for estimating {population average quantities} is not evident. For example, suppose that teacher quality is normally distributed. In this case, a model-based normal estimator of the distribution of teacher quality is consistent. Moreover, it is asymptotically efficient when means and variances are estimated by maximum likelihood. Hence, in the correctly specified case, there is no reason to deviate from the standard model-based approach and compute posterior estimators. The main insight of this paper is that, under {misspecification} --- e.g., when teacher quality is not normally distributed --- conditioning on the data using PAE can be beneficial.

To study estimators under misspecification, we focus on specification error, which is the population discrepancy between the probability limit of an estimator and the true parameter value. In our main results, we show that PAE have {minimum worst-case specification error}, where the worst case is computed in a nonparametric neighborhood of the reference parametric distribution (e.g., a normal). Specifically, we show that, when neighborhoods are defined in terms of the Pearson chi-squared divergence, PAE have minimum worst-case specification error within a large class of estimators, for any neighborhood size smaller than a threshold value that we characterize. In addition, when broadening the class of neighborhoods to $\phi$-divergences, we show that, while PAE do not have minimum worst-case specification error in general in fixed-size neighborhoods, they achieve minimum worst-case specification error under {local} misspecification, i.e. when the size of the neighborhood tends to zero.

In our examples and illustrations, we find that the information contained in the posterior conditioning is setting-specific. This is intuitive, since although PAE have minimum worst-case specification error under our conditions, the specification error is not zero in general and it varies between applications. PAE tend to behave better when the realizations of outcome variables (such as test scores) are more informative about the values of the unobservables (such as the quality of a teacher). Consistently with this intuition, our local result suggests quantifying the ``informativeness'' of the posterior conditioning using an easily computable $R^2$ coefficient.

While our theoretical results focus on population specification error, in practice PAE are also affected by sampling error, due to the fact that the sample size --- e.g., the number of teachers --- is not infinite. A common approach to account for both sampling variability and specification error is to focus on mean squared error. In general, PAE do not have minimum mean squared error: indeed, in finite samples, model-based estimators can have smaller mean squared error than PAE. In Bonhomme and Weidner (2018), we show how to construct estimators that minimize mean squared error under local asymptotic misspecification. However, such estimators depend on the neighborhood size. In contrast, PAE do {not} require taking a stand on the degree of misspecification through the size of the neighborhood, and they are simple to implement and do not depend on tuning parameters. To complement the theory, we report the results of a Monte Carlo simulation, where we compare the performance of the PAE to those of a model-based estimator and a nonparametric deconvolution-based estimator. We find that, while the model-based estimator tends to perform best under correct specification, the performance of the PAE appears less sensitive to misspecification than those of the model-based and nonparametric estimators.

To illustrate the scope of PAE for applications, we then consider two empirical settings. In the first one, we study the estimation of neighborhood/place effects in the US. Chetty and Hendren (2017) report estimates of the variance of neighborhood effects, as well as EB estimates of those effects. Our goal is to estimate the distribution of effects across neighborhoods. We find that, when using a normal prior as in Chetty and Hendren (2017), our posterior estimator of the distribution function of neighborhood effects across commuting zones is not normal. However, we also show through simulations and computation of our posterior informativeness measure that the signal-to-noise ratio in the data is not high enough to be confident about the exact shape of the distribution. Hence, in this setting, PAE inform our knowledge of the distribution of neighborhood effects, and motivate future analyses using more flexible model specifications and individual-level data. 

In the second empirical illustration, our goal is to estimate the distributions of latent components in a permanent-transitory model of income dynamics (e.g., Hall and Mishkin, 1982, Blundell \textit{et al.}, 2008), where log-income is the sum of a random-walk component and a component that is independent over time. Researchers often estimate the covariance structure of the latent components in a first step. Then, in order to document distributions or to use the income process in a consumption-saving model, they often assume Gaussianity. However, there is increasing evidence that income components are not Gaussian (e.g., Geweke and Keane, 2000, Hirano, 2002, Bonhomme and Robin, 2010, Guvenen \textit{et al.}, 2016). We estimate posterior distribution functions of permanent and transitory income components using recent waves from the Panel Study of Income Dynamics (PSID). Our PAE estimates suggest some departure from Gaussianity, especially for the transitory income component.

We analyze several extensions. First, we describe the form of PAE in several models, including binary choice and censored regression. Second, we discuss how to construct confidence intervals and specification tests based on PAE. Lastly, we revisit the question of optimality of EB estimates for {predicting} individual parameters. By extending our misspecification analysis from worst-case specification error of sample averages to worst-case mean squared prediction error, we show that EB estimators remain optimal, up to small-order terms, under local deviations from normality.

\paragraph{Related literature and outline.}

PAE are closely related to parametric EB estimators (Efron and Morris, 1973, Morris, 1983). For recent econometric applications of shrinkage methods (James and Stein, 1961, Efron, 2012), see Hansen (2016), Fessler and Kasy (2018), and Abadie and Kasy (2018). Recent contributions to nonparametric EB methods are Koenker and Mizera (2014) and Ignatiadis and Wager (2019). 

Our analysis is also related to deconvolution and other nonparametric approaches. However, in our framework we allow for forms of misspecification under which the quantity of interest is not consistently estimable, and we search for estimators that have the smallest specification error.

In panel data settings, Arellano and Bonhomme (2009) study the asymptotic properties of random-effects estimators of averages of functions of covariates and individual effects. They show that, when the distribution of individual effects is misspecified whereas the other features of the model are correctly specified, PAE are consistent as $n$ and $T$ tend to infinity. By contrast, in our setup, only $n$ tends to infinity, and misspecification may affect the entire joint distribution of unobservables. 

Our analysis also connects to the literature on robustness to model misspecification (e.g., Huber and Ronchetti, 2009, Kitamura \textit{et al.}, 2013, Andrews \textit{et al.}, 2017, 2020, Armstrong and Koles\'ar, 2018, Bonhomme and Weidner, 2018, Christensen and Connault, 2019). Here our aim is to propose and justify a class of simple, practical  estimators.

The plan of the paper is as follows. In Section \ref{Sec_mainex} we motivate the analysis by considering a fixed-effects model of teacher quality. In Section \ref{Sec_framework} we present our framework and derive our main theoretical results. In Section \ref{Sec_illustration} we illustrate the use of PAE in two empirical settings. In Section \ref{sec_comp} we describe several extensions. Finally, we conclude in Section \ref{Sec_conclusion}. Replication codes are available as \href{https://sites.google.com/site/stephanebonhommeresearch/}{\color{blue}{online material}}.

\section{Motivating example: a fixed-effects model\label{Sec_mainex}}

To motivate the analysis, we start by considering the following model
\begin{equation}Y_{ij}=\alpha_i+\varepsilon_{ij},\qquad \quad i=1,...,n,\qquad j=1,...,J.\label{FE_mod}\end{equation}
To fix ideas, we will think of $Y_{ij}$ as an average test score of teacher $i$ in classroom $j$, $\alpha_i$ as the quality of teacher $i$, and $\varepsilon_{ij}$ as a classroom-specific shock. There are $n$ teachers and $J$ observations per teacher. For simplicity, we abstract away from covariates (such as students' past test scores), but those will be present in the framework we will introduce in the next section. Although here we focus on teacher effects, this model is of interest in other settings, such as the study of neighborhood effects, school effectiveness, or hospital quality, for example.

Suppose we wish to estimate a feature of the distribution of teacher quality $\alpha$. As an example, here we consider the distribution function of $\alpha$ at a particular point $a$,
$$F_{\alpha}(a)=\mathbb{E}\left[ \boldsymbol{1}\{\alpha\leq a\}\right],$$
which is the percentage of teachers whose quality is below $a$. A first estimator is the empirical distribution of the fixed-effects estimates $\widehat{\alpha}_i=\overline{Y}_i=\frac{1}{J}\sum_{j=1}^JY_{ij}$, for all teachers $i=1,...,n$; that is,
\begin{equation}\widehat{F}^{\rm FE}_{\alpha}(a)=\frac{1}{n}\sum_{i=1}^n \boldsymbol{1}\{\overline{Y}_i\leq a\},\label{eq_cdf_FE}
\end{equation}
where FE stands for ``fixed-effects''. An obvious issue with this estimator is that $\overline{Y}_i=\alpha_i+\overline{\varepsilon}_i$ is a noisy estimate of $\alpha_{i}$, where $\overline{\varepsilon}_i=\frac{1}{J}\sum_{j=1}^J\varepsilon_{ij}$. Indeed, due to the presence of noise, for fixed $J$ and $n$ tends to infinity the distribution $\widehat{F}^{\rm FE}_{\alpha}$ tends to be {too dispersed} relative to $F_{\alpha}$ (although one can show that $\widehat{F}^{\rm FE}_{\alpha}(a)$ is consistent for $F_{\alpha}(a)$ as $J$ tends to infinity jointly with $n$ under mild conditions, see Jochmans and Weidner, 2018).

A different strategy is to model the joint distribution of $\alpha,\varepsilon_{1},...,\varepsilon_{J}$. A simple specification is a multivariate normal distribution with means $\mu_{\alpha}$ and $\mu_{\varepsilon}=0$, and variances $s_{\alpha}^2$ and $s_{\varepsilon}^2$. This specification can easily be made more flexible by allowing for different $s_{\varepsilon_j}^2$'s across $j$, for correlation between the different $\varepsilon_j$'s, or for means and variances being functions of covariates, for example. Under the assumption that all components are uncorrelated, $\mu_\alpha$, $s_{\alpha}^2$ and $s^2_{\varepsilon}$ can be consistently estimated for fixed $J$ as $n$ tends to infinity, using quasi-maximum likelihood or minimum distance based on mean and covariance restrictions.

Given estimates $\widehat{\mu}_{\alpha}$, $\widehat{s}_{\alpha}^2$, $\widehat{s}_{\varepsilon}^2$, we can compute empirical Bayes (EB) estimates (Morris, 1983) of the $\alpha_i$ as 
\begin{equation}\mathbb{E}\, [\alpha\, |\, Y=Y_i]=\widehat{\mu}_{\alpha}+\widehat{\rho}(\overline{Y}_i-\widehat{\mu}_{\alpha}),\quad i=1,...,n,\label{eq_EB}\end{equation}
where the expectation is taken with respect to the posterior distribution of $\alpha$ given $Y=Y_i$ for $\widehat{\mu}_{\alpha}$, $\widehat{s}_{\alpha}^2$, $\widehat{s}_{\varepsilon}^2$ fixed, and $\widehat{\rho}=\frac{\widehat{s}_{\alpha}^2}{\widehat{s}_{\alpha}^2+\widehat{s}_{\varepsilon}^2/J}$ is a shrinkage factor.  Here, $Y_i$ are vectors containing all $Y_{ij}$, $j=1,...,J$. The EB estimates in (\ref{eq_EB}) are well-justified as predictors of the $\alpha_i$, since (when treating $\widehat{\mu}_{\alpha}$, $\widehat{s}_{\alpha}^2$, $\widehat{s}_{\varepsilon}^2$ as fixed) $\widehat{\mu}_{\alpha}+\widehat{\rho}(\overline{Y}_i-\widehat{\mu}_{\alpha})$ is the minimum mean squared error predictor of $\alpha_i$ under normality.

Given their rationale for prediction purposes, it is appealing to try and aggregate the EB estimates in order to estimate our target quantity $F_{\alpha}(a)$. A possible estimator is 
\begin{equation}\widehat{F}^{\rm PM}_{\alpha}(a)=\frac{1}{n}\sum_{i=1}^n \boldsymbol{1}\left\{\widehat{\mu}_{\alpha}+\widehat{\rho}(\overline{Y}_i-\widehat{\mu}_{\alpha})\leq a\right\},\label{eq_cdf_PM}\end{equation}
where PM stands for ``posterior means''. For fixed $J$ as $n$ tends to infinity, the EB estimates tend to be {less dispersed} than the true $\alpha_i$, and $\widehat{F}^{\rm PM}_{\alpha}(a)$ is inconsistent in general. Indeed, while in large samples the variance of the fixed-effects estimates is $\rho^{-1}s_{\alpha}^2>s_{\alpha}^2$, the variance of the EB estimates is $\rho s_{\alpha}^2<s_{\alpha}^2$, where $\rho=\frac{s_{\alpha}^2}{s_{\alpha}^2+s_{\varepsilon}^2/J}$.

Instead of computing the distribution of EB estimates as in (\ref{eq_cdf_PM}), a related idea is to compute the posterior distribution estimator
$$ \widehat{F}^{\rm P}_{\alpha}(a)=\frac{1}{n}\sum_{i=1}^n \mathbb{E}\,\left[ \boldsymbol{1}\{\alpha\leq a\}\,|\, Y=Y_i\right],$$
where P stands for ``posterior''. Using the normality assumption, we obtain
\begin{equation}\widehat{F}^{\rm P}_{\alpha}(a)=\frac{1}{n}\sum_{i=1}^n \Phi\left(\frac{a-\widehat{\mu}_{\alpha}-\widehat{\rho}(\overline{Y}_i-\widehat{\mu}_{\alpha})}{\widehat{s}_{\alpha}\sqrt{1-\widehat{\rho}}}\right),\label{posterior_FE}\end{equation}
where $\Phi$ denotes the distribution function of the standard normal. $\widehat{F}^{\rm P}_{\alpha}(a)$ is an example of a {posterior average effect} (PAE). One can check that it is consistent for fixed $J$ as $n$ tends to infinity, when the distribution of $\alpha,\varepsilon_{1},...,\varepsilon_{J}$ is normal. Under non-normality, $\widehat{F}^{\rm P}_{\alpha}(a)$ is generally inconsistent for fixed $J$ as $n$ tends to infinity. Moreover, the mean and variance of $\widehat{F}^{\rm P}_{\alpha}$ are $(1-\widehat{\rho})\widehat{\mu}_{\alpha}+\widehat{\rho}\frac{1}{n}\sum_{i=1}^n\overline{Y}_i$ and $(1-\widehat{\rho})\widehat{s}_{\alpha}^2+\widehat{\rho}^2\left[\frac{1}{n}\sum_{i=1}^n\overline{Y}_i^2-(\frac{1}{n}\sum_{i=1}^n\overline{Y}_i)^2\right]$, respectively, which are consistent for $\mu_{\alpha}$ and $s_{\alpha}^2$ for fixed $J$ as $n$ tends to infinity.

The last estimator we consider here is directly based on the normal specification for $\alpha$,
\begin{equation} \widehat{F}^{\rm M}_{\alpha}(a)=\Phi\left(\frac{a-\widehat{\mu}_{\alpha}}{\widehat{s}_{\alpha}}\right),\label{model_based_FE}\end{equation}
where M stands for ``model''. This estimator enjoys attractive properties when the distribution of $\alpha,\varepsilon_{1},...,\varepsilon_{J}$ is indeed normal. In this case, $\widehat{F}^{\rm M}_{\alpha}(a)$ is consistent for fixed $J$ as $n$ tends to infinity, and it is efficient when $\widehat{\mu}_{\alpha}$ and $\widehat{s}_{\alpha}^2$ are maximum likelihood estimates. Moreover, the mean and variance of $\widehat{F}^{\rm M}_{\alpha}$ are $\widehat{\mu}_{\alpha}$ and $\widehat{s}_{\alpha}^2$, which are consistent irrespective of normality. However, when $\alpha,\varepsilon_{1},...,\varepsilon_{J}$ is {not} normally distributed, $\widehat{F}^{\rm M}_{\alpha}(a)$ is generally inconsistent for fixed $J$ as $n$ tends to infinity. Moreover, $\widehat{F}^{\rm M}_{\alpha}(a)$ only depends on the data through the mean $\widehat{\mu}_{\alpha}$ and the variance $\widehat{s}^2_{\alpha}$. In particular, $\widehat{F}^{\rm M}_{\alpha}$ is always normal, even when the data show clear evidence of non-normality.

Which one of these estimators should one use? The answer is not obvious, since they are all inconsistent as $n$ tends to infinity for fixed $J$ in general. In a framework that allows for misspecification of the normal distribution of $\alpha,\varepsilon_{1},...,\varepsilon_{J}$, we will show that the PAE $\widehat{F}^{\rm P}_{\alpha}(a)$ has minimum worst-case specification error in certain neighborhoods around the normal reference distribution. To our knowledge, unlike the other three estimators above, posterior estimators of distributions are novel to practitioners. They are easy to implement, and do not depend on additional tuning parameters. Our characterization provides a rationale for reporting them in applications, alongside other parametric and semi-parametric estimators.

Note that one may wish to relax normality by making the specification of $\alpha$, and possibly $\varepsilon_j$, more flexible. Deconvolution and nonparametric maximum likelihood estimators are often used for this purpose (e.g., Delaigle \textit{et al.}, 2008, Bonhomme and Robin, 2010, Koenker and Mizera, 2014). While these estimators may be consistent even when $\alpha$ is not normal, consistency relies on additional restrictions on the model. For example, the assumptions in Kotlarski (1967) require that $\alpha$, $\varepsilon_1$, \ldots, 
$\varepsilon_J$ be mutually {independent}. By contrast, we do {not} impose any such additional conditions in our framework. In Section \ref{Sec_framework}, we will show that asymptotically linear estimators have larger specification error than PAE under the form of misspecification that we consider.

To illustrate that an independence assumption among $\alpha$, $\varepsilon_1$, \ldots, 
$\varepsilon_J$ can be restrictive, consider a situation where  the researcher is concerned that the variance of $\varepsilon_j$ depends on $\alpha$. For instance, the variance of classroom-level shocks may depend on teacher quality. The presence of such conditional heteroskedasticity would invalidate conventional nonparametric deconvolution estimators. By contrast, we will show that $\widehat{F}^{\rm P}_{\alpha}(a)$ has minimum specification error in neighborhoods of distributions that allow for conditional heteroskedasticity. In Section \ref{Sec_illustration} and the appendix, we will compare the finite-sample behavior of the parametric model-based estimator, the PAE, and a nonparametric deconvolution estimator, in data simulated from various specifications of model (\ref{FE_mod}). 

In model (\ref{FE_mod}), the researcher may be interested in estimating other quantities. As an example, consider the coefficient in the population regression of teacher quality $\alpha$ on a vector of covariates $W$; 
that is,
\begin{equation}\overline{\delta}=\left(\mathbb{E}[WW']\right)^{-1}\mathbb{E}[W\alpha].\label{eq_proj_coeff}\end{equation}
In applications, it is common to regress fixed-effects estimates on covariates to help interpret them (as in Dobbie and Fryer, 2013, among many others), and to compute
\begin{equation}	\widehat{\delta}^{\rm FE}=\left(\sum_{i=1}^nW_iW_i'\right)^{-1}\sum_{i=1}^n W_i\overline{Y}_i.\label{eq_proj_coeff_est_FE}\end{equation}
Alternatively, one may regress the EB estimates of $\alpha_i$, as given by (\ref{eq_EB}), on covariates (as in Angrist \textit{et al.}, 2017, and Hull, 2018, for example), and compute
\begin{equation}	\widehat{\delta}^{\rm P}=\left(\sum_{i=1}^nW_iW_i'\right)^{-1}\sum_{i=1}^n W_i\left(\widehat{\mu}_{\alpha}+\widehat{\rho}(\overline{Y}_i-\widehat{\mu}_{\alpha})\right),\label{eq_proj_coeff_est}\end{equation}
which is a PAE based on a normal reference specification for $\alpha$. We will see that, in our framework, the rationale for reporting $\widehat{\delta}^{\rm P}$ or $\widehat{\delta}^{\rm FE}$ depends on the form of misspecification that the researcher is concerned about.

The framework we describe next applies to the estimation of different quantities in a variety of settings. In Section \ref{Sec_illustration} we apply PAE to model (\ref{FE_mod}) and estimate the distribution of neighborhood/place effects in the US (Chetty and Hendren, 2017). In addition, we show that the permanent-transitory model of income dynamics (e.g., Hall and Mishkin, 1982) has a structure similar to model (\ref{FE_mod}), and we report PAE estimates in this context. Lastly, in other models --- such as static or dynamic discrete choice models and models with censored outcomes --- our results motivate the use of PAE as complements to other estimators that researchers commonly report, and we provide examples in Section \ref{sec_comp} and analyze them in the appendix.

\section{Framework and main results\label{Sec_framework}}

In this section we describe our framework to study PAE, and present our main results.

\subsection{Model-based estimators and PAE}

We consider the following class of models,
\begin{equation}
Y_i=g_\beta (U_i,X_i),\label{eq_model}
\end{equation}
where outcomes $Y_i$ and covariates $X_i$ are observed by the researcher, and $U_i$ are unobserved. The function $g_{\beta}$ is known up to the finite-dimensional parameter $\beta$. Our aim is to estimate an average effect of the form
\begin{equation}
\overline{\delta}=\mathbb{E}_{f_0} \left[\delta_{\beta} (U,X)\right],\label{eq_average}
\end{equation}
where $\delta_{\beta}$ is scalar, and known given $\beta$. Here $f_0$ denotes the true density of $U\,|\,X$. The expectation is taken with respect to the product $f_0 f_X$, where $f_X$ is the marginal density of $X$. For conciseness we leave the dependence on $f_X$ implicit. While we focus on a scalar $\delta_{\beta}$, our results continue to hold in the vector-valued case, as we show at the end of this section. In Appendix \ref{App_Ext}, we discuss how to estimate quantities that depend on $f_0$ nonlinearly.

While the researcher does not know the true $f_0$, she has a reference parametric density $f_{\sigma}$ for $U\,|\, X$, which depends on a finite-dimensional parameter $\sigma$. We will allow $f_{\sigma}$ to be misspecified, in the sense that $f_0$ may not belong to $\{f_{\sigma}\}$. However, we will always assume that $g_{\beta}$ is correctly specified. In other words, misspecification will only affect the distribution of $U$ and its dependence on $X$, not the structural link between $(U,X)$ and outcomes.

To estimate $\overline{\delta}$ in (\ref{eq_average}), we assume that the researcher has an estimator $\widehat{\beta}$ that remains consistent for $\beta$ under misspecification of $f_{\sigma}$. More precisely, we will only consider potential true densities $f_0$ such that $\widehat{\beta}$ tends in probability to the true value $\beta$ under $f_0$. For example, in the fixed-effects model (\ref{FE_mod}), consistent estimates of means and variances can be obtained in the absence of normality.
	
To map model (\ref{FE_mod}) to the general notation of this section, note that in this case there are no covariates $X$, and the vector of unobservables $U$ is $$U=\left(\frac{\alpha-\mu_{\alpha}}{s_{\alpha}},\frac{\varepsilon_1}{s_{\varepsilon}},...,\frac{\varepsilon_J}{s_{\varepsilon}}\right)'.$$ 
The vector $\beta$ is $\beta=(\mu_{\alpha},s_{\alpha}^2,s_{\varepsilon}^2)'$. The reference distribution for $U$ is a standard multivariate normal, so the reference density $f_{\sigma}$ is known in this case --- in other words, the parameter $\sigma$ in $f_{\sigma}$ can be omitted. We assume that the researcher has computed an estimator $\widehat{\beta}$, for example by quasi-maximum likelihood or minimum distance, which remains consistent for $\beta$ when $U$ is not normally distributed. 
	
In certain applications, the reference density depends on some parameters $\sigma$ that cannot be consistently estimated absent parametric assumptions. In Appendix \ref{sec_other_ex2}, we describe discrete choice and censored regression models that have this structure. In such settings, we assume that the researcher has an estimator $\widehat{\sigma}$ that tends in probability to some $\sigma_*$ under $f_0$. Unlike $\beta$, the parameter $\sigma_*$ is a model-specific ``pseudo-true value'' that is not assumed to have generated the data. However, in our leading example of model (\ref{FE_mod}), as well as in the model's generalizations that we study in our empirical illustrations in Section \ref{Sec_illustration}, the references to $\widehat{\sigma}$ and $\sigma_*$ can be omitted from all subsequent statements and derivations.

Given $\widehat{\beta}$, $\widehat{\sigma}$, a sample $\{Y_i,X_i,\, i=1,...,n\}$ from $(Y,X)$, and the parametric density $f_{\sigma}$, a {model-based} estimator of $\overline{\delta}$ is 
\begin{equation}
\widehat{\delta}^{\rm M}=\frac{1}{n}\sum_{i=1}^n \mathbb{E}_{f_{\widehat{\sigma}}} \left[\delta_{\widehat{\beta}} (U,X)\,\big|\, X=X_i\right],\label{eq_model_based}
\end{equation}
where, with some abuse of notation, the expectation with respect to $f_{\widehat{\sigma}}$ is computed only over $U$. When not available in closed form, this estimator can be computed by numerical integration or simulation under the parametric density $f_{\widehat{\sigma}}$. It is easy to see that, under standard conditions, $\widehat{\delta}^{\rm M}$ is consistent for $\overline{\delta}$ under correct specification; that is, when $f_{\sigma_*}$ is the true density of $U\,|\, X$.

To construct a posterior estimator, consider the posterior density $p_{\beta,\sigma}$ of $U\,|\, Y,X$. This posterior density is computed using Bayes rule, based on the prior $f_{\sigma}$ on $U\,|\, X$ and the likelihood of $Y\,|\, U,X$ implied by $g_{\beta}$. Formally, let ${\cal{U}}(y,x,\beta)=\{u\,:\, y=g_{\beta}(u,x)\}$. We define, whenever the denominator is non-zero,
\begin{equation}\label{eq_posterior}
p_{\beta,\sigma}(u\,|\, y,x)=\frac{f_{\sigma}(u\,|\,x)\boldsymbol{1}\{u\in {\cal{U}}(y,x,\beta)\}}{\int f_{\sigma}(v\,|\,x)\boldsymbol{1}\{v\in {\cal{U}}(y,x,\beta)\}dv}.
\end{equation}
We will compute $p_{\beta,\sigma}$ analytically in our examples. In Appendix \ref{App_Ext} we describe a simulation-based computational approach when an analytical expression is not available. We define the {posterior average effect} (PAE) as the posterior estimator 
\begin{equation}\label{eq_posterioraverage}
\widehat{\delta}^{\rm P}=\frac{1}{n}\sum_{i=1}^n \mathbb{E}_{p_{\widehat{\beta},\widehat{\sigma}}}\left[\delta_{\widehat{\beta}} (U,X)\,\Big|\, Y=Y_i,X=X_i\right],
\end{equation} 
where, again, the expectation is only taken over $U$. Under standard regularity conditions, it is easy to see that, like $\widehat{\delta}^{\rm M}$, the PAE $\widehat{\delta}^{\rm P}$ is consistent for $\overline{\delta}$ under correct specification.

From a Bayesian perspective, $\widehat{\delta}^{\rm P}$ is a natural estimator to consider when $\beta$ and $\sigma$ are known. Indeed, $\widehat{\delta}^{\rm P}$ is then the posterior mean of $\frac{1}{n}\sum_{i=1}^n\delta_{\beta} (U_i,X_i)$, where the prior on $U_i$ is $f_{\sigma}$, independent across $i$. An alternative Bayesian interpretation is obtained by specifying a nonparametric prior on $f_0$, and computing the posterior mean of $\overline{\delta}$ under this prior, as we discuss in Appendix \ref{App_Ext} in the case where $U$ has finite support. However, a frequentist justification for $\widehat{\delta}^{\rm P}$ appears to be lacking in the literature. Indeed, under correct specification of $f_{\sigma}$, both estimators $\widehat{\delta}^{\rm P}$ and $\widehat{\delta}^{\rm M}$ are consistent, and, as we pointed out in the previous section, $\widehat{\delta}^{\rm P}$ may have a higher variance than $\widehat{\delta}^{\rm M}$. The key difference between model-based and posterior estimators is that $\widehat{\delta}^{\rm P}$ is conditional on the observation sample. An intuitive rationale for the conditioning is the recognition that realizations $Y_i$ may be informative about the values of the unknown $U_i$'s. We next formalize this intuition in a framework that accounts for specification error.

\subsection{Neighborhoods, estimators, and worst-case specification error}

Let $P(\beta,f_0)$ denote the true density of $(Y,U,X)$, where as before we omit the reference to the marginal density of $X$ for conciseness. We assume that, under $P(\beta,f_0)$, $\widehat{\beta}$ is consistent for the true $\beta$, and $\widehat{\sigma}$ is consistent for a model-specific ``pseudo-true'' value $\sigma_{*}$, where $\mathbb{E}_{P(\beta,f_0)} [\psi_{\beta,\sigma_{*}}(Y,X)]=0$ for some moment function $\psi$. For example, $\widehat{\beta}$ and $\widehat{\sigma}$ may be the method-of-moments estimators that solve $\sum_{i=1}^n \psi_{\widehat \beta,\widehat \sigma}(Y_i,X_i) = 0$. In models with no $\sigma$ parameters, such as model (\ref{FE_mod}) and its generalizations, we only assume that $\widehat{\beta}$ is consistent for $\beta$, and that $\mathbb{E}_{P(\beta,f_0)} [\psi_{\beta}(Y,X)]=0$ for some $\psi$. Throughout, we take the estimators $\widehat{\beta}$ (and possibly $\widehat{\sigma}$), and the moment function $\psi$,
as given. In particular, we do not address the question of optimal estimation of $\beta$ under misspecification.

Given a distance measure $d$ and a scalar $\epsilon\geq 0$, we define the following {neighborhood} of the reference density $f_{\sigma}$:
$$\Gamma_{\epsilon}=\left\{f_0\,:\,  d(f_0,f_{\sigma_*})\leq \epsilon,\,\,\, \mathbb{E}_{P(\beta,f_0)} [\psi_{\beta,\sigma_*}(Y,X)]=0\right\}.$$ 
This neighborhood consists of densities of $U\,|\, X$ that are at most $\epsilon$ away from $f_{\sigma_*}$, and under which $\widehat{\beta}$ and $\widehat{\sigma}$ converge asymptotically to $\beta$ and $\sigma_*$, respectively. The case $\epsilon=0$ corresponds to correct specification of the reference density, whereas $\epsilon>0$ corresponds to misspecification.

 For ease of notation we omit the dependence of $\Gamma_{\epsilon}$ on  $\beta$, $\sigma_*$, and $\psi$,
all of which we consider fixed and given in this section. Indeed, 
we assume that the researcher has chosen an estimator $\widehat{\beta}$, and, depending on the setting, an estimator $\widehat{\sigma}$
--- our theory is silent about where these choices come from --- and that she has already observed their realized values
in a large sample. The moment function $\psi$ is determined by this choice
of estimators. Moreover, in large samples, the population values $\beta$ and $\sigma_*$ are arbitrarily close to the observed values $\widehat{\beta}$ and $\widehat{\sigma}$. In our setup, we only consider densities of unobservables $f_0$ that are consistent with those values, in the sense that the moment restriction $\mathbb{E}_{P(\beta,f_0)} [\psi_{\beta,\sigma_*}(Y,X)]=0$ holds. This large-sample logic is consistent with our focus on specification error; see (\ref{DefBias}) below. 

Note that the same logic might suggest imposing that other features of the joint population distribution of the data $(Y,X)$, such as means, covariances, higher-order moments, or even the entire distribution, be kept constant for all $f_0\in\Gamma_{\epsilon}$. Restricting neighborhoods in this way does not affect the results in this section, because those are valid for all possible $\psi$, and one could thus impose additional moment restrictions on $f_0$.

Let us denote the supports of $X$ and $U$ as ${\cal{X}}$ and ${\cal{U}}$, respectively. We assume that $d$ is a $\phi${-divergence} of the form
$$d(f_0,f_{\sigma})=\int_{{\cal{X}}}\int_{{\cal{U}}} \phi \left(\frac{f_0(u\,|\, x)}{f_{\sigma}(u\,|\, x)}\right)f_{\sigma}(u\,|\, x) \, f_X(x) \, du \, dx,$$
where $\phi$ is a convex function that satisfies $\phi(1)=0$ and $\phi''(1)>0$. This family contains as special cases the $\chi^2$ divergence (averaged over $X$), the Kullback-Leibler divergence, the Hellinger distance, and more generally the members of the Cressie-Read family of divergences (Cressie and Read, 1984). It is commonly used to measure misspecification, see Andrews \textit{et al.} (2020) and Christensen and Connault (2019) for recent examples.

We focus on asymptotically linear estimators of $\overline{\delta}$ that satisfy, for a scalar non-stochastic function $\gamma$ and as $n$ tends to infinity,
\begin{equation}\label{ass_aslin}
\widehat{\delta}_{\gamma}=\frac{1}{n}\sum_{i=1}^n \gamma_{\widehat{\beta},\widehat{\sigma}}(Y_i,X_i)+o_{P(\beta,f_0)}(1).\end{equation}
Note that $\widehat{\delta}_{\gamma}$ depends on $\widehat{\beta},\widehat{\sigma}$, but for conciseness we leave the dependence implicit in the notation. Many estimators can be written in this form (see, e.g., Bickel \textit{et al.}, 1993). Given an estimator $\widehat{\delta}_{\gamma}$, we define its $\epsilon$-worst-case {specification error} as
\begin{align}
b_{\epsilon}(\gamma)=\limfunc{sup}_{f_0\in\Gamma_{\epsilon}}\, \left|\mathbb{E}_{P(\beta,f_0)}[\gamma_{{\beta},{\sigma_*}}(Y,X)]-\mathbb{E}_{f_0}[\delta_{\beta}(U,X)]\right|.
    \label{DefBias}
\end{align}
We will take the worst-case specification error $b_{\epsilon}(\gamma)$ to be our measure of how well an estimator $\widehat{\delta}_{\gamma}$ performs under misspecification. It quantifies the maximum discrepancy, under any possible $f_0$ in the neighborhood $\Gamma_{\epsilon}$, between the probability limit of the estimator and the true parameter value. Under suitable regularity conditions, $\mathbb{E}_{P(\beta,f_0)}[\gamma_{{\beta},{\sigma}_{*}}(Y,X)]-\mathbb{E}_{f_0}[\delta_{\beta}(U,X)]$ in (\ref{DefBias}) is the asymptotic bias of $\widehat{\delta}_{\gamma}$ under $P(\beta,f_0)$.

By focusing on the worst-case specification error $b_{\epsilon}(\gamma)$, we abstract from other sources of estimation error. Importantly, we do not account for sampling variability. In Bonhomme and Weidner (2018), we study an alternative approach that consists in minimizing worst-case mean squared error under a local asymptotic --- i.e., as $\epsilon$ tends to zero, $n$ tends to infinity, and $\epsilon n$ tends to a positive constant. Applying this approach to the present case gives estimators that have a smaller worst-case mean squared error than PAE in general. However, unlike PAE, minimum-MSE estimators depend on $\epsilon$, as we will discuss Subsection \ref{subsec_discuss} below. Relative to such estimators, PAE do not require the researcher to take a stand on the degree of misspecification $\epsilon$, and they are easy to implement.

\subsection{Result under small-$\epsilon$ misspecification\label{subsec_local_bias}}

Before stating our first main result, we first characterize the worst-case specification error $b_{\epsilon}(\gamma)$ of estimators $\widehat{\delta}_{\gamma}$ for small $\epsilon$. For conciseness, in the remainder of this section we suppress the reference to $\beta,\sigma_*$ from the notation, and we denote as $\mathbb{E}_*$ and $\limfunc{Var}_*$ expectations and variances that are taken under the reference model $P(\beta,f_{\sigma_*})$. All proofs are in Appendix \ref{App_Proofs}.

\begin{lemma}\label{lem_bias}
     Let $\widetilde \psi(y,x) = \psi(y,x) - \mathbb{E}_* \left[ \psi(Y,X)  \big| X=x \right]$.
      Suppose that one of the following conditions holds:
      \begin{itemize}
          \item[(i)] 
       $\phi(1)=0$, $\phi(r)$ is four times continuously differentiable 
	    with  $\phi''(r)>0$ for all $r >0$, $\mathbb{E}_* [\psi(Y,X)] = 0$,  
    $\mathbb{E}_* \big[ \widetilde \psi(Y,X) \, \widetilde \psi(Y,X)'\big] >0$, and
    $\left| \gamma(y,x) \right|$, $\left| \delta(u,x) \right| $,  $\left| \psi(y,x) \right|$ are bounded over the domain of $Y$, $U$, $X$.
        \item[(ii)] \label{part_ii} Condition (ii) of Lemma~\ref{lem_bias_app} in Appendix~\ref{App_Proofs} holds
        (this alternative condition allows for unbounded $\gamma$, $\delta$, $\psi$, 
        but at the cost of stronger assumptions on $\phi(r)$).
    \end{itemize}
\noindent
    Then, as $\epsilon$ tends to zero we have 
	\begin{align*}
	&b_{\epsilon}(\gamma)=\left|\mathbb{E}_{*}[\gamma(Y,X)-\delta(U,X)]\right| 
	\\ & 
	\,\,\, +\epsilon^{\frac{1}{2}}\bigg\{\frac{2}{\phi''(1)} {\rm Var}_*\Big( 
	 \gamma(Y,X)-  \delta(U,X)  
	 -  \mathbb{E}_{*} \left[\gamma(Y,X)-\delta(U,X)  \, | \, X \right]
	  -\lambda' \widetilde \psi(Y,X) \Big) \bigg\}^{\frac{1}{2}} +{\cal O}(\epsilon),
	\end{align*}
	where $\lambda=\left\{\mathbb{E}_* \big[ \widetilde \psi(Y,X) \, \widetilde \psi(Y,X)'\big]\right\}^{-1}
	 \mathbb{E}_*\left[\left({\gamma}(Y,X)-{\delta}(U,X)\right) \widetilde \psi(Y,X)\right]$.
\end{lemma}

\vskip .3cm

To derive the formula for the worst-case specification error in Lemma \ref{lem_bias}, we maximize the specification error with respect to $f_0$ subject to three contraints: $f_0$ belongs to an $\epsilon$-neighborhood of $f_*$, it is such that the moment condition is satisfied at $(\beta,\sigma_*)$, and it is a density. In part $(i)$ we focus on the case where $\gamma$, $\delta$ and $\psi$ are bounded.
This is satisfied, for example, if those functions and $g(u,x)$ are all continuous, and the domain of $U$ and $X$ is bounded. 
To accommodate situations where supports are unbounded, such as the example of Section \ref{Sec_mainex}, in part $(ii)$ we allow for unbounded functions $\gamma$, $\delta$ and $\psi$,
which only requires existence of third moments under the reference distribution. To guarantee that $b_{\epsilon}(\gamma)$ is well-defined in the unbounded case, we require a regularization of the function $\phi(r)$ for large values of $r$.

Lemma \ref{lem_bias} implies that the small-$\epsilon$ specification error of the PAE is, up to smaller-order terms, proportional to the within-$(Y,X)$ standard deviation of $\delta(U,X)$ under the reference model:
$$b_{\epsilon}(\gamma^{\rm P})=\epsilon^{\frac{1}{2}}\left\{\frac{2}{\phi''(1)}{\limfunc{Var}}_{*}\left(\delta(U,X)-\mathbb{E}_{*}[\delta(U,X)\,|\, Y,X]\right)\right\}^{\frac{1}{2}}+{\cal O}(\epsilon).$$ 
In the fixed-effects model (\ref{FE_mod}) of teacher quality, the worst-case specification error of the PAE $\widehat{F}^{\rm P}_{\alpha}(a)$ is
$$b_{\epsilon}(\gamma^{\rm P})=\epsilon^{\frac{1}{2}}\left\{\frac{4}{\phi''(1)}T\left(\frac{a-{\mu}_{\alpha}}{{\sigma}_{\alpha}},\sqrt{\frac{1-\rho}{1+\rho}}\right)\right\}^{\frac{1}{2}}+{\cal O}(\epsilon),$$
where $T(a,b)=\varphi(a)\int_0^b\frac{\varphi(az)}{1+z^2}dz$ is Owen's T function (Owen, 1956), and $\varphi$ is the standard normal density. The specification error decreases as the number $J$ of observations per teacher increases, and tends to zero as $J$ tends to infinity and the shrinkage factor $\rho$ tends to one.

The next theorem, which holds for all functions $\gamma(y,x)$, subject to regularity conditions, shows that the PAE has minimum worst-case specification error locally.

\begin{theorem}\label{theo_bias}
	Suppose that the conditions of Lemma \ref{lem_bias} hold, and let \begin{equation}\gamma^{\rm P}(y,x)=\mathbb{E}_*[\delta(U,X)\,|\, Y=y,X=x].\label{eq_gamma_P}\end{equation} Then, as $\epsilon$ tends to zero we have 
	$$b_{\epsilon}(\gamma^{\rm P})\leq b_{\epsilon}(\gamma)+{\cal O}(\epsilon).$$
\end{theorem}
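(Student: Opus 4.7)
The plan is to apply Lemma~\ref{lem_bias} to both $\gamma^{\rm P}$ and an arbitrary comparison function $\gamma$, and then use an $L^2$-orthogonality argument to show that the $\epsilon^{1/2}$-coefficient for $\gamma^{\rm P}$ is no larger than that of $\gamma$. First, by the tower property, $\mathbb{E}_*[\gamma^{\rm P}(Y,X)] = \mathbb{E}_*\!\left[\mathbb{E}_*[\delta(U,X) \mid Y,X]\right] = \mathbb{E}_*[\delta(U,X)]$, so the leading constant term in the expansion of $b_\epsilon(\gamma^{\rm P})$ vanishes; the same argument gives $\mathbb{E}_*[\gamma^{\rm P} - \delta \mid X] = 0$. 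Consequently Lemma~\ref{lem_bias} delivers
\[
b_\epsilon(\gamma^{\rm P}) \;=\; \epsilon^{1/2}\bigl\{(2/\phi''(1))\,{\rm Var}_*\!\left(\gamma^{\rm P} - \delta - \lambda_{\rm P}'\widetilde\psi\right)\bigr\}^{1/2} \;+\; {\cal O}(\epsilon).
\]

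If $\mathbb{E}_*[\gamma - \delta] \neq 0$, then $b_\epsilon(\gamma)$ is bounded below by a positive constant for all small enough $\epsilon$, while $b_\epsilon(\gamma^{\rm P}) = {\cal O}(\epsilon^{1/2}) \to 0$, so the conclusion holds trivially. Henceforth assume $\mathbb{E}_*[\gamma - \delta] = 0$, so that both leading terms vanish; it then suffices to show ${\rm Var}_*(Z_{\gamma^{\rm P}}) \leq {\rm Var}_*(Z_\gamma)$, where $Z_\gamma := \gamma - \delta - \mathbb{E}_*[\gamma - \delta \mid X] - \lambda'\widetilde\psi$ and $Z_{\gamma^{\rm P}} := \gamma^{\rm P} - \delta - \lambda_{\rm P}'\widetilde\psi$.

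The key step is to decompose $Z_\gamma = R_1 + R_2$, with
\[
R_1 \;:=\; (\gamma - \gamma^{\rm P}) - \mathbb{E}_*[\gamma - \gamma^{\rm P} \mid X] - \lambda'\widetilde\psi, \qquad R_2 \;:=\; \gamma^{\rm P} - \delta,
\]
which uses the identity $\mathbb{E}_*[\gamma - \delta \mid X] = \mathbb{E}_*[\gamma - \gamma^{\rm P} \mid X]$ derived above. Since $R_1$ is a function of $(Y,X)$ and $\mathbb{E}_*[R_2 \mid Y,X] = \gamma^{\rm P} - \mathbb{E}_*[\delta \mid Y,X] = 0$, the residual $R_2$ is $L^2(P_*)$-orthogonal to any function of $(Y,X)$. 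In particular $\mathbb{E}_*[R_2] = 0$ and $\mathbb{E}_*[R_1 R_2] = 0$, so ${\rm Var}_*(Z_\gamma) = {\rm Var}_*(R_1) + {\rm Var}_*(R_2) \geq {\rm Var}_*(\gamma^{\rm P} - \delta)$. On the other side, $\lambda_{\rm P}$ is (modulo the fact that $\mathbb{E}_*[\widetilde\psi \mid X] = 0$) the population least-squares coefficient in the regression of $\gamma^{\rm P} - \delta$ on $\widetilde\psi$, so it minimizes ${\rm Var}_*(\gamma^{\rm P} - \delta - \lambda'\widetilde\psi)$ over $\lambda$; taking $\lambda = 0$ gives ${\rm Var}_*(Z_{\gamma^{\rm P}}) \leq {\rm Var}_*(\gamma^{\rm P} - \delta)$. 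Chaining the two inequalities yields ${\rm Var}_*(Z_{\gamma^{\rm P}}) \leq {\rm Var}_*(Z_\gamma)$ and hence the claim.

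The main obstacle is essentially bookkeeping rather than analytic: one must verify that the conditional-on-$X$ demeaning and the linear projection onto $\widetilde\psi$ do not spoil the clean orthogonality between $(Y,X)$-measurable functions and the posterior residual $\gamma^{\rm P} - \delta$. Both checks reduce to repeated applications of the tower property together with $\mathbb{E}_*[\widetilde\psi \mid X] = 0$; the heavy lifting has already been done by Lemma~\ref{lem_bias}.
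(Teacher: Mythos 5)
Your proof is correct and follows essentially the same route as the paper's: apply Lemma \ref{lem_bias} to both $\gamma$ and $\gamma^{\rm P}$, dispose of the case $\mathbb{E}_*[\gamma-\delta]\neq 0$ trivially, and otherwise use the fact that $\gamma^{\rm P}-\delta$ is $L^2$-orthogonal to all $(Y,X)$-measurable functions to compare the variance terms (the paper expands $\mathbb{E}_*[(Z_\gamma-(\gamma^{\rm P}-\delta))^2]\geq 0$ rather than writing the Pythagorean decomposition $Z_\gamma=R_1+R_2$, and notes directly that $\lambda_{\rm P}=0$ instead of invoking the least-squares optimality of $\lambda_{\rm P}$, but these are the same argument).
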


\subsection{Result under fixed-$\epsilon$ misspecification\label{subsec_global_bias}}

To show our second main result, let us now focus on the case $\phi(t)=\frac 1 2 (t-1)^2$;
that is, we choose the distance measure $d(f_0,f_{\sigma})$ to be
the Pearson $\chi^2$ divergence.
For this quadratic distance measure, we show that PAE satisfy
a fixed-$\epsilon$ optimality result, which is valid for all values of $\epsilon$ that are smaller than
\begin{align}
   \overline \epsilon \, &= \, \frac{{\rm Var}_*\left[  \delta(U,X)  - \gamma^{\rm P}(Y,X)\right]}
           {2 \, \sup_{u,x}  \Big[ \delta(u,x)  - \gamma^{\rm P}(g(u,x),x) \Big]^2} ,
    \label{MaxEpsilon}       
\end{align} 
where $\gamma^{\rm P}(y,x)$ is given by \eqref{eq_gamma_P}. 
 
\begin{theorem}\label{theo_bias0}
   Assume that $\mathbb{E}_* [\psi(Y,X)] = 0$,  $\phi(t)=\frac 1 2 (t-1)^2$,
   and that $\gamma(Y,X)$ and $\delta(U,X)$ have finite second moments under the reference model.
Then, for $0< \epsilon \leq \overline \epsilon$, we have 
$$b_{\epsilon}(\gamma^{\rm P})\leq b_{\epsilon}(\gamma).$$
\end{theorem}

In Theorem \ref{theo_bias0} we show that  $\gamma^{\rm P}$ is an exact minimizer of the function $ b_{\epsilon}(\gamma)$. This is in contrast with Theorem~\ref{theo_bias}, where we relied on a small-$\epsilon$ approximation. The condition $ \epsilon \leq \overline \epsilon$ guarantees that, for $\gamma=\gamma^{\rm P}$, the constraint $f_0(u\, |\, x) \geq 0$ is non-binding in the optimization problem over $f_0$ in \eqref{DefBias}, implying that the problem has a simple analytic solution. Although, in many settings such as model (\ref{FE_mod}), the parameter of interest $\overline{\delta}$ is not consistently estimable under our assumptions, Theorem \ref{theo_bias0} shows that PAE achieve the smallest possible worst-case specification error when the true distribution $f_0$ lies sufficiently close to the reference distribution $f_{\sigma_*}$, as measured according to the $\chi^2$ divergence.

If the distance measure $d(f_0,f_{\sigma})$ is not a $\chi^2$-divergence,
or if $\epsilon > \overline \epsilon$, then $\gamma^{\rm P}$ is not the exact minimizer of worst-case specification error  $ b_{\epsilon}(\gamma)$. Moreover, in such cases the estimator with minimum worst-case specification error depends on $\epsilon$ in general. However, one can still establish a fixed-$\epsilon$ bound on worst-case specification error, as the next result shows.

\begin{theorem}\label{theo_global}
	Let $\gamma^{\rm P}$ be as in (\ref{eq_gamma_P}),
	and assume that $\phi(r)$ is convex with $\phi(1)=0$.
	Then, for all $\epsilon>0$,
	$$ b_{\epsilon}(\gamma^{\rm P})\leq 2 \,  \limfunc{inf}_{\gamma}\, b_{\epsilon}(\gamma).$$
\end{theorem}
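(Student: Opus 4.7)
The plan is to bound $b_\epsilon(\gamma^{\rm P})$ by a ``two-point modulus of continuity'' over pairs in $\Gamma_\epsilon$ sharing the same marginal law of $(Y,X)$, and then to observe that this modulus is, for any competing estimator $\gamma$, bounded above by $2\, b_\epsilon(\gamma)$. Concretely, for each $f_0 \in \Gamma_\epsilon$ I would construct a companion density $\tilde f_0 \in \Gamma_\epsilon$ that induces the same $(Y,X)$-marginal as $f_0$ but whose conditional law of $U$ given $(Y,X)$ coincides with the reference posterior used to define $\gamma^{\rm P}$.

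Explicitly, letting $\pi_f(y|x)=\int f(u|x)\,\boldsymbol{1}\{g_\beta(u,x)=y\}\,du$ denote the conditional density of $Y$ given $X$ induced by a density $f$ of $U|X$, I would set
\[
\tilde f_0(u|x)\;=\;f_{\sigma_*}(u|x)\cdot\frac{\pi_{f_0}(g_\beta(u,x)|x)}{\pi_{\sigma_*}(g_\beta(u,x)|x)}.
\]
A direct change-of-variables computation shows that $\tilde f_0(\cdot|x)$ is a probability density with induced $(Y,X)$-marginal equal to that of $f_0$, and that $\tilde f_0(u|Y=y,X=x)=f_{\sigma_*}(u|Y=y,X=x)$. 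The moment condition for membership in $\Gamma_\epsilon$ then holds since $\psi_{\beta,\sigma_*}$ depends on $(Y,X)$ only. For the divergence constraint, observe that $\tilde f_0/f_{\sigma_*}$ is by construction a function of $(y,x)$, so
\[
d(\tilde f_0, f_{\sigma_*}) = \int\!\!\int \phi\!\left(\frac{\pi_{f_0}(y|x)}{\pi_{\sigma_*}(y|x)}\right)\pi_{\sigma_*}(y|x)\, f_X(x)\, dy\, dx,
\]
which is exactly the $\phi$-divergence between the joint $(Y,X)$-laws under $f_0$ and $f_{\sigma_*}$. The data-processing (contraction) inequality for $\phi$-divergences, applied to the deterministic kernel $(u,x)\mapsto(g_\beta(u,x),x)$, bounds this by $d(f_0,f_{\sigma_*})\leq\epsilon$, so $\tilde f_0\in\Gamma_\epsilon$.

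With this companion in hand, the tower property under $\tilde f_0$ together with the defining property of $\gamma^{\rm P}$ yields
\[
\mathbb{E}_{\tilde f_0}[\delta_\beta(U,X)]=\mathbb{E}_{\tilde f_0}[\gamma^{\rm P}(Y,X)]=\mathbb{E}_{f_0}[\gamma^{\rm P}(Y,X)],
\]
where the last equality uses the shared $(Y,X)$-marginal. Hence the expression inside the absolute value in $b_\epsilon(\gamma^{\rm P})$ at $f_0$ equals $\mathbb{E}_{\tilde f_0}[\delta_\beta]-\mathbb{E}_{f_0}[\delta_\beta]$. For any function $\gamma$, the shared marginal also gives $\mathbb{E}_{f_0}[\gamma(Y,X)]=\mathbb{E}_{\tilde f_0}[\gamma(Y,X)]$, and therefore this difference equals $\{\mathbb{E}_{f_0}[\gamma]-\mathbb{E}_{f_0}[\delta_\beta]\}-\{\mathbb{E}_{\tilde f_0}[\gamma]-\mathbb{E}_{\tilde f_0}[\delta_\beta]\}$. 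The triangle inequality together with $f_0,\tilde f_0\in\Gamma_\epsilon$ then gives $|\mathbb{E}_{f_0}[\gamma^{\rm P}(Y,X)-\delta_\beta(U,X)]|\leq 2\,b_\epsilon(\gamma)$; taking the supremum over $f_0$ and infimum over $\gamma$ proves the theorem.

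The main technical obstacle I anticipate is making the construction of $\tilde f_0$ fully rigorous when $g_\beta(\cdot,x)$ is not injective and $\pi_f(y|x)$ must be interpreted with respect to an appropriate dominating measure. Finiteness of $d(f_0,f_{\sigma_*})$ forces $f_0\ll f_{\sigma_*}$ and hence $\pi_{f_0}\ll\pi_{\sigma_*}$, making the ratio in the defining formula for $\tilde f_0$ well-defined almost everywhere, and the data-processing inequality for $\phi$-divergences uses only the convexity of $\phi$ with $\phi(1)=0$, which is assumed. These technical points notwithstanding, the strategy above cleanly separates the ``two-point modulus'' bound from the specific identity $\gamma^{\rm P}(y,x)=\mathbb{E}_*[\delta_\beta(U,X)|Y=y,X=x]$, which is what produces the factor of $2$.
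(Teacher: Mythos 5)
Your proposal is correct and follows essentially the same route as the paper: your companion density $\tilde f_0(u|x)=f_{\sigma_*}(u|x)\,\pi_{f_0}(g_\beta(u,x)|x)/\pi_{\sigma_*}(g_\beta(u,x)|x)$ is exactly the posterior-smoothed distribution $\int_{\cal U} p_{\beta,\sigma_*}(u|g_\beta(v,x),x)\,f_0(v|x)\,dv$ constructed in the paper's Lemma~\ref{lemma:SupremumBoundConvex}, and your data-processing inequality step is the same Jensen/convexity argument the paper uses to show $\tilde f_0\in\Gamma_\epsilon$. The final triangle-inequality decomposition producing the factor $2$ likewise coincides with the paper's proof of Theorem~\ref{theo_global}.
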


\vskip .3cm

In Theorem \ref{theo_global} we establish a fixed-$\epsilon$ bound on the  worst-case specification error of PAE, which holds for all $\epsilon>0$ and all $\phi$-divergences such that $\phi$ is convex with $\phi(1)=0$. The infimum is taken over all possible functions
	$\gamma(y,x)$, subject to measurability conditions, which we implicitly assume throughout the paper. Although $\widehat{\delta}^{\rm P}$ may not minimize worst-case specification error for finite $\epsilon$,
Theorem \ref{theo_global} shows that its  worst-case specification error is never larger than twice the minimum worst-case specification error. In addition, the factor two in Theorem \ref{theo_global} cannot be improved upon in general, as we show in Appendix \ref{App_Ext} in the context of a simple binary choice model.

\subsection{Discussion\label{subsec_discuss}}

In this subsection, we discuss several features and implications of our main results given by Theorems \ref{theo_bias} and \ref{theo_bias0}.

\paragraph{Uniqueness.}

In the absence of covariates and for known parameters $\beta$, $\sigma_*$, the proof of Theorem \ref{theo_bias} shows that
$\gamma^{\rm P}$ is the  unique minimizer of the first-order worst-case specification error. Likewise, $\gamma^{\rm P}$ is also unique in Theorem \ref{theo_bias0}.  
More generally, if covariates are present and the parameters $\beta$, $\sigma_*$ are estimated, then 
the leading order contribution of $b_{\epsilon}(\gamma)$ is minimized if and only if
  $\gamma(Y,X)  =     \gamma^{\rm P}(Y,X)   + \omega(X) + \lambda'\psi(Y,X)     +  o_{P_*}(1) $,
  for some $\lambda$ and $\omega$ such that $\mathbb{E}_{f_X} [\omega(X)] = 0$ ---
   see part (ii) of Theorem~\ref{theo_bias_app} in Appendix \ref{App_Proofs} for a formal statement. Hence, while the PAE is not the unique minimizer of the local worst-case specification error in this case, any minimizer differs from the PAE by a zero-mean function of $X$ and a linear combination of the moment function $\psi$. In addition, $\widehat{\delta}^{\rm P}$ has smallest variance within the class of minimum worst-case specification error estimators.
  
\paragraph{Form of misspecification.}
Theorems \ref{theo_bias} and \ref{theo_bias0} rely on specific distance measures, $\chi^2$ divergence for the latter and any member of the $\phi$-divergence family for the former. Under other distance measures, the PAE will not have minimum worst-case specification error in general. 
	
	Given a distance measure, the theorems are based on nonparametric neighborhoods that consist of unrestricted distributions of $U\,|\, X$, except for the moment conditions that pin down $\beta$ and $\sigma_*$.  However, if one is willing to make additional assumptions on $f_0$ that further restrict the neighborhood, then one can construct estimators that are more robust than $\widehat{\delta}^{\rm P}$ within a particular class. As an example, consider the fixed-effects model (\ref{FE_mod}). Suppose that, in addition to assuming that $\alpha$, $\varepsilon_1$, ..., $\varepsilon_J$ are mutually uncorrelated, the researcher is willing to assume that they are fully independent. In that case, the distribution of $\alpha$ can be consistently estimated under suitable regularity conditions, provided $J\geq 2$ (Kotlarski, 1967, Li and Vuong, 1998). However, the PAE in (\ref{posterior_FE}) is inconsistent for fixed $J$ as $n$ tends to infinity. As a consequence, the PAE does not minimize worst-case specification error in a semi-parametric neighborhood that consists of distributions with independent marginals.
	
	To elaborate further on this point, consider the coefficient $\overline{\delta}$ in the population regression of $\alpha$ on a covariates vector $W$, see \eqref{eq_proj_coeff}. A possible estimator is the coefficient $\widehat{\delta}^{\rm FE}$ in the regression of the fixed-effects estimates $\overline{Y}_i$ on $W_i$, see (\ref{eq_proj_coeff_est_FE}). Under correct specification of the reference model, $\widehat{\delta}^{\rm FE}$ is consistent for $\overline{\delta}$. However, $\widehat{\delta}^{\rm FE}$ may be inconsistent under the type of misspecification that we allow for, since $\varepsilon_j$ and $W$ may be correlated under $f_0$. For example, $W$ (e.g., teacher absenteeism) may be influenced by $\alpha$ and factors that correlate with $\varepsilon_j$. Theorem \ref{theo_bias} shows that, under such misspecification, the PAE $\widehat{\delta}^{\rm P}$ in (\ref{eq_proj_coeff_est}) has minimum worst-case specification error locally. Nevertheless, if the researcher is confident that $W$ should not enter the outcome equation, and that it is independent of $\varepsilon_j$, then it is natural to report the consistent estimator $\widehat{\delta}^{\rm FE}$.

	\paragraph{Posterior informativeness.}
	Our small-$\epsilon$ calculations can be used to compare the worst-case specification errors of the PAE $\widehat{\delta}^{\rm P}$ to that of the model-based estimator $\widehat{\delta}^{\rm M}$. To see this, let $\gamma^{\rm M}_{\beta,\sigma}(x) =  \mathbb{E}_{f_\sigma}  [ \delta_{\beta}(U,X) \,|\, X=x ]$. Using Lemma~\ref{lem_bias}, the ratio of the two worst-case specification errors satisfies
	\begin{equation}\label{eq_ratio_bias}\underset{\epsilon \rightarrow 0}{\limfunc{lim}}\,\,\,\frac{b_{\epsilon}(\gamma^{\rm P})}{b_{\epsilon}(\gamma^{\rm M})}=\frac{\left\{{\limfunc{Var}}_{*}\left(v(U,X)-\mathbb{E}_{*}[v(U,X)\,|\,Y,X]\right)\right\}^{\frac{1}{2}}}{\left\{{\limfunc{Var}}_{*}\left(v(U,X)\right)\right\}^{\frac{1}{2}}},\end{equation}
	where $v(U,X)$ is the
	population
	residual of $ (\delta(U,X) - {\gamma}^{\rm M}(X))$ on $\widetilde{\psi}(Y,X)$, under the parametric reference model; that is, 
		$v(u,x) =  {\delta}(u,x) - {\gamma}^{\rm M}(x)  + \lambda' \widetilde{\psi}(g(u,x),x)$,
		where
		all functions are evaluated at $\beta,\sigma_*$,
		and $\lambda$ is as defined in Lemma~\ref{lem_bias} for the case $\gamma=\gamma^{\rm M}$. Intuitively, the robustness of $\widehat{\delta}^{\rm P}$ relative to $\widehat{\delta}^{\rm M}$ depends on how informative the outcome values $Y_i$ are for the latent individual parameters $\delta(U_i,X_i)$.

	In practice, we will report an empirical counterpart to the small-$\epsilon$ limit of $1-\frac{b_{\epsilon}^2(\gamma^{\rm P})}{b_{\epsilon}^2(\gamma^{\rm M})}$. This quantity can be simply expressed as the $R^2$ in the population nonparametric regression of $v(U,X)$ on $Y,X$ under the reference model; that is,
	\begin{equation}
	\label{eq_R2}R^2=\frac{\limfunc{Var}_{*}\left(\mathbb{E}_{*}[v(U,X)\,|\,Y,X]\right)}{\limfunc{Var}_{*}\left(v(U,X)\right)},
	\end{equation}
	where with some abuse of notation here $v(U,X)$ denotes the
	sample
	residual of $ ( {\delta}_{\widehat{\beta}}(U,X) - {\gamma}_{\widehat{\beta},\widehat{\sigma}}^{\rm M}(X))$ on $\widetilde{\psi}_{\widehat{\beta},\widehat{\sigma}}(Y,X)$, and expectations and variances are taken with respect to $P(\widehat{\beta},f_{\widehat{\sigma}})$. Using a term from Andrews \textit{et al.} (2020) --- albeit in a different setting --- we refer to $R^2$ in (\ref{eq_R2}) as a measure of the ``informativeness'' of the posterior conditioning, and we will report it in our illustrations. As an example, for $\widehat{F}^{\rm P}_{\alpha}(a)$ in model (\ref{FE_mod}), the informativeness of the posterior conditioning is 
\begin{equation}
\label{eq_R2_FE}
R^2=1-\frac{2T\left(\frac{a-\widehat{\mu}_{\alpha}}{\widehat{s}_{\alpha}},\sqrt{\frac{1-\widehat{\rho}}{1+\widehat{\rho}}}\right)}{\Phi\left(\frac{a-\widehat{\mu}_{\alpha}}{\widehat{s}_{\alpha}}\right)\left[1-\Phi\left(\frac{a-\widehat{\mu}_{\alpha}}{\widehat{s}_{\alpha}}\right)\right]}.\end{equation}
In this case the $R^2$ increases with the number $J$ of observations per teacher, and it tends to one as $J$ tends to infinity.

\paragraph{Multi-dimensional PAE.} 
For simplicity, in this section we have focused on the case where the target parameter $\overline \delta$
in \eqref{eq_average} is scalar. However, our results can be extended to multi-dimensional parameters. 
The definition of worst-case specification error in \eqref{DefBias} is then modified
to 
$$b_{\epsilon}(\gamma)=\limfunc{sup}_{f_0\in\Gamma_{\epsilon}}\, \left\| 
\mathbb{E}_{P(\beta,f_0)}[\gamma(Y,X) ]-\mathbb{E}_{f_0}[ \delta(U,X)]\right\|,$$ where $\| \cdot \|$ is a norm over the vector space in which $\gamma(Y,X)$
and $\delta(U,X)$ take values. 

If $\| \cdot \|_*$ denotes the corresponding dual norm, then we can rewrite
$b_{\epsilon}(\gamma)= \sup_{\|v\|_* = 1} b_{\epsilon}(\gamma,v)$,
where   $b_{\epsilon}(\gamma,v)
= \limfunc{sup}_{f_0\in\Gamma_{\epsilon}}\, \big| \mathbb{E}_{P(\beta,f_0)}[v' \gamma(Y,X)] - \mathbb{E}_{f_0}[v' \delta(U,X)]\big|$. Our minimum worst-case specification error results for PAE for scalar $\overline \delta$ then apply
to $b_{\epsilon}(\gamma,v)$ for every given vector $v$, and the minimum-specification error properties are preserved after taking the supremum over the set of vectors
$v$ with $ \|v\|_* = 1$. Thus, in the multi-dimensional case, PAE minimize worst-case specification error for small $\epsilon$ in the sense of Theorem~\ref{theo_bias}, and for fixed $\epsilon$ under the conditions of Theorem \ref{theo_bias0}. In our leading example of Section~\ref{Sec_mainex}, suppose we are interested in the entire distribution function $F_{\alpha}$. In this case, the average effect is a function indexed by $a$. Taking the supremum norm $\|\cdot\|_\infty$ over distribution functions, we obtain that, as an estimator of $F_{\alpha}$, the PAE minimizes worst-case specification error under suitable conditions.

\paragraph{Mean squared error.}
While we have shown that PAE minimize worst-case specification error locally under the conditions of Theorem \ref{theo_bias}, and for fixed $\epsilon$ under the conditions of Theorem \ref{theo_bias0}, PAE generally do not have minimum mean squared error (MSE). To see this, let us assume that $\beta$ and $\sigma_*$ are known. In a local asymptotic framework where $n$ tends to infinity, $\epsilon$ tends to zero, and $n\epsilon$ tends to a positive constant, and under suitable regularity conditions, we show in Appendix~\ref{App_Ext} that the estimator with minimum worst-case MSE is given by
\begin{align}
\widehat{\delta}^{\rm MMSE}
&=  \left[ 1 - w_{n\epsilon} \right]  \;  \widehat{\delta}^{\rm M}
+   w_{n\epsilon} \,  \widehat{\delta}^{\rm P} ,
&
w_{n\epsilon} :=  \left(1+\frac{\phi''(1)}{2n\epsilon}\right)^{-1} ,
\label{deltaMMSE}
\end{align}
which is a linear combination between the model-based estimator and the PAE. The model-based estimator $\widehat{\delta}^{\rm M}$, which has the smallest asymptotic variance, will be preferred when $\epsilon$ is small relative to $1/n$, while the PAE, which has smallest specification error,
will be preferred when $\epsilon$ is large relative to $1/n$.
However, in order to implement such estimators  $\widehat{\delta}^{\rm MMSE}$ that minimize worst-case MSE, knowledge of $\epsilon$ is required. See Bonhomme and Weidner (2018) for an approach to minimum-MSE estimation.

\section{Simulations and empirical illustrations\label{Sec_illustration}}

In this section, we study two empirical applications: we estimate the distribution of income neighborhood effects in the US, and the distributions of permanent and transitory earnings components in the PSID. We start the section by summarizing the results of a Monte Carlo simulation exercise, in samples generated from various specifications of model (\ref{FE_mod}). 

\subsection{Monte Carlo simulation: summary of results\label{subsec_sim}}

While Theorems \ref{theo_bias} and \ref{theo_bias0} show that PAE minimize worst-case specification error under small-$\epsilon$ and fixed-$\epsilon$ misspecification, respectively, they are silent about other forms of estimation error. In Appendix \ref{sec_App_sim} we report the results of a Monte Carlo simulation exercise, where we compare the performance of PAE and other estimators in finite sample in the fixed-effects model (\ref{FE_mod}), for various specifications. Here we briefly summarize the results from the simulation exercise.

We compare the performance of four estimators: the fixed-effects estimator given by (\ref{eq_cdf_FE}), the PAE given by (\ref{posterior_FE}), the model-based estimator given by (\ref{model_based_FE}), and a nonparametric kernel deconvolution estimator with normal errors (Stefanski and Carroll, 1990). We analyze two sets of data generating processes. When the reference normal distribution for $\alpha_i$ is correctly specified, the model-based estimator performs best, as expected. We find that, while the PAE has both larger bias and variance than the model-based estimator in this case, it is less biased and less variable than both the nonparametric deconvolution estimator and the fixed-effects estimator, especially when the number of measurements $J$ is small (see Appendix Figure \ref{fig_mc}).

We next turn to data generating processes where $\alpha_i$ is not normal, drawn from a skewed Beta distribution. We find that the model-based estimator is substantially biased in this case. The nonparametric deconvolution estimator has smallest bias when errors are normally distributed, but it is heavily biased when errors are non-normal. By contrast, although it has no consistency guarantees in these settings, the PAE tends to perform comparatively well in all situations, for bias and variance  (see Appendix Figure \ref{fig_mc_misp}).  

Overall, the simulations complement our theory by highlighting that, beyond specification error, other sources of estimation error matter in practice. Under correct specification of the reference distribution, the model-based estimator should be preferred. At the same time, our results suggest that, at least in the particular settings we focus on,  the performance of the PAE appears less sensitive to misspecification than those of the model-based and nonparametric deconvolution estimators. Moreover, we find that the robustness gains provided by the PAE depend on the signal-to-noise ratio and the informativeness of the posterior conditioning. We provide details on the simulations in Appendix \ref{sec_App_sim}.

\subsection{Neighborhood effects}

In this subsection and the next, we revisit two applications of models with latent variables. In our first illustration, we focus on a model of neighborhood effects following Chetty and Hendren (2017), using data for the US that these authors made public. In our second illustration, we study a permanent-transitory model of income dynamics (Hall and Mishkin, 1982, Blundell \textit{et al.}, 2008) using the PSID. In both cases, we rely on a normal reference specification and assess how and by how much the posterior conditioning informs the estimates of the parameters of interest. 

Here we start with estimates of neighborhood 
(or ``place'')
effects reported in Chetty and Hendren (2017, CH hereafter). Those were obtained using individuals who moved between different commuting zones at different ages. The outcome variable that we focus on is the causal estimate of the income rank at age 26 of a child whose parents are at the 25 percentile of the income distribution. This is CH's preferred measure of place effect.  

CH report an estimate of the variance of neighborhood effects, corrected for noise. In addition, they report individual predictors. Here we are interested in documenting the entire distribution of place effects. To do so, we consider the model $\widehat{\mu}_{c}=\mu_c+\overline{\varepsilon}_c$, for each commuting zone $c$, where $\widehat{\mu}_{c}$ is a neighborhood-specific fixed-effects reported by CH, $\mu_c$ is the true effect of neighborhood $c$, and $\overline{\varepsilon}_c$ is additive estimation noise. CH also report estimates $\widehat{s}_{c}^2$ of the variances of $\overline{\varepsilon}_c$ for every $c$. When weighted by population, the fixed-effects estimates $\widehat{\mu}_{c}$ have mean zero. We treat neighborhoods as independent observations. The statistics we use for calculations are available at: https://opportunityinsights.org/paper/neighborhoodsii/. Given the aggregate data at hand, we necessarily need to assume that estimates $\widehat{\mu}_{c}$ are independent across neighborhoods $c$, although this might be restrictive in this setting.

We first estimate the variance of place effects $\mu_c$, following CH. We trim the top 1\% percentile of $\widehat{s}_{c}^2$, and weigh all results by population weights. While this differs slightly from CH's approach, which is based on $1/\widehat{s}_{c}^2$ precision weights and no trimming, we replicated the analysis using precision weights in the un-trimmed sample and found similar results. We have information about place effects in $C=590$ commuting zones $c$ in our sample, compared to 595 in the sample without trimming. We estimate a sizable variance of neighborhood fixed-effects: $\limfunc{Var}(\widehat{\mu}_{c})=.077$. In turn, the mean of $ \widehat{s}_{c}^2$ weighted by population is $\widehat{s}_{\overline{\varepsilon}}^2=.047$. Given those, we estimate the variance of place effects as $\widehat{s}_{\mu}^2=\limfunc{Var}(\widehat{\mu}_{c})-\widehat{s}_{\overline{\varepsilon}}^2=.030$. In this setting, the shrinkage factor $\widehat{\rho}_c=\widehat{s}_{\mu}^2/(\widehat{s}_{\mu}^2+\widehat{s}_{c}^2)$ exhibits substantial heterogeneity across commuting zones. Indeed, the mean of $\widehat{\rho}_c$ is .62, and its 10\% and 90\% percentiles are .21 and .93, respectively.

We use a normal with zero mean and variance $\widehat{s}_{\mu}^2$ as a prior for $\mu_c$. Then, we estimate the distribution function of neighborhood effects ${\mu}_c$ using the PAE given by (\ref{posterior_FE}); that is, 
\begin{equation*}\widehat{F}^{\rm P}_{\mu}(a)=\frac{1}{\sum_{c=1}^C\pi_c}\sum_{c=1}^C\pi_c \Phi\left(\frac{a-\widehat{\rho}_c\widehat{\mu}_{c}}{\widehat{s}_{\mu}\sqrt{1-\widehat{\rho}_c}}\right),\end{equation*}
where $\pi_c$ are population weights. In addition, in order to ease the visualization of the results, we will also report estimates of densities, which are the derivatives of the PAE of distribution functions. Note that the density of $\mu$ at $a$ can be approximated for arbitrarily small $h>0$ by the expectation of $\boldsymbol{1}\{|\mu-a|/h\}/2h$. Taking the limit of the corresponding PAE as $h$ tends to zero gives the derivative of $\widehat{F}^{\rm P}_{\mu}$ at $a$. We thus expect derivatives of PAE of distribution functions to enjoy similar minimum-worst-case specification error properties as PAE, but we do not formalize the required assumptions here.

In the top panel of Figure \ref{fig_neighb_FE}, we report several estimates of distribution functions. In the bottom panel, we report the corresponding density estimates. In the left graphs, we show nonparametric kernel estimates of the distribution function (respectively, density) of the fixed-effects $\widehat{\mu}_{c}$, weighted by population (in solid), together with the best-fitting normal (in dashed). The graphs show substantial non-normality of the fixed-effects estimates. In particular, the large variance appears to be driven by some large positive and negative estimates $\widehat{\mu}_{c}$. In the right graphs, we report the PAE $\widehat{F}^{\rm P}_{\mu}$ of the distribution function of true place effects $\mu_c$, with the associated density (in solid). In addition, we show the normal prior, with zero mean and variance $\widehat{s}_{\mu}^2$ (in dashed). The posterior distribution of neighborhood effects differs from the normal prior, although the two estimators have the same variance by construction. In comparison, neighborhood-specific empirical Bayes estimates have a substantially lower dispersion. In Appendix Figure \ref{fig_neighb_PM} we report an estimate of their distribution function $\widehat{F}^{\rm PM}_{\mu}$ and associated density. While $\widehat{s}_{\mu}^2=.030$ and the variance associated with $\widehat{F}^{\rm P}_{\mu}$ is $.030$, the variance of the empirical Bayes estimates is only $.010$. In addition, a specification test that compares model-based estimator and PAE, which we describe in Appendix \ref{App_Ext}, suggests that these differences are statistically significant. Indeed, assuming independence across commuting zones, we obtain p-values below .01 at all deciles except the bottom two.

\begin{figure}[h!]
	\caption{Distribution of neighborhood effects\label{fig_neighb_FE}}
	\begin{center}
		\begin{tabular}{cc}
			Fixed-effects estimates & PAE\\
			\multicolumn{2}{c}{Distribution functions}\\
			\includegraphics[width=80mm, height=60mm]{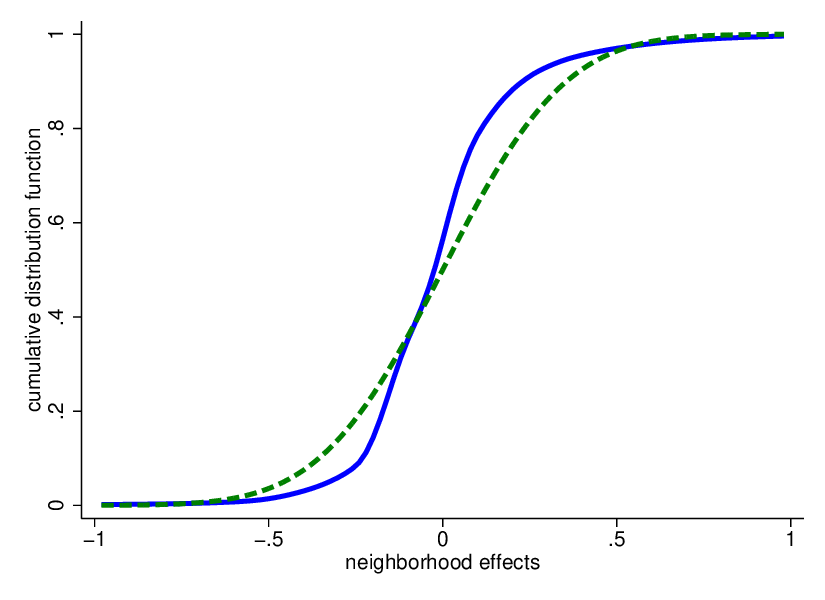}&	\includegraphics[width=80mm, height=60mm]{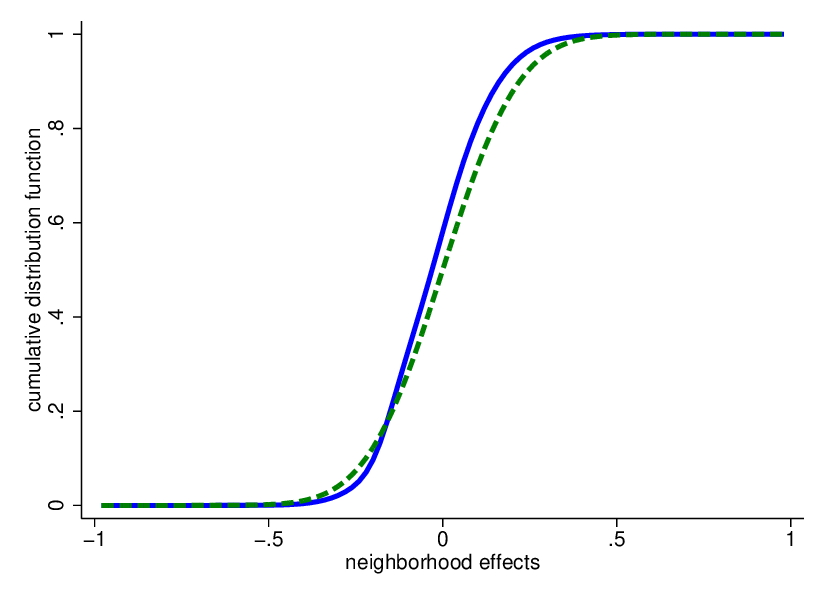}\\
			\multicolumn{2}{c}{Densities}\\
			\includegraphics[width=80mm, height=60mm]{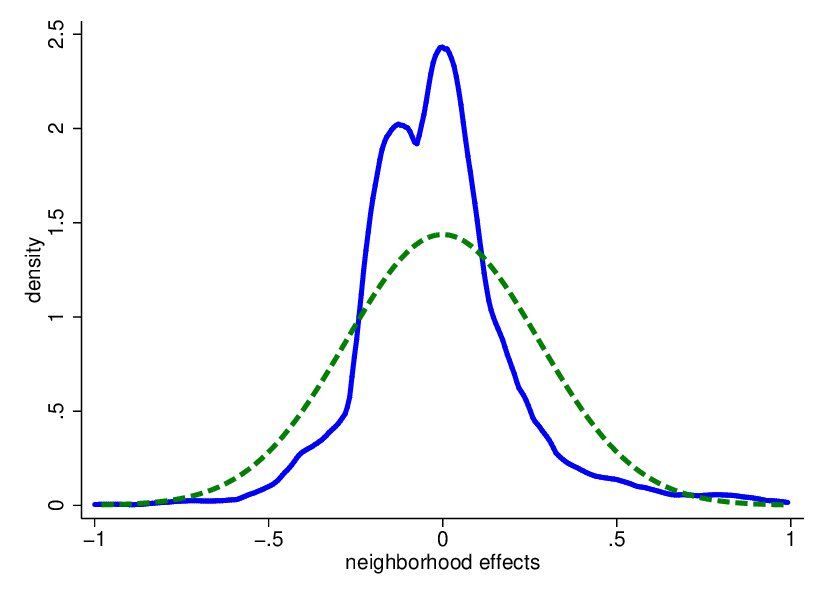}&	\includegraphics[width=80mm, height=60mm]{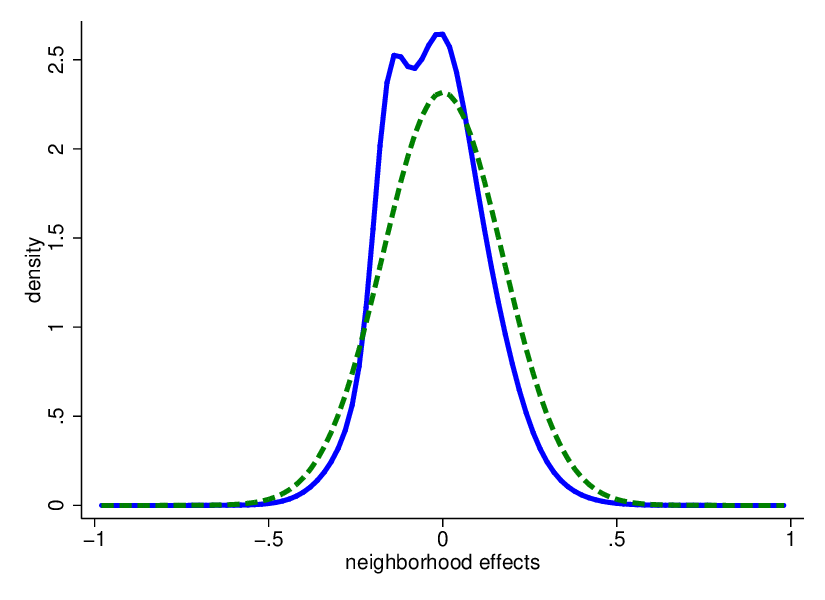}\\\end{tabular}
	\end{center}
	\par
	\textit{{\small Notes: In the left graphs, we show the distribution of fixed-effects estimates $\widehat{\mu}_c$ (solid) and its normal fit (dashed). In the right graphs, we show the posterior distribution of $\mu_c$ (solid) and the prior distribution (dashed). The distribution functions are shown in the top panel, the implied densities are shown in the bottom panel. Calculations are based on statistics available on the Equality of Opportunity website.}}
\end{figure}

To assess how likely it is that the posterior estimator approximates the shape of the distribution of true neighborhood effects, we next perform two different exercises, based on a simulation and on numerical calculations motivated by our theory. We start with a Monte Carlo simulation, where $\mu_c$, for $c=1,...,C_{\rm sim}$, are {log-normally} distributed with zero mean and variance $\widehat{s}_{\mu}^2$, and $\overline{\varepsilon}_c$ are normally distributed independent of $\mu_c$ with zero mean. We consider three scenarios for the noise variances $\widehat{s}_c^2$: the estimates from CH, one-third of those values, and one-tenth of those values. In this exercise we again weigh by population. We show the results for $C_{\rm sim}=100,000$ simulated neighborhoods. In the left graphs of Figure \ref{fig_neighb_sim} we see that, when the noise variances are the ones from the data, the posterior density is more skewed than the normal, yet the posterior shape is quite different from the true log-normal distribution of $\mu_c$. When reducing the noise variances in the middle and right graphs, the posterior distribution function and density estimates get closer to the log-normal ones. In the right graphs, where the shrinkage factor is .90 on average (as opposed to .62 in the data), the posterior distribution function and density approximate the highly non-normal shape of the true distribution of neighborhood effects very well.

\begin{figure}[h!]
	\caption{Simulated data with log-normal $\mu_c$\label{fig_neighb_sim}}
	\begin{center}
		\begin{tabular}{ccc}
			$100\%$ noise variances from data& 33\% noise variances & 10\% noise variances  \\
			\multicolumn{3}{c}{Distribution functions}\\
			\includegraphics[width=40mm, height=35mm]{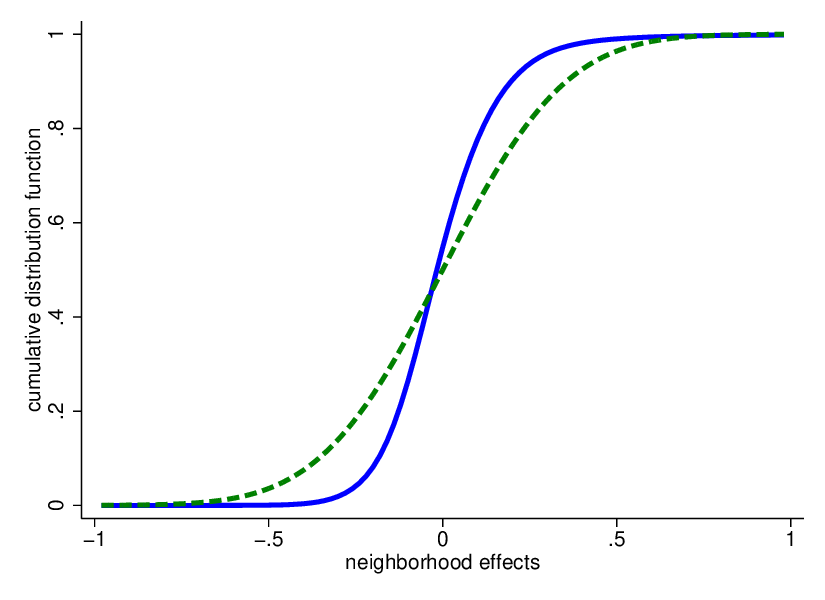}&	\includegraphics[width=40mm, height=35mm]{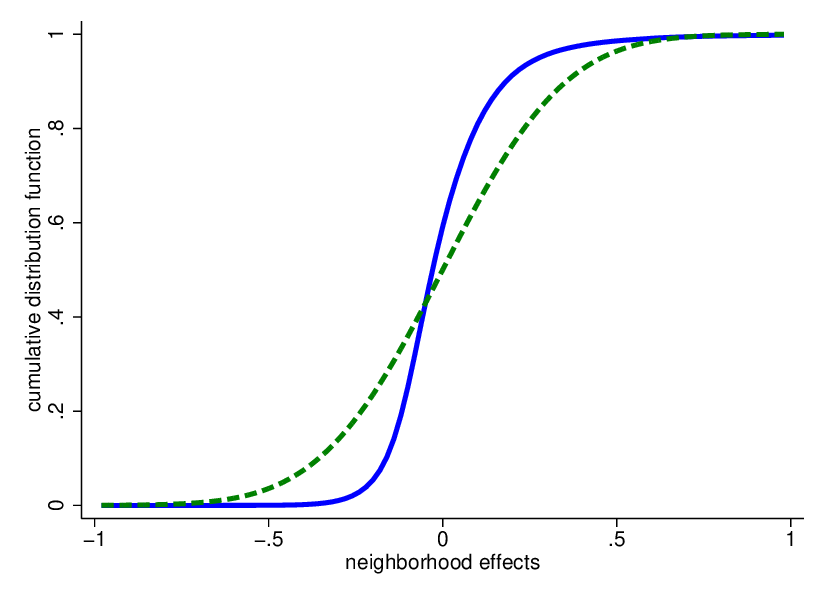}&	\includegraphics[width=40mm, height=35mm]{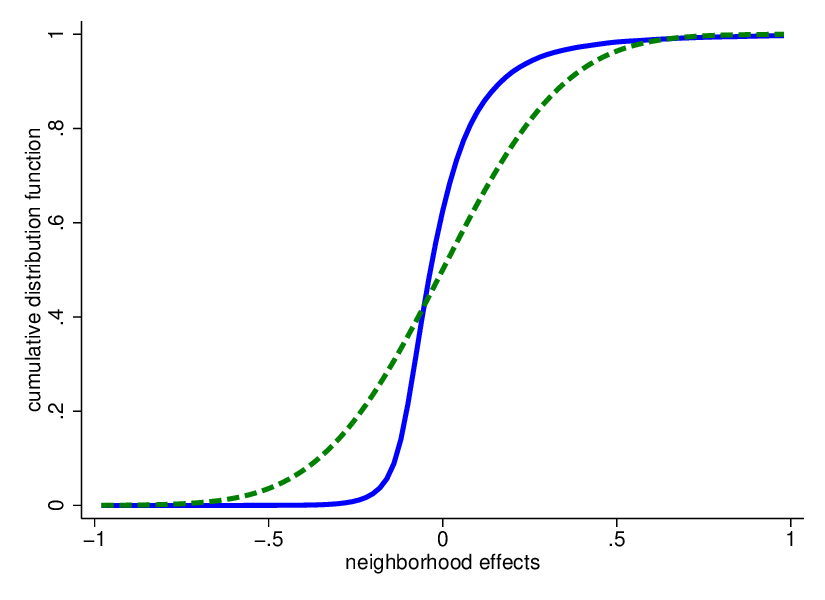}\\
				\multicolumn{3}{c}{Densities}\\
				\includegraphics[width=40mm, height=35mm]{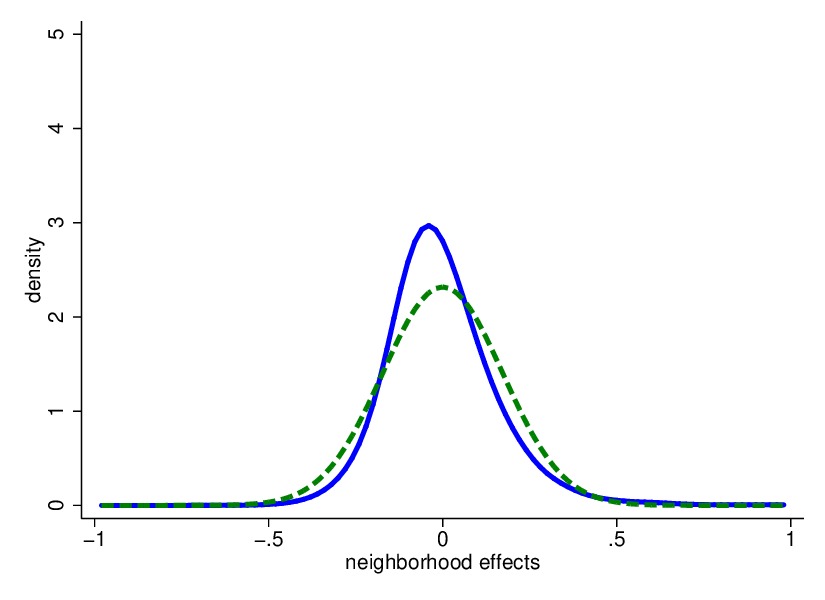}&	\includegraphics[width=40mm, height=35mm]{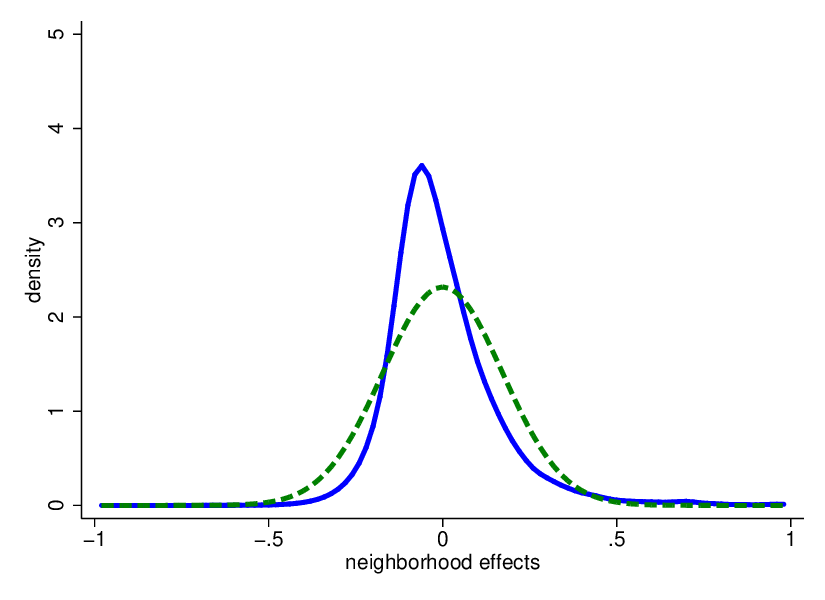}&	\includegraphics[width=40mm, height=35mm]{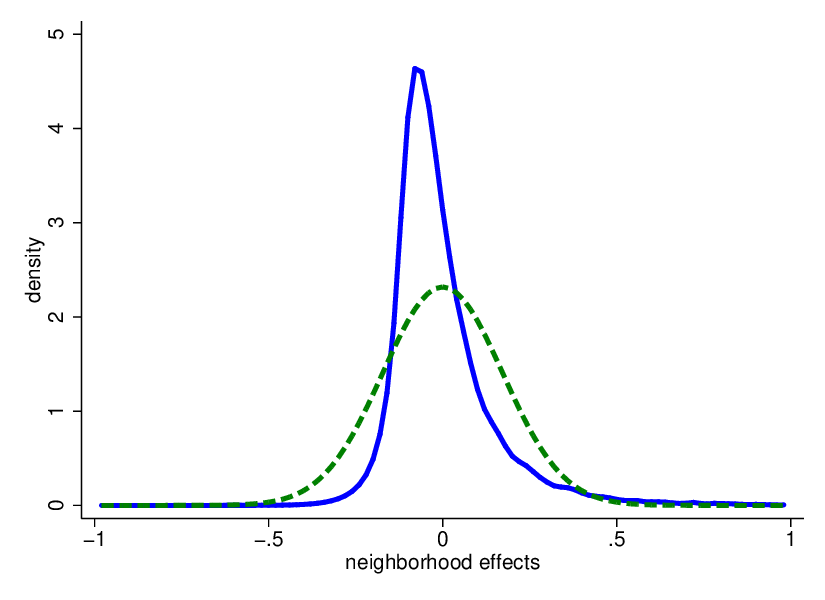}\\\end{tabular}
	\end{center}
	\par
	\textit{{\small Notes: Simulation with $\mu_c$ log-normal and $\overline{\varepsilon}_c$ normal. The posterior distribution is shown in solid, the prior distribution is shown in dashed. The distribution functions are shown in the top panel, the implied densities are shown in the bottom panel. The left graphs correspond to the noise variances $\widehat{s}_c^2$ of the data, the middle ones correspond to the noise variances divided by $3$, and the right graphs correspond to the noise variances divided by $10$.}}
\end{figure}

We next turn to our posterior informativeness measure, which is given by equation (\ref{eq_R2_FE}). Note the $R^2$ coefficient varies along the distribution. We find that the weighted average $R^2$ across values of $a$ is 28\%, where we weigh across cutoff values $a$ by the reference distribution for $\alpha$. This value is consistent with the message of Figure \ref{fig_neighb_sim}, since it suggests that, while the posterior conditioning informs the shape of the distribution of neighborhood effects, the signal-to-noise ratio is not high enough to be confident about the exact shape.

Lastly, we perform two additional exercises as robustness checks. Firstly, we incorporate the mean income $\overline{y}_c$ of permanent residents in county $c$ at the 25\% percentile as a covariate. CH rely on information on permanent residents' income to improve the accuracy of individual predictions. Here we use it to refine the reference distribution and to improve the estimation of the distribution of neighborhood effects. Specifically, our reference model for $\mu_c$ is then a correlated random-effects specification, where the mean depends on $\overline{y}_c$ linearly. Appendix Figure \ref{fig_neighb_FE_correl} shows small differences with our baseline estimates. Secondly, we re-do our main analysis at the county level, instead of the commuting zone level. In that case, the signal-to-noise ratio is lower, our posterior informativeness $R^2$ measure is 17\% on average, and Appendix Figure \ref{fig_neighb_FE_cty} shows that the normal prior and the posterior distributions are closer to each other than in the case of commuting zones.

\subsection{Income dynamics}

In this subsection, we consider the following permanent-transitory model of household log-income, 
$$
Y_{it}=\eta_{it}+\varepsilon_{it},
\qquad
\eta_{it}=\eta_{i,t-1}+V_{it},\qquad i=1,...,n,\quad t=1,...,T,
$$
where $\varepsilon_{it}$ and $V_{it}$ are independent at all lags and leads, and independent of $\eta_{i0}$. This process is commonly used as an input for life-cycle consumption/savings models. Researchers often estimate covariances in a first step using minimum distance, and then impose a normality assumption for further analysis. However, there is increasing evidence that income components are {not} normally distributed. Instead of using a more flexible model --- as has been done by Carlton and Hall (1978) and a large subsequent literature --- here we compute posterior average effects. The advantages of this approach are that no additional assumptions are needed, and that implementation is straightforward.

We focus on six recent waves of the PSID 1999-2009 (every other year), see Blundell \textit{et al.} (2016) for a description of the data. We use the same sample selection as in Arellano \textit{et al.} (2017), and work with a balanced panel of $n=792$ households over $T=6$ periods. $Y_{it}$ are residuals of log total pre-tax household labor earnings on a set of demographics, which include cohort interacted with education categories for both household members, race, state, and large-city dummies, a family size indicator, number of kids, a dummy for income recipient other than husband and wife, and a dummy for kids out of the household. Our aim is to estimate the distributions of $\eta_{it}$ and $\varepsilon_{it}$. To do so, we compare normal model-based estimates with posterior estimates, by plotting distribution functions as well as the implied densities. The model's structure is similar to that of the fixed-effects model (\ref{FE_mod}), and analytical expressions for posterior estimators are easy to derive.

\begin{figure}[h!]
	\caption{Distribution of income components\label{fig_psid}}
	\begin{center}
		\begin{tabular}{cc}
			Permanent component & Transitory component \\
			\multicolumn{2}{c}{Distribution functions}\\
			\includegraphics[width=80mm, height=60mm]{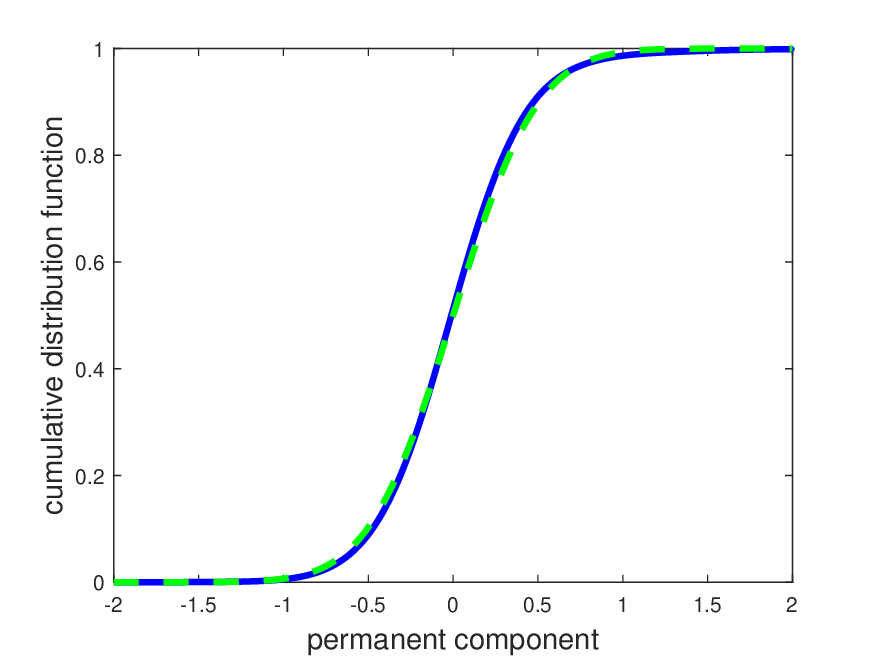}&\includegraphics[width=80mm, height=60mm]{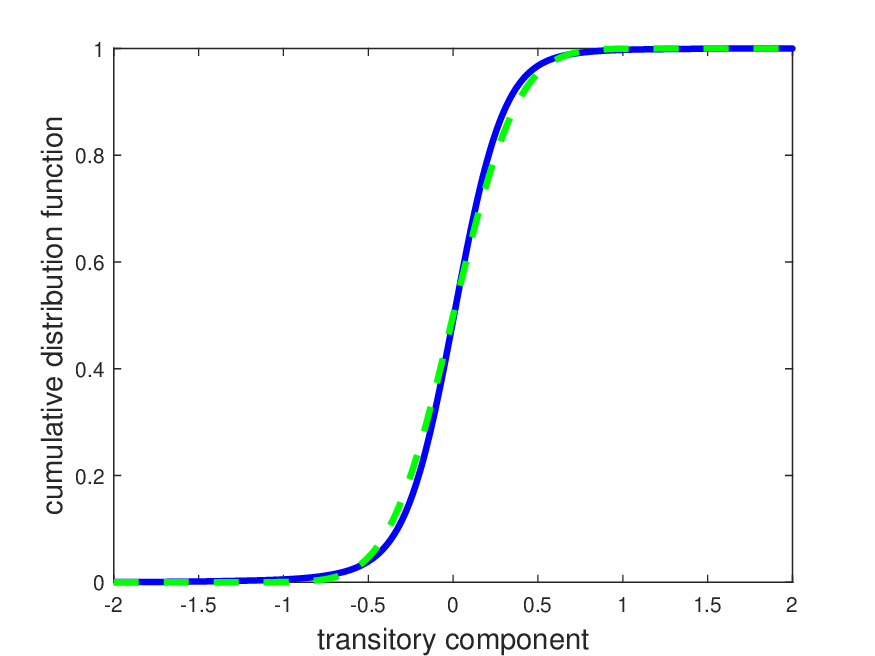} \\
				\multicolumn{2}{c}{Densities}\\
				\includegraphics[width=80mm, height=60mm]{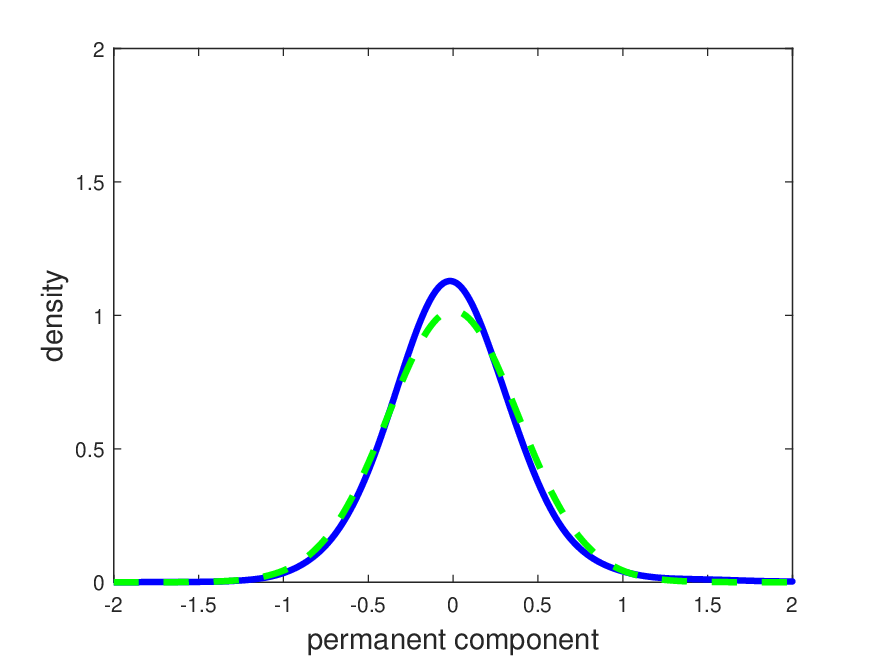}&\includegraphics[width=80mm, height=60mm]{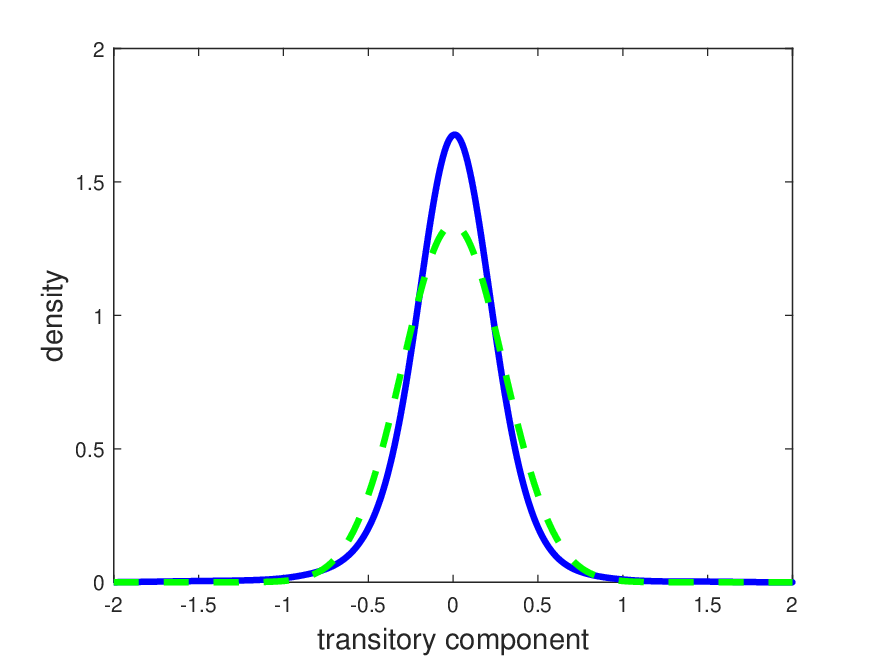} \\\end{tabular}
	\end{center}
	\par
	\textit{{\small Notes: The top panel shows PAE estimates of distribution functions (in solid), and model-based estimates (in dashed), and the bottom panel shows the associated density estimates. The left graphs correspond to the permanent income component $\eta_{it}$, the right graphs to the transitory income component $\varepsilon_{it}$. Sample from the PSID, 1999-2009.}}
\end{figure}

In the left graphs of Figure \ref{fig_psid}, we show the distribution of the permanent component $\eta_{it}$. In the right graphs, we show the distribution of the transitory component $\varepsilon_{it}$. We show PAE in solid, and model-based estimators in dashed. In the top panel we report estimates of distribution functions, and in the bottom panel we report the implied density estimates. The estimates show mild deviation from Gaussianity for the permanent component, and stronger evidence of non-Gaussianity for the transitory component. In particular, the latter shows excess kurtosis (i.e., ``peakedness'') relative to the normal.

Several papers have already documented the presence of excess kurtosis in income components, particularly in transitory innovations, using parametric or semi-parametric methods. The estimates in Figure \ref{fig_psid} share some qualitative similarities with recent findings in the literature. For example, the estimates of a flexible non-normal and non-linear model in Arellano \textit{et al.} (2017, Figure 3) are quite similar to the PAE estimates in Figure \ref{fig_psid} for permanent components. At the same time, their estimates of the distribution of transitory components show substantially more pronounced non-Gaussianity and excess kurtosis relative to PAE. This finding is in agreement with our posterior informativeness measure R$^2$, which is 12\% on average along the distribution for the permanent component, and 8\% on average for the transitory component. This degree of informativeness suggests that posterior estimates may suffer from substantial specification error when the reference distribution is misspecified.

Overall, these empirical illustrations give two examples where, starting from a normal prior, the posterior conditioning is informative about the true unknown distributions. In both settings, PAE are not normal. Yet, as indicated by the $R^2$ values we report, the signal-to-noise ratios are not high enough to be certain about the exact shapes of the distributions of interest, thus motivating further analyses using non-normal specifications. PAE should be useful in other environments where model (\ref{FE_mod}) and its extensions are widely used, for example in teacher value-added applications, where the signal-to-noise ratio is driven by the number of observations per teacher. Moreover, PAE are also applicable to other --- nonlinear --- econometric models, as we describe in the next section.

\section{Complements and extensions\label{sec_comp}}

In this section, we outline several complements and extensions that we analyze in detail in the appendix.

\subsection{PAE in other models\label{sec_other_ex}}

PAE are applicable to a variety of settings. In many econometric models, semi-parametric estimators --- i.e., robust to distributional assumptions on unobservables --- of $\beta$ parameters are available; see Powell (1994) for examples. In such models, PAE provide estimators of average effects that enjoy robustness properties when parametric assumptions are violated. In Appendix \ref{sec_other_ex2} we study static binary and ordered choice models, censored regression models, and panel data binary choice models. We also show how the White (1980) formula for robust standard errors in linear regression can be interpreted as a PAE.

\subsection{Confidence intervals and specification test}

Under correct specification of the reference model, it is easy to derive the asymptotic distributions of $\widehat{\delta}^{\rm M}$ and $\widehat{\delta}^{\rm P}$ using standard arguments. Moreover, under local misspecification, confidence intervals that account for both model uncertainty and sampling uncertainty can be constructed following Armstrong and Koles\'ar (2018) and Bonhomme and Weidner (2018). However, such confidence intervals require the researcher to set a value for the degree of misspecification $\epsilon$. In Appendix \ref{App_Ext}, we provide details on confidence intervals calculations. In addition, we explain how to construct a specification test of the reference model based on the difference $\widehat{\delta}^{\rm P}-\widehat{\delta}^{\rm M}$.

\subsection{Robustness in prediction}

In applications such as the fixed-effects model (\ref{FE_mod}) of teacher quality, researchers are often interested in {predicting} the quality $\alpha_i$ of teacher~$i$. Although our focus in this paper is on the estimation of population averages, it is interesting to see how different predictors perform under misspecification of the reference distribution. It is well-known that EB estimators minimize mean squared prediction error when the normal reference model is correctly specified. However, when normality fails, the best predictor is a different posterior mean, which does {not} generally coincide with the EB estimate based on a normal prior. Intuitively, conditioning on nonlinear functions of the data may improve prediction accuracy.

In Appendix \ref{App_Pred} we use our framework --- applied to worst-case mean squared prediction error instead of worst-case specification error of a sample average --- to provide results on the robustness of EB estimators in the presence of misspecification. We show that EB estimators have minimum worst-case mean squared prediction error, up to smaller-order terms, under local deviations from normality. In addition, we derive a fixed-$\epsilon$, non-local risk bound in the spirit of Theorem \ref{theo_global}.

\section{Conclusion\label{Sec_conclusion}}

Posterior averages are commonly used to predict individual parameters, such as teacher quality or neighborhood effects, and they play a central role in Bayesian and empirical Bayes approaches. In this paper, we have provided a frequentist justification for posterior conditioning when the goal of the researcher is to estimate a population average quantity. We have shown that posterior average effects (PAE) have minimum worst-case specification error under various forms of misspecification of parametric assumptions. PAE are simple to implement, and our analysis provides a rationale for reporting them in applications alongside other parametric and semi-parametric estimators, as well as a simple way to assess the informativeness of the posterior conditioning. As an example, Arnold \textit{et al.} (2020) recently reported PAE to document judge heterogeneity in the context of bail decisions. While we have used a linear fixed-effects model as a running example due to its popularity, there are other possible applications, some of which we discuss in the appendix. 

\vskip 1cm

\paragraph{Acknowledgments.} We would like to thank two anonymous referees, Manuel Arellano, Tim Armstrong, Raj Chetty, Tim Christensen, Nathan Hendren, Peter Hull, Max Kasy, Derek Neal, Jesse Shapiro, Xiaoxia Shi, Danny Yagan, and audiences at various places for comments. Bonhomme acknowledges support from the NSF, Grant SES-1658920. Weidner acknowledges support from the Economic and Social Research Council through the ESRC Centre for Microdata Methods and Practice grant RES-589-28-0001 and from the European Research Council grants ERC-2014-CoG-646917-ROMIA
	and ERC-2018-CoG-819086-PANEDA.

\baselineskip21pt

\clearpage

\appendix

\renewcommand{\theequation}{\thesection \arabic{equation}}

\renewcommand{\thelemma}{\thesection \arabic{lemma}}

\renewcommand{\theproposition}{\thesection \arabic{proposition}}

\renewcommand{\thecorollary}{\thesection \arabic{corollary}}

\renewcommand{\thetheorem}{\thesection \arabic{theorem}}

\renewcommand{\theassumption}{\thesection \arabic{assumption}}

\renewcommand{\thefigure}{\thesection \arabic{figure}}

\renewcommand{\thetable}{\thesection \arabic{table}}

\setcounter{equation}{0}
\setcounter{table}{0}
\setcounter{figure}{0}
\setcounter{assumption}{0}
\setcounter{proposition}{0}
\setcounter{lemma}{0}
\setcounter{corollary}{0}
\setcounter{theorem}{0}

{\small \baselineskip15pt }

\setcounter{page}{1}

\setcounter{section}{0}\renewcommand{\thesection}{S\arabic{section}}

\setcounter{figure}{0}\renewcommand{\thefigure}{S\arabic{figure}}

\setcounter{table}{0}\renewcommand{\thetable}{S\arabic{table}}

\setcounter{footnote}{0}\renewcommand{\thefootnote}{\arabic{footnote}}

\setcounter{assumption}{0}\renewcommand{\theassumption}{S\arabic{assumption}}

\setcounter{equation}{0}\renewcommand{\theequation}{S\arabic{equation}}

\setcounter{lemma}{0}\renewcommand{\thelemma}{S\arabic{lemma}}

\setcounter{proposition}{0}\renewcommand{\theproposition}{S\arabic{proposition}}

\setcounter{corollary}{0}\renewcommand{\thecorollary}{S\arabic{corollary}}

\setcounter{theorem}{0}\renewcommand{\thetheorem}{S\arabic{theorem}}

\begin{center}
	{\small {\LARGE APPENDIX } }
\end{center}

\section{Proofs of Lemma~\ref{lem_bias}, and Theorem~\ref{theo_bias} and \ref{theo_bias0}\label{App_Proofs}}

The following is an extended version of Lemma \ref{lem_bias} and Theorem~\ref{theo_bias} in the main text, which
also covers the case of unbounded functions $\gamma_{{\beta},{\sigma}_{*}}(y,x)$, $\delta_{\beta}(u,x)$
and $\psi_{\beta,\sigma_{*}}(y,x)$. In addition, we make explicit again the dependence on $\beta$ and $\sigma_*$,
which we suppressed in the main text.

\begin{lemma}\label{lem_bias_app}
        In addition to defining $\widetilde \psi(y,x) = \psi(y,x) - \mathbb{E}_{P(\beta,f_{\sigma_*})} \left[ \psi(Y,X)  \big| X=x \right]$, let $\widetilde \gamma(y,x) = \gamma(y,x) - \mathbb{E}_{P(\beta,f_{\sigma_*})} \left[ \gamma(Y,X)  \big|  X=x \right]$ and         $\widetilde  \delta(u,x) = \delta(u,x) - \mathbb{E}_{P(\beta,f_{\sigma_*})} \left[ \delta(U,X)  \big| X=x \right]$. Suppose that 
	$\phi(r) = \overline \phi(r)+\nu \, (r-1)^2$, with $\nu \geq 0$,
	and a function $\overline \phi(r)$ that is four times continuously differentiable 
	with $\overline \phi(1) =0$ and $\overline \phi''(r)>0$, for all $r \in (0,\infty)$.
	Assume
	$ \mathbb{E}_{P(\beta,f_{\sigma_*})}  \psi_{{\beta},{\sigma}_{*}}(Y,X) = 0$ and 
	$\mathbb{E}_{P(\beta,f_{\sigma_{*}})}\left[ \widetilde \psi_{\beta,\sigma_{*}}(Y,X) \, \widetilde \psi_{\beta,\sigma_{*}}(Y,X)'\right] >0$.
	Furthermore, assume that one of the following holds:
	\begin{itemize}
		\item[(i)] $\nu = 0$, and the functions
		$\left| \gamma_{{\beta},{\sigma}_{*}}(y,x) \right|$, $\left| \delta_\beta(u,x) \right| $ and $\left| \psi_{{\beta},{\sigma}_{*}}(y,x) \right|$ are bounded over the domain of $Y$, $U$, $X$.
		
		\item[(ii)] $\nu>0$, and 
		$\mathbb{E}_{P(\beta,f_{\sigma_*})} \left| \gamma_{{\beta},{\sigma}_{*}}(Y,X) - \delta_{\beta}(U,X) \right|^3 < \infty$,
		and  $\mathbb{E}_{P(\beta,f_{\sigma_*})} \left| \psi_{{\beta},{\sigma}_{*}}(Y,X)  \right|^3 < \infty$.
		
	\end{itemize}
	Then, as $\epsilon\rightarrow 0$ we have
	\begin{align*}
	b_{\epsilon}(\gamma)&=\left|\mathbb{E}_{P(\beta,f_{\sigma_*})}[\gamma_{{\beta},{\sigma}_{*}}(Y,X)]-\mathbb{E}_{f_{\sigma_*}}[\delta_{\beta}(U,X)]\right|\\
	&\quad
	 +\epsilon^{\frac{1}{2}}\left\{\frac{2}{\phi''(1)}
	 {\rm Var}_{P(\beta,f_{\sigma_{*}})}\left[ \widetilde \gamma_{{\beta},{\sigma}_{*}}(Y,X)- \widetilde \delta_{\beta}(U,X)-\lambda' \, \widetilde \psi_{\beta,\sigma_{*}}(Y,X) \right]\right\}^{\frac{1}{2}}+{\cal O}(\epsilon),
	\end{align*}
	where $$\lambda=\left\{\mathbb{E}_{P(\beta,f_{\sigma_{*}})}
	\left[\widetilde \psi_{\beta,\sigma_{*}}(Y,X)\, \widetilde \psi_{\beta,\sigma_{*}}(Y,X)'\right]\right\}^{-1}
	\mathbb{E}_{P(\beta,f_{\sigma_{*}})}\left[\left(\gamma_{{\beta},{\sigma}_{*}}(Y,X)-\delta_{\beta}(U,X)\right)\widetilde \psi_{\beta,\sigma_{*}}(Y,X)\right].$$
\end{lemma}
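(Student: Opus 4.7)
The plan is to reparametrize distributions $f_0 \in \Gamma_\epsilon$ by their density ratio $r(u|x) = f_0(u|x)/f_{\sigma_*}(u|x) = 1 + h(u|x)$ and to recast $b_\epsilon(\gamma)$ as a constrained optimization in the Hilbert space $L^2(f_{\sigma_*} f_X)$. Writing $\mathbb{E}_*$ for expectation under $P(\beta,f_{\sigma_*})$ (so that $Y=g_\beta(U,X)$ a.s.), the bias deviation from its value at $f_{\sigma_*}$ reads
\[ \mathbb{E}_{P(\beta,f_0)}[\gamma - \delta] - \mathbb{E}_*[\gamma - \delta] = \int\!\!\int [\gamma(g_\beta(u,x),x) - \delta(u,x)]\, h(u|x)\, f_{\sigma_*}(u|x) f_X(x)\,du\,dx, \]
so $b_\epsilon$ equals $|\mathbb{E}_*[\gamma-\delta]|$ plus the supremum in absolute value of this linear functional of $h$, subject to three constraints: $\int h(u|x) f_{\sigma_*}(u|x)\,du = 0$ for each $x$ (so that $f_0$ is a density), $\mathbb{E}_*[\psi(Y,X)\,h(U|X)] = 0$ (the moment restriction, using $\mathbb{E}_*[\psi] = 0$), and the divergence bound $\int\!\int \phi(1+h) f_{\sigma_*} f_X\,du\,dx \le \epsilon$.

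Next I would Taylor-expand $\phi(1+h) = \tfrac{1}{2}\phi''(1) h^2 + R(h)$, using $\phi(1)=0$ and the fact that the linear term vanishes after $u$-integration because $h$ has $x$-conditional mean zero. Under condition (i), boundedness of $\gamma, \delta, \psi$ together with smoothness of $\phi$ allows one to restrict to perturbations with $\|h\|_\infty = O(\sqrt\epsilon)$ without loss of generality, so that $|R(h)| \le C|h|^3$ is uniformly controlled. Under condition (ii), the regularization $\nu(r-1)^2$ in $\phi$ enforces the global bound $\|h\|_{L^2}^2 \le \epsilon/\nu$, and H\"older's inequality combined with the assumed third moments of $\gamma, \delta, \psi$ controls the cubic residuals. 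In both cases the divergence constraint reduces to $\tfrac{\phi''(1)}{2}\|h\|_{L^2}^2 \le \epsilon(1 + O(\sqrt\epsilon))$ uniformly over $\Gamma_\epsilon$.

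I would then solve the resulting leading-order linear-quadratic problem by Lagrange multipliers. Setting $a(u,x) = [\gamma(g_\beta(u,x),x) - \delta(u,x)] - \mathbb{E}_*[\gamma(Y,X) - \delta(U,X)\mid X=x]$ and $\widetilde b(u,x) = \widetilde\psi(g_\beta(u,x),x)$, I replace $\gamma-\delta$ and $\psi$ by $a$ and $\widetilde b$ in both the objective and the moment constraint (which is allowed since $h$ is $x$-conditional mean zero). Introducing multipliers $\mu$ (divergence), $c(x)$ (normalization), $\lambda$ (moment) and differentiating in $h$ gives $h^*(u|x) \propto a(u,x) - \lambda'\widetilde b(u,x)$; the normalization forces $c(x)\equiv 0$ because $a$ and $\widetilde b$ are already $X$-centered, the moment constraint fixes $\lambda = \{\mathbb{E}_*[\widetilde\psi\widetilde\psi']\}^{-1}\mathbb{E}_*[(\gamma - \delta)\widetilde\psi]$ as in the lemma (the $X$-centering inside the expectation is irrelevant because $\widetilde\psi$ is already $X$-centered), and saturating the divergence fixes $\mu$. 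A short computation using the identity $\mathbb{E}_*[a(a - \lambda'\widetilde b)] = \mathbb{E}_*[(a - \lambda'\widetilde b)^2]$ (which follows from the first-order condition defining $\lambda$) yields worst-case objective $\pm\{(2\epsilon/\phi''(1))\,\mathrm{Var}_*[\widetilde\gamma - \widetilde\delta - \lambda'\widetilde\psi]\}^{1/2}$, matching the stated expansion.

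The main obstacle will be making the $O(\epsilon)$ remainder uniform over $f_0 \in \Gamma_\epsilon$, so that the heuristic computation above gives matching upper and lower bounds. For the upper bound I would apply Cauchy--Schwarz on the orthogonal complement of $\widetilde\psi$ in $L^2(f_{\sigma_*}f_X)$ together with the cubic Taylor remainder to show $b_\epsilon(\gamma) \le |\mathbb{E}_*[\gamma - \delta]| + \sqrt{(2\epsilon/\phi''(1))\,\mathrm{Var}_*[\widetilde\gamma-\widetilde\delta-\lambda'\widetilde\psi]} + O(\epsilon)$. For the lower bound I would exhibit an explicit feasible $f_0$ built from $h = \pm\sqrt\epsilon\,h^*/\|h^*\|_{L^2}$, truncated (in case~(ii)) to keep $r\ge 0$ and corrected by an $O(\epsilon)$ perturbation so that normalization and the moment constraint hold exactly, and then verify that the resulting bias matches the target up to $O(\epsilon)$. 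The unbounded case (ii) is the delicate one: without the $\nu(r-1)^2$ regularization of $\phi$ the worst-case problem need not even be well-posed for $r$ far from $1$, and strict positivity of $\mathbb{E}_*[\widetilde\psi\widetilde\psi']$ is used throughout to ensure the projection defining $\lambda$ is non-degenerate.
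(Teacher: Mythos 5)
Your leading-order computation is the right one and matches the paper's own heuristic: the worst-case perturbation is proportional to the $X$-centered residual $a-\lambda'\widetilde b$, the normalization multiplier vanishes because $a$ and $\widetilde b$ are conditionally centered, $\lambda$ is the projection coefficient stated in the lemma, and the lower-bound construction (rescaled, truncated, and $O(\epsilon)$-corrected $h$) can be made to work under the stated moment conditions. The gap is in your upper bound. Your argument hinges on the claim that, uniformly over $\Gamma_\epsilon$, the divergence constraint "reduces to" $\tfrac{\phi''(1)}{2}\|h\|_{L^2}^2\leq\epsilon(1+O(\sqrt\epsilon))$, and on the assertion that in case (i) one may restrict to $\|h\|_\infty=O(\sqrt\epsilon)$ "without loss of generality". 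Neither is true. In case (ii) the only bound that holds uniformly over feasible $h$ is $\nu\,\mathbb{E}_*[h^2]\leq\epsilon$, which has the wrong constant: convexity of $\overline\phi$ with $\overline\phi'(1)=0$ forces only linear, not quadratic, growth away from $r=1$, so $\overline\phi(1+h)$ can be far below $\tfrac{\overline\phi''(1)}{2}h^2$ when $h$ is large on a low-probability set, and such feasible $h$ violate your claimed $L^2$-ball with radius $\sqrt{2\epsilon/\phi''(1)}$. Cauchy--Schwarz on that (false) ball is precisely how you obtain the sharp constant $2/\phi''(1)$, so the upper bound does not go through as written. Even in case (i), where boundedness of $\gamma,\delta,\psi$ lets you show that the region where $|h|$ is not small contributes little, the natural truncation-and-split argument delivers a remainder of order $\epsilon^{3/4}$ (optimizing the truncation level), not the $O(\epsilon)$ claimed in the lemma; "H\"older plus third moments" does not repair this, because the problematic perturbations are exactly those for which the cubic Taylor term is not a small correction.

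What is missing is a step that controls the constraint \emph{globally}, not just in a quadratic approximation: either a pointwise Fenchel--Young/duality bound using the convex conjugate $\phi^*$ together with a globally valid expansion $\phi^*(t)=t^2/(2\phi''(1))+O(|t|^3)$, or an exact characterization of the maximizer. This is where the regularization $\nu(r-1)^2$ earns its keep and it is the route the paper takes: Lemma~\ref{lemma:FOC} shows the worst-case density is the exact tilt $f_{\sigma_*}\cdot\rho(t)$ with $\rho=(\phi')^{-1}$ and $t$ a linear combination of a constant, $\gamma-\delta$ and $\psi$; Lemma~\ref{lemma:ConvexDual} provides the expansions $\rho(t)=1+t/\phi''(1)+t^2R_1(t)$ and $\phi(\rho(t))=t^2/(2\phi''(1))+t^3R_2(t)$ with remainders bounded \emph{for all} $t$ when $\nu>0$ (and locally when $\nu=0$); and the proof then solves the multiplier equations for small $\epsilon$ and expands, which yields the $O(\epsilon)$ remainder with only third moments. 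If you want to keep your primal $h$-parametrization, you must add the conjugate-expansion ingredient (or an equivalent exact-optimizer argument) to your upper bound; as proposed, the argument establishes the correct lower bound and the correct candidate expansion, but not the matching upper bound at the stated $O(\epsilon)$ accuracy.
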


\begin{theorem}\label{theo_bias_app}
	Suppose that the conditions of Lemma~\ref{lem_bias_app} hold, and let 
	\begin{equation}\gamma_{\beta,\sigma_*}^{\rm P}(y,x)=\mathbb{E}_{p_{\beta,\sigma_*}}[\delta_{\beta}(U,X)\,|\, Y=y,X=x]. 
	\label{eq_gamma_P_app}
	\end{equation} 
	Then the following results hold as $\epsilon$ tends to zero.
	\begin{itemize}
	    \item[(i)] We have 
	$$b_{\epsilon}(\gamma_{\beta,\sigma_*}^{\rm P})\leq b_{\epsilon}(\gamma)+{\cal O}(\epsilon).$$
	
	     \item[(ii)] If we have $b_{\epsilon}(\gamma) =  b_{\epsilon}(\gamma_{\beta,\sigma_*}^{\rm P})+o(\epsilon^{1/2}),$
	      then there exist 
	      $\lambda \in  \mathbb{R}^{\dim \psi}$ and
	      a function $\omega : {\cal X} \rightarrow \mathbb{R}$   with  $\mathbb{E}_{f_X} [\omega(X)] = 0$ such that
	     $$
          \gamma_{{\beta},{\sigma}_{*}}(Y,X)  =     \gamma^{\rm P}_{{\beta},{\sigma}_{*}}(Y,X)   + \omega(X) + \lambda' \ \psi_{\beta,\sigma_{*}}(Y,X)     +  o_{P(\beta,f_{\sigma_{*}})}(1)   .
          $$
	\end{itemize}
\end{theorem}
\medskip
Notice that  Theorem~\ref{theo_bias_app} in the main text is a special case of part (i) of Theorem~\ref{theo_bias_app}.
Part (ii) of Theorem~\ref{theo_bias_app} is discussed in Subsection~\ref{subsec_discuss} of the main text. The proof
of Theorem~\ref{theo_bias_app}  provides explicit expressions for $\lambda$ and $\omega(X)$ that appear in part (ii),
namely  $\lambda$ is the same as in the last line of Lemma~\ref{lem_bias_app}, and 
$  \omega(x)  =  \mathbb{E}_{P(\beta,f_{\sigma_*})} \big[   \gamma_{{\beta},{\sigma}_{*}}(Y,X)-    \delta_{\beta}(U,X) 
    -\lambda' \,   \psi_{\beta,\sigma_{*}}(Y,X)   \big| X=x \big]
     -  \mathbb{E}_{P(\beta,f_{\sigma_*})} \big[   \gamma_{{\beta},{\sigma}_{*}}(Y,X)-    \delta_{\beta}(U,X) 
    -\lambda' \,   \psi_{\beta,\sigma_{*}}(Y,X)  \big]   $.

\subsection{Proof of Lemma~\ref{lem_bias_app} (containing Lemma \ref{lem_bias} as a special case)}

We first introduce some additional notation and establish some helpful intermediate results.
We write ${\cal B}$ and ${\cal S}$ for the set of possible values of the parameters $\beta$ and $\sigma$, respectively.
Lemma~\ref{lem_bias_app} is for given values $\beta \in {\cal B}$ and $\sigma_* \in {\cal S}$,
and given functions $\gamma_{\beta,\sigma_*}(y,x)$,
$\delta_{\beta}(u,x)$,
$\psi_{\beta,\sigma_*} (y,x)$,
and those values and functions are also taken as given in following two intermediate lemmas.    
Remember also that     $\Gamma_{\epsilon}$ depends on the function  $\phi : [0,\infty) \rightarrow \mathbb{R}  \cup \{\infty\}$,
which is assumed to be strictly convex in Lemma~\ref{lem_bias_app}.
We define the corresponding function $\rho : \mathbb{R} \rightarrow \mathbb{R}  \cup \{\infty\}$ by
\begin{align}
\rho(t)  :=  
\left\{ \begin{array}{ll}  \argmax_{r \geq 0} \, \left[ r\, t - \phi(r) \right]
& \text{if this ``argmax'' exists,} $$
\\[5pt]
\infty & \text{otherwise.}
\end{array}
\right.
\label{DefRho}
\end{align}
For $t = \phi'(r)$ we have $\rho(t) = r$, that is, for those values of $t$
the function $ \rho(t)$ is simply the inverse function of the first derivative $\phi'$.
For $t < \inf_{r > 0} \phi'(r)$ we have $\rho(t) = 0$,
and for $t> \sup_{r > 0} \phi'(r)$ the value of $\rho(t)$ is defined to be $\infty$.
The following lemma provides a characterization of the $\epsilon$-worst-case specification error $ b_{\epsilon}(\gamma)$
that was defined in \eqref{DefBias}.

\begin{lemma}
	\label{lemma:FOC}
	Let $\epsilon>0$.
	Assume that $\phi(r)$ is strictly convex with $\phi(1)=0$.
	Suppose that for $s \in \{-1,1\}$ and $x \in {\cal X}$ there exists $\lambda^{(1)}_{\beta,\sigma_*}(s,x) \in \mathbb{R}$, $\lambda^{(2)}_{\beta,\sigma_*}(s)>0$, $\lambda^{(3)}_{\beta,\sigma_*}(s) \in \mathbb{R}^{\dim \psi}$
	such that
	$$
	t_{\beta,\sigma_*}(u,x|s):=   \lambda^{(1)}_{\beta,\sigma_*}(s,x) + s \, \lambda^{(2)}_{\beta,\sigma_*}(s)   \left[\gamma_{{\beta},{\sigma}_{*}}(g_\beta(u,x),x)- \delta_{\beta}(u,x)   \right]
	+ \lambda^{(3) \, \prime}_{\beta,\sigma_*}(s) \, \psi_{\beta,\sigma_*}(g_\beta(u,x),x)) 
	$$
	satisfies
	\begin{align}
	\forall x \in {\cal X}: \; \;
	\mathbb{E}_{P(\beta,f_{\sigma_*})} \left\{  \rho\left[ t_{\beta,\sigma_*}(U,X|s)  \right]  \Big| \, X=x \right\}  &= 1 ,
	\nonumber  \\
	\mathbb{E}_{P(\beta,f_{\sigma_*})}   \,  \phi\left\{ \rho\left[ t_{\beta,\sigma_*}(U,X|s)  \right] \right\}  &= \epsilon ,
	\nonumber \\
	\mathbb{E}_{P(\beta,f_{\sigma_*})}   \, \left\{  \psi_{{\beta},{\sigma}_{*}}(Y,X)  \phantom{\Big|}  \rho\left[ t_{\beta,\sigma_*}(U,X|s)  \right] \right\}  &= 0 . 
	\label{RewriteConstraints}
	\end{align}
	Then the maximizer ($s=+1$) and minimizer  ($s=-1$) of
	 $\mathbb{E}_{P(\beta,f_{0})}   \left[\gamma_{{\beta},{\sigma}_{*}}(Y,X)- \delta_{\beta}(U,X) \right] $
	 over $f_0 \in \Gamma_\epsilon$ are given
	 by
	 $$
	       f^{(s)}_0(u|x) = f_{\sigma_*}(u|x) \, \rho\left[ t_{\beta,\sigma_*}(u,x|s)  \right] ,
	$$
	and for the worst-case absolute specification error we therefore have
	\begin{align*}
	b_{\epsilon}(\gamma)
	&= \max_{s \in \{-1,1\}}
	\left\{  s \; \mathbb{E}_{P(\beta,f_{\sigma_*})} \left[  \left[\gamma_{{\beta},{\sigma}_{*}}(Y,X)- \delta_{\beta}(U,X) \right] 
	\phantom{\Big|}
	\rho\left[ t_{\beta,\sigma_*}(U,X|s)  \right] \right]
	\right\}  .
	\end{align*}
	
\end{lemma}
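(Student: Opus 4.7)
The plan is to view $b_\epsilon(\gamma)$ as the value of a pair of infinite-dimensional convex optimization problems: for each sign $s\in\{-1,+1\}$, maximize $s\cdot \mathbb{E}_{P(\beta,f_0)}[\gamma(Y,X)-\delta(U,X)]$ over $f_0$ subject to (i) $\int f_0(u|x)\,du=1$ for every $x\in{\cal X}$, (ii) the $\phi$-divergence bound $d(f_0,f_{\sigma_*})\leq \epsilon$, and (iii) the moment condition $\mathbb{E}_{P(\beta,f_0)}[\psi_{\beta,\sigma_*}(Y,X)]=0$. The objective is linear in $f_0$, constraints (i) and (iii) are linear, and (ii) is convex because $\phi$ is, so the feasible set is convex and KKT points are global optima.

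First I would form a Lagrangian with dual variables $\lambda^{(1)}(s,x)$ for (i), $\lambda^{(2)}(s)>0$ for (ii), and $\lambda^{(3)}(s)$ for (iii), and compute the pointwise variational derivative with respect to $r(u,x):=f_0(u|x)/f_{\sigma_*}(u|x)$. Stationarity yields $\lambda^{(2)}(s)\,\phi'(r)=s\,\lambda^{(2)}(s)[\gamma-\delta]-\lambda^{(1)}(s,x)+\lambda^{(3)}(s)'\psi$, which inverts via the definition of $\rho$ in \eqref{DefRho} into $r(u,x)=\rho\bigl[t_{\beta,\sigma_*}(u,x|s)\bigr]$. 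The three identities in \eqref{RewriteConstraints} are then precisely the primal feasibility conditions enforced by the multipliers: the first forces $f_0^{(s)}=f_{\sigma_*}\,\rho[t(\cdot|s)]$ to be a conditional density, the second says the divergence constraint binds (expected at an interior-stationary optimum whenever $\epsilon>0$), and the third imposes the moment restriction.

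To turn this heuristic FOC derivation into a rigorous bound, I would invoke the Fenchel--Young inequality $r\cdot t\leq \phi(r)+\phi^*(t)$, where $\phi^*(t)=\rho(t)\,t-\phi(\rho(t))$ is the convex conjugate, with equality if and only if $r=\rho(t)$. Rearranging the definition of $t_{\beta,\sigma_*}(u,x|s)$ to isolate $s[\gamma-\delta]$ and integrating against $f_0(u|x)f_X(x)$ for any $f_0\in\Gamma_\epsilon$, the constraints (i) and (iii) make the $\lambda^{(1)}$ and $\lambda^{(3)}$ contributions vanish, so $s\,\mathbb{E}_{P(\beta,f_0)}[\gamma-\delta]$ becomes $\lambda^{(2)}(s)^{-1}\int t\cdot r\cdot f_{\sigma_*}f_X\,du\,dx$ minus an $f_0$-free constant. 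Fenchel--Young together with the divergence bound then gives $\int t\,r\,f_{\sigma_*}f_X\leq \epsilon+\int \phi^*(t)\,f_{\sigma_*}f_X$, with equality attained when $r=\rho(t)$, i.e., at $f_0=f_0^{(s)}$. Maximizing over $s\in\{-1,+1\}$ delivers the stated expression for $b_\epsilon(\gamma)$.

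The main obstacle will be the boundary behavior encoded in \eqref{DefRho}: $\rho(t)$ can equal $0$ or $\infty$ for values of $t$ outside the range of $\phi'$, and on those sets $\phi^*$ is only defined as an extended real. I expect to need to verify that the hypothesized multipliers keep $t_{\beta,\sigma_*}(\cdot|s)$ in a regime where $\rho[t]$ is finite and where Fenchel--Young holds with equality; this is guaranteed by the first equation in \eqref{RewriteConstraints}, which forces $\rho[t(\cdot|s)]$ to be integrable against $f_{\sigma_*}(\cdot|x)$, and convexity of $\phi$ then secures tightness of the bound at $f_0^{(s)}$.
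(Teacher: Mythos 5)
Your proposal is correct, and it shares the paper's basic architecture: split $b_\epsilon(\gamma)$ into the two signed problems $s\in\{-1,+1\}$, introduce multipliers $\lambda^{(1)}(s,x)$, $\lambda^{(2)}(s)>0$, $\lambda^{(3)}(s)$ for the normalization, divergence, and moment constraints, and recognize the optimal tilt $f_0/f_{\sigma_*}=\rho[t(\cdot\,|s)]$ with the three conditions in \eqref{RewriteConstraints} serving exactly as primal feasibility. Where you differ is in how global optimality of $f^{(s)}_0$ is certified. The paper's proof asserts that the optimizer solves the pointwise Lagrangian problem $\argmax_{r\geq 0}\{r\,t-\phi(r)\}$ and then simply matches constraints, leaving the sufficiency of this first-order characterization implicit in the convexity of the program. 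You instead run a guess-and-verify weak-duality argument: for an arbitrary $f_0\in\Gamma_\epsilon$, use $\int f_0(u|x)\,du=1$ and $\mathbb{E}_{P(\beta,f_0)}[\psi]=0$ to reduce $s\,\mathbb{E}_{P(\beta,f_0)}[\gamma-\delta]$ (up to an $f_0$-free constant and the positive factor $1/\lambda^{(2)}(s)$) to $\int t\,r\,f_{\sigma_*}f_X$, bound this by Fenchel--Young $r\,t\leq\phi(r)+\phi^*(t)$ together with $d(f_0,f_{\sigma_*})\leq\epsilon$, and observe the bound is attained at $r=\rho[t]$ with the divergence constraint binding. This buys you an explicit optimality certificate valid for every feasible $f_0$ (strict convexity is then only cosmetic), at the cost of a slightly longer argument; the paper's version is shorter and makes the origin of $\rho$ more transparent, and indeed introduces the same conjugate $\phi_*(t)=\rho(t)t-\phi(\rho(t))$ only later, in the proof of its Taylor-expansion lemma. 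Both treatments handle the boundary cases identically, since the first condition in \eqref{RewriteConstraints} forces $\rho[t(\cdot\,|s)]$ to be finite almost surely under the reference measure, which you correctly flag as the point needing care.
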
        
\noindent
The proof of Lemma~\ref{lemma:FOC} is given in Section~\ref{sec:ProofTechnical}.
Notice that for $\phi(r) = r [\log(r)-1]$, when $d(f_0,f_{\sigma_{*}}) $ 
  is the Kullback-Leibler divergence, we have $\rho(t) = \exp(t)$, and the worst
  case densities $ f^{(s)}_0(u|x) $ in Lemma~\ref{lemma:FOC} are exponentially tilted 
  versions of the reference density $ f_{\sigma_*}(u|x) $. Lemma~\ref{lemma:FOC}
  shows that, more generally, the required ``tilting function'' is given by $\rho(t)$.

We  impose $\phi(1)=0$ throughout the paper to guarantee that  $d(f_0,f_{\sigma_{*}}) \geq 0$
(by an application of Jensen's inequality). 
In addition, we now impose the normalization $\phi'(1)=0$.
This is without loss of generality, because we can always redefine 
$\phi(r) \mapsto \phi(r) -  (r-1) \, \phi'(1) $,
which has no effect on $ d(f_0,f_{\sigma_{*}})$
and guarantees $\phi'(1)=0$ for the redefined function.

The goal of the following lemma is to establish Taylor expansions of
$\rho(t)$ and $ \phi( \rho(t) )$  around $t=0$ of the form
\begin{align}
\rho(t) &= 1 + \frac{t} { \phi''(1)}+ t^2   \,  R_1(t)  ,
&
\phi( \rho(t) ) &=     \frac {t^2} {2 \, \phi''(1)} + t^3 \, R_2(t) ,
\label{ExpansionRhoPhi}
\end{align}
where the remainder terms are defined by
\begin{align*}
R_1(t) &:=   \left\{ \begin{array}{l@{\qquad}l}  t^{-2} \left[  \rho(t) - 1 - t  /  \phi''(1) \right] & \text{if $t \neq 0$,}
\\[5pt]
- \phi'''(1) / \{ 2 \, [\phi''(1)]^3 \}  & \text{if $t = 0$,}    
\end{array} \right. 
\\[10pt]  
R_2(t) &:=    \left\{ \begin{array}{l@{\quad}l}  t^{-3} \left[ \phi( \rho(t) )  -  t^2 / \{2  \phi''(1)\} \right] & \text{if $t \neq 0$,}
\\[5pt]
- \phi'''(1) / \{ 3 \, [\phi''(1)]^3 \}   & \text{if $t = 0$.}    
\end{array} \right. 
\end{align*}
Notice that the expansions \eqref{ExpansionRhoPhi} are trivially true by definition of $R_1(t) $ and $R_2(t)$,
but the  following lemma provides bounds on $R_1(t) $ and $R_2(t)$, which are useful for the proof of Lemma~\ref{lem_bias_app}
afterwards.

\begin{lemma}
	\label{lemma:ConvexDual}
	For all $r \geq  0$
	let
	$\phi(r) = \overline \phi(r)+\nu \, (r-1)^2$, for $\nu \geq 0$,
	and a function $\overline \phi : [0,\infty) \rightarrow \mathbb{R} \cup \{\infty\}$ that is four times continuously differentiable 
	with $\overline \phi(1)= \overline \phi'(1)=0$ and $\overline \phi''(r)>0$, for all $r \in (0,\infty)$.
	The lemma has two parts:
	\begin{itemize}
		\item[(i)]
		Assume in addition that $\nu=0$.
		Then, there exist constants $c_1>0$, $c_2>0$ and $\eta>0$ such that
		for all $t \in [-\eta,\eta]$ we have
		\begin{align}
		\left| R_1(t)   \right| \, \leq \, c_1  , 
		\quad \quad 
		\text{and}
		\quad \quad 
				\left|  R_2(t)   \right| \, \leq  \, c_2 ,
		\label{BoundRho}   
		\end{align}
		and the functions $R_1(t)$ and $R_2(t)$ are continuous  within $[-\eta,\eta]$.
		
		\item[(ii)] 
		Assume in addition that $\nu>0$.  Then, there  exist constants $c_1>0$ and $c_2>0$
		such that the two inequalities in \eqref{BoundRho} hold for all $t \in \mathbb{R}$,
		and  the functions $R_1(t)$ and $R_2(t)$ are  everywhere continuous.
		
	\end{itemize}
	
\end{lemma}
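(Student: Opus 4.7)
\medskip\noindent
My plan is to identify $\rho$ as the inverse of the strictly increasing function $\phi'$ and to derive the Taylor expansions of $\rho(t)$ and $\phi(\rho(t))$ around $t=0$ via the inverse function theorem. Under the assumptions of the lemma, $\phi''=\overline\phi''+2\nu>0$ on $(0,\infty)$, so $\phi'$ is strictly increasing, and the normalization $\phi'(1)=0$ gives $\rho(0)=1$. In the interior of the range of $\phi'$ I can identify $\rho=(\phi')^{-1}$; outside, the definition in \eqref{DefRho} forces $\rho(t)=0$ (the value $\infty$ will be ruled out wherever I need it).

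\medskip\noindent
For part (i), I would invoke the inverse function theorem at $r=1$: since $\phi$ is $C^4$ with $\phi''(1)>0$, this gives $\rho$ of class $C^3$ on some neighborhood $(-\eta,\eta)$ of $0$. Differentiating $\phi'(\rho(t))=t$ twice then yields $\rho'(0)=1/\phi''(1)$ and $\rho''(0)=-\phi'''(1)/[\phi''(1)]^3$, and a second-order Taylor expansion delivers $\rho(t)=1+t/\phi''(1)+t^2 R_1(t)$ with $R_1$ continuous and agreeing with the prescribed value at $0$. For $g(t):=\phi(\rho(t))$, the identity $g'(t)=\phi'(\rho(t))\rho'(t)=t\rho'(t)$ gives $g(0)=g'(0)=0$, $g''(0)=1/\phi''(1)$, and $g'''(0)=2\rho''(0)$, and a third-order Taylor expansion produces the claimed form for $R_2$. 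Shrinking $\eta$ if necessary will make both remainders bounded on $[-\eta,\eta]$.

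\medskip\noindent
For part (ii), the new tool is the global estimate $\phi''(r)\geq 2\nu$, from which I obtain $|\phi'(r)|\geq 2\nu|r-1|$ by integration, hence the linear bound $|\rho(t)-1|\leq |t|/(2\nu)$ for every $t\in\mathbb{R}$. Combined with the convex tangent inequality $\phi(r)\leq \phi(1)+\phi'(r)(r-1)$ applied at $r=\rho(t)$, this yields $0\leq \phi(\rho(t))\leq t(\rho(t)-1)\leq t^2/(2\nu)$ on the branch where $\rho(t)>0$; on the flat branch $\rho(t)=0$ I simply have $\phi(\rho(t))=\phi(0)$, which is finite. From these bounds I deduce that, for $|t|\geq \eta$,
\[
|R_1(t)|\leq \frac{1}{2\nu|t|}+\frac{1}{|t|\phi''(1)},\qquad |R_2(t)|\leq \frac{1}{|t|}\Bigl(\frac{1}{2\nu}+\frac{1}{2\phi''(1)}\Bigr)+\frac{|\phi(0)|}{|t|^3},
\]
both of which tend to $0$ at infinity. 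Splicing these global bounds with the local bounds from part (i) will give uniform bounds on $R_1$ and $R_2$ over all of $\mathbb{R}$.

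\medskip\noindent
The step I expect to demand the most care is verifying the \emph{continuity} of $R_1$ and $R_2$ at the transition point $t=\phi'(0^+)$ (when finite), where the smooth branch $(\phi')^{-1}$ meets the flat branch $\rho\equiv 0$. Establishing this reduces to showing $\rho(t)\to 0$ and $\phi(\rho(t))\to \phi(0)$ as $t\to \phi'(0^+)^+$, which in turn relies on continuity of $\phi$ and $\phi'$ at $r=0^+$ inherited from the regularity of $\overline\phi$; under those hypotheses the remainder functions will glue continuously across the transition, completing the argument.
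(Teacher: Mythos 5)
Your proposal is correct and takes essentially the same route as the paper: a local Taylor expansion of $\rho=(\phi')^{-1}$ around $t=0$ (the paper phrases this via the convex conjugate $\phi_*$, with $\rho=\phi_*'$) gives the bounds on $R_1,R_2$ near zero, while for $\nu>0$ the quadratic regularization yields a global linear bound on $|\rho(t)-1|$ which, combined with the convexity inequality $\phi(\rho(t))\le t\,(\rho(t)-1)$, is spliced with the local bounds exactly as in the appendix. The only substantive difference is that you explicitly treat the flat branch $\rho(t)=0$ and continuity at the transition point $t=\phi'(0^+)$, a boundary case the paper's argument passes over (it uses $\phi'(\rho(t))=t$ for all $t$); your handling of it is correct and, if anything, slightly more careful.
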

\noindent
The proof of Lemma~\ref{lemma:ConvexDual} is given in Section~\ref{sec:ProofTechnical}. 

\paragraph{Comment:} Part (i) and part (ii) of Lemma~\ref{lemma:ConvexDual} give the same approximations
of $\rho(t)$ and $ \phi( \rho(t) )$, but the difference is that in part (i) 
the result only holds locally in a neighborhood of $t=0$, while in part (ii)
the inequalities are established globally for all $t \in \mathbb{R}$. Notice that the result
of part (ii) cannot hold under the assumptions of part (i) only, because 
$\rho(t)$ is equal to infinity for
all $t > t_{\sup}$, where $t_{\sup} = \sup_{r \in (0,\infty)} \phi'(r)$ can be finite. The regularization  $\phi(r) = \overline \phi(r)+\nu \, (r-1)^2$,
with $\nu>0$,
guarantees that $\rho(t)$ is finite and well-defined for all $t \in \mathbb{R}$.
This property of the regularized $\phi(r) $ is key whenever the moment functions $\gamma$, $\delta$, $\psi$
are unbounded (i.e., for case (ii) of the assumptions of Lemma~\ref{lem_bias_app}).

\bigskip

Using the intermediate  Lemmas~\ref{lemma:FOC} and \ref{lemma:ConvexDual} we can now show  
Lemma~\ref{lem_bias_app}, which contains Lemma \ref{lem_bias} as a special case.
In the following proofs we again drop the arguments $\beta$ and $\sigma_*$ everywhere for ease notation,
	and we write $\mathbb{E}_{*} $ and ${\rm Var}_{*} $ for expectations and variances under the reference density $P(\beta,f_{\sigma_*})$.
	We also continue to use the normalization $\phi'(1)=0$, which is without loss of generality, as explained above.

\begin{proof}[\bf Proof of Lemma~\ref{lem_bias_app}]
	\underline{\# Additional notation and definitions:}
	Let $\lambda \in \mathbb{R}^{\dim \psi}$ be as defined in the statement of the lemma, and furthermore define
	\begin{align*}
	\kappa &=   \left\{  \frac  
	{{\rm Var}_*\left[ \widetilde \gamma(Y,X)- \widetilde \delta(U,X)-\lambda' \, \widetilde \psi(Y,X) \right]} {2 \, \phi''(1)}    \right\}^{1/2}.
	\end{align*}
	For $s \in \{-1,+1\}$ and $\epsilon>0$, let
	\begin{align*}
	t(u,x|s)  &=   \lambda^{(1)}(s,x) + s \, \lambda^{(2)}(s)   \left[\gamma(g(u,x),x)- \delta(u,x)   \right] 
	+ \lambda^{(3) \, \prime} (s) \, \psi(g(u,x),x)  ,
	\end{align*}
	with
	\begin{align*}
	\lambda^{(1)}(s,x)
	&=    - \epsilon^{1/2}    \, s \,   \kappa^{-1} \,  \mathbb{E}_*\left[ \gamma(Y,X)-\delta(U,X) - \lambda' \, \psi(Y,X) \, \big| \, X=x \right]  
	\\ & \qquad \qquad 
	+  \epsilon\,  \left\{ \lambda^{(1)}_{\rm rem}(s,x) 
	   -    s \, \lambda^{(2)}_{\rm rem}(s)  \, \mathbb{E}_*\left[ \gamma(Y,X) - \delta(U,X)  - \lambda' \, \psi(Y,X) \, \big| \, X=x \right]  \right\},
	\\
	\lambda^{(2)}(s) &= 
	\epsilon^{1/2}   
	\kappa^{-1} 
	+  \epsilon \, \lambda^{(2)}_{\rm rem}(s) ,
	\\ 
	\lambda^{(3)}(s)
	&=    - \epsilon^{1/2}    \, s \,  \kappa^{-1}  \, \lambda   
	+  \epsilon \, \left[ \lambda^{(3)}_{\rm rem}(s)  -  s\,  \lambda^{(2)}_{\rm rem}(s) \, \lambda \right].
	\end{align*}
	Here, we are explicit about the leading order terms (of order $\epsilon^{1/2}$),
	but the higher order terms (of order $\epsilon$)
	contain the coefficients
	$\lambda^{(1)}_{\rm rem}(s) \in \mathbb{R}$, $\lambda^{(2)}_{\rm rem}(s) \in \mathbb{R}$,
	and $\lambda^{(3)}_{\rm rem}(s) \in \mathbb{R}^{\dim \psi}$, which will only be specified in \eqref{ConditionsRemainder} below.
	We can rewrite 
	\begin{align}
	t(u,x|s) =      \epsilon^{1/2}    t_{(0)}(u,x|s)   +         \epsilon   \,    t_{\rm rem}(u,x|s),
	\end{align}
	with
	\begin{align*}
	t_{(0)}(u,x|s) 
	&=  s\,  \kappa^{-1}  \, \left[ \widetilde \gamma(g(u,x),x)- \widetilde \delta(u,x)   - \lambda' \, \widetilde \psi(g(u,x),x) \right]  ,
	\\
	t_{\rm rem}(u,x|s) 
	&=  \lambda_{\rm rem}^{(1)}(s,x) + \lambda_{\rm rem}^{(2)}(s)  \,  \kappa \,  t_{(0)}(u,x|s)
	+ \lambda_{\rm rem}^{(3) \, \prime}(s) \, \psi(g(u,x),x) .
	\end{align*}
	Here, $ t(u,x|s)$, $ \lambda^{(1)}(s,x)$,  $\lambda^{(2)}(s)$, etc, also depend on $\epsilon$, but we do not make this dependence explicit 
	in our notation. Our goal is to apply Lemma~\ref{lemma:FOC} with $ t_{\beta,\sigma_*}(u,x|s)$ in the lemma equal to $ t(u,x|s)$
	as defined here. However, in order to apply that lemma we need to satisfy the conditions \eqref{RewriteConstraints}, which in current notation read
	\begin{align}
	\mathbb{E}_*   \,  \left\{ \rho[t(U,X|s)] \big| X=x  \right\} &= 1 ,
	&
	\mathbb{E}_*   \,  \phi\left\{ \rho\left[t(U,X|s)  \right] \right\}  &= \epsilon ,
	&
	\mathbb{E}_*   \, \left\{  \psi(Y,X)  \phantom{\Big|}  \rho\left[ t(U,X|s)  \right] \right\}  &= 0 .
	\label{RewriteConstraintsNEW}     
	\end{align}
	The definition of $t(u,x|s)$ above is already designed to satisfy \eqref{RewriteConstraintsNEW} to leading order in $\epsilon$,
	but we still need to find $  \lambda_{\rm rem}^{(1)}(s,x) $, $ \lambda_{\rm rem}^{(2)}(s) $, $\lambda_{\rm rem}^{(3)}(s)$ 
	such that \eqref{RewriteConstraintsNEW} holds exactly.
	Plugging the expansions \eqref{ExpansionRhoPhi}
	into \eqref{RewriteConstraintsNEW}, using the definition of $t(u,x|s)$,
	as well as  $\mathbb{E}_*\left[ t_{(0)}(U,X|s) \big| X=x  \right]  = 0$, $  \mathbb{E}_* \left\{ [ t_{(0)}(U,X|s) ]^2 \right\} = 2 \, \phi''(1)$, and $\mathbb{E}_*   \psi(Y,X)  \, t_{(0)}(U,X|s) = 0$, we obtain
	\begin{align*}
	\mathbb{E}_*  \left\{   \frac{\epsilon \, t_{\rm rem}(U,X|s)} { \phi''(1)}       + \left[ t(U,X|s)  \right]^2  R_1 \left[ t(U,X|s)  \right] \Bigg| X=x \right\} &= 0 ,
	\nonumber  \\
	\mathbb{E}_*   \left\{   
	\frac{2 \, \epsilon^{3/2} \,    t_{\rm rem}(U,X|s)  \, t_{(0)}(U,X|s) 
		+ \epsilon^2 \, [  t_{\rm rem}(U,X|s)]^2} {2 \phi''(1)}     + \left[ t(U,X|s)  \right]^3  R_2 \left[ t(U,X|s)  \right] \right\}  &= 0,
	\nonumber \\
	\mathbb{E}_*  \left\{   \frac{\epsilon \,  \psi(Y,X)  \, t_{\rm rem}(U,X|s)} { \phi''(1)}       + 
	\psi(Y,X)  \, \left[ t(U,X|s)  \right]^2  R_1 \left[ t(U,X|s)  \right] \right\} &= 0 . 
	\end{align*}
	Those conditions can be rewritten as follows
	\begin{align}
	\lambda_{\rm rem}^{(1)}(s,x) &= - \phi''(1) \, \mathbb{E}_*\left\{  \left[   t_{(0)}(U,X|s)   +         \epsilon^{1/2}       t_{\rm rem}(U,X|s) \right]^2   \,  R_1\left[ t(U,X|s) \right]  \Bigg| X=x  \right\} ,
	\nonumber  \\  
	\lambda_{\rm rem}^{(2)}(s)  
	&=  - \frac{1} {2 \, \kappa}  \, \mathbb{E}_*\left\{  
	\left[   t_{(0)}(U,X|s)   +         \epsilon^{1/2}       t_{\rm rem}(U,X|s) \right]^3  \,   R_2\left[ t(U,X|s) \right]
	+  \frac{ \epsilon^{1/2} \, [  t_{\rm rem}(U,X|s)]^2} {2 \phi''(1)} 
	\right\} ,
	\nonumber \\  
	\lambda_{\rm rem}^{(3)}(s) 
	&= - \phi''(1) \, 
	\left\{ \mathbb{E}_* \left[  \psi(Y,X) \,  \psi(Y,X)' \right] \right\}^{-1}
	\nonumber  \\ & \qquad    \qquad    \quad 
	\times   \mathbb{E}_*\left\{   \psi(Y,X) \,  \left[   t_{(0)}(U,X|s)   +         \epsilon^{1/2}       t_{\rm rem}(U,X|s) \right]^2     R_1\left[ t(U,X|s) \right]   \right\} .
	\label{ConditionsRemainder}   
	\end{align}
	Thus, as $\epsilon \rightarrow 0$ we have
	\begin{align}
	\lambda_{\rm rem}^{(1)}(s,x) &= - 2 [\phi''(1)]^2 \, R_1(0) \,
	 \frac{{\rm Var}_*\left[ \widetilde \gamma(Y,X)- \widetilde \delta(U,X)-\lambda' \, \widetilde \psi(Y,X)  \, \Big| \, X=x\right]}
	  {{\rm Var}_*\left[ \widetilde \gamma(Y,X)- \widetilde \delta(U,X)-\lambda' \, \widetilde \psi(Y,X) \right]} 
	  + {\cal O}(\epsilon^{1/2}) ,
	\nonumber  \\  
	\lambda_{\rm rem}^{(2)}(s)  
	&=  - \frac{1} {2 \, \kappa}  \, \mathbb{E}_*    \left[   t_{(0)}(U,X|s)    \right]^3  \,    R_2(0) + {\cal O}(\epsilon^{1/2})
	,
	\nonumber \\  
	\lambda_{\rm rem}^{(3)}(s) 
	&= - \phi''(1) \, 
	\left\{ \mathbb{E}_* \left[  \psi(Y,X) \,  \psi(Y,X)' \right] \right\}^{-1}
	\mathbb{E}_*\left\{   \psi(Y,X) \,  \left[   t_{(0)}(U,X|s)   \right]^2     \right\} \,   R_1(0)  + {\cal O}(\epsilon^{1/2}) .
	\label{ConditionsRemainderApprox}
	\end{align}
	Notice that $  \lambda_{\rm rem}^{(1)}(s,x) $, $ \lambda_{\rm rem}^{(2)}(s) $, $\lambda_{\rm rem}^{(3)}(s)$ also appear implicitly
	on the right-hand sides of the equations \eqref{ConditionsRemainder}, because $ t_{\rm rem}(u,x|s) $ depends on those parameters,
	and  \eqref{ConditionsRemainder} is therefore a system of equations for  $  \lambda_{\rm rem}^{(1)}(s,x) $, $ \lambda_{\rm rem}^{(2)}(s) $, $\lambda_{\rm rem}^{(3)}(s)$. Our assumptions guarantee that the system \eqref{ConditionsRemainder} has a solution
	for sufficiently small $\epsilon$, as will be explained below for the two different cases distinguished in the lemma.

	\bigskip
	\noindent
	\underline{\# Proof for case (i):}
	The assumptions for this case guarantee that   $t(u,x|s)$ is uniformly 
	bounded over $u$ and $x$.
	Part (i) of Lemma~\ref{lemma:ConvexDual} guarantees existence of $c_1>0$, $c_2>0$, $\eta>0$ such that
	for all $t \in [-\eta,\eta]$ we have  $\left| R_1(t) \right|  \leq c_1$    and $\left| R_2(t) \right|  \leq c_2$.
	For sufficiently small $\epsilon$ 
	we have $ t(u,x|s) \in [-\eta,\eta]$ for all $u$ and $x$, implying that as $\epsilon \rightarrow 0$
	there exists a solution of  \eqref{ConditionsRemainder} that satisfies \eqref{ConditionsRemainderApprox}, which in particular implies
	\begin{align}
	\sup_{x \in {\cal X}} \left| \lambda^{(1)}(s,x) \right| &=  {\cal O}(1),
	&
	\lambda^{(2)}(s) &=  {\cal O}(1),
	&
	\lambda^{(3)}(s) &=  {\cal O}(1) ,
	\label{BoundeLambda123}  
	\end{align}
	and by construction the conditions \eqref{RewriteConstraintsNEW}  are satisfied for that solution.
	Thus, for sufficiently small $\epsilon$ the $  t(u,x|s) $ defined above 
	satisfies the conditions  of Lemma~\ref{lemma:FOC}. Applying that lemma we thus obtain that,
	for sufficiently small $\epsilon$, we have
	\begin{align*}
	b_{\epsilon}(\gamma)
	&= \max_{s \in \{-1,1\}}
	\left\{  s \; \mathbb{E}_* \left[  \left[\gamma(Y,X)- \delta(U,X) \right] 
	\phantom{\Big|}
	\rho\left[ t(U,X|s)  \right] \right]
	\right\}  .
	\end{align*}
	Again applying the expansion for $\rho(t)$ in \eqref{ExpansionRhoPhi},
	and part (i) of Lemma~\ref{lemma:ConvexDual} we thus obtain that    
	\begin{align}
	& b_{\epsilon}(\gamma) 
	= 
	\max_{s \in \{-1,1\}}
	\left\{ s \,
	\mathbb{E}_{*} \left[\gamma(Y,X)- \delta(U,X) \right]   
	\right\}
	\nonumber  \\ & \qquad \qquad \qquad \qquad  
	+    
	\epsilon^{1/2}  \left\{ \frac{2} {\phi''(1)}
	{\rm Var}_*\left[ \widetilde  \gamma(Y,X)- \widetilde \delta(U,X)-\lambda' \, \widetilde \psi(Y,X) \right]
	\right\}^{1/2}
	+ {\cal O}(\epsilon)   
	\nonumber  \\
	&= 
	\left|  \mathbb{E}_{*} \left[\gamma(Y,X)- \delta(U,X) \right]  \right|
	+ \epsilon^{1/2}  \left\{ \frac{2} {\phi''(1)}
	{\rm Var}_*\left[ \widetilde \gamma(Y,X)- \widetilde \delta(U,X)-\lambda' \, \widetilde \psi(Y,X) \right]
	\right\}^{1/2}
	+ {\cal O}(\epsilon) .
	\label{FinalExpansionBias}       
	\end{align}
	This is what we wanted to show.

	\bigskip
	\noindent
	\underline{\# Proof for case (ii):}
	In this case, according to part (ii) of Lemma~\ref{lemma:ConvexDual}
	the functions $R_1(t)$ and $R_2(t)$ are continuous and bounded over all $t \in \mathbb{R}$.
	In addition, we have assumed that
	$\mathbb{E}_* \left| \gamma (Y,X) - \delta (U,X) \right|^3 < \infty$,
	and  $\mathbb{E}_*  \left| \psi (Y,X)  \right|^3 < \infty$, which guarantees that
	all of the  expectations in \eqref{ConditionsRemainder} are finite.
	We therefore again conclude that for small $\epsilon$
	the equations \eqref{ConditionsRemainder} have a solution such that
	\eqref{BoundeLambda123} holds. The remainder of the proof is equivalent to the proof of part (i),
	that is, we again apply Lemma~\ref{lemma:FOC} and Lemma~\ref{lemma:ConvexDual} to obtain
	\eqref{FinalExpansionBias}.
\end{proof}

\subsection{Proof of Theorem~\ref{theo_bias_app} (containing Theorem~\ref{theo_bias} as a special case)}
  
   \underline{\# Part (i):} 
      We first want to show that $b_{\epsilon}(\gamma_{\beta,\sigma_*}^{\rm P})\leq b_{\epsilon}(\gamma)+{\cal O}(\epsilon)$.
	By applying Lemma~\ref{lem_bias_app} to both  $\gamma_{\beta,\sigma_*}(y,x)$
	and $\gamma_{\beta,\sigma_*}^{\rm P}(y,x)=\mathbb{E}_{p_{\beta,\sigma_*}}[\delta_{\beta}(U,X)\,|\, Y=y,X=x]$
	we obtain, as $\epsilon\rightarrow 0$,
	\begin{align}
	b_{\epsilon}(\gamma)&=\left|\mathbb{E}_{P(\beta,f_{\sigma_*})}[\gamma_{{\beta},{\sigma}_{*}}(Y,X)]-\mathbb{E}_{f_{\sigma_*}}[\delta_{\beta}(U,X)]\right|
	\nonumber \\
	&\quad 
	+  \epsilon^{\frac{1}{2}}\left\{\frac{2}{\phi''(1)} 
	\mathbbm{E}_{P(\beta,f_{\sigma_{*}})}\left[\left( \widetilde  \gamma_{{\beta},{\sigma}_{*}}(Y,X)- \widetilde  \delta_{\beta}(U,X)
	-\lambda' \,  \widetilde  \psi_{\beta,\sigma_{*}}(Y,X)\right)^2\right]\right\}^{\frac{1}{2}}+{\cal O}(\epsilon),
	\nonumber \\
	b_{\epsilon}(\gamma^{\rm P})&= \epsilon^{\frac{1}{2}}\left\{\frac{2}{\phi''(1)}\mathbb{E}_{P(\beta,f_{\sigma_{*}})}\left[\left(\gamma^{\rm P}_{{\beta},{\sigma}_{*}}(Y,X)-\delta_{\beta}(U,X) \right)^2\right]\right\}^{\frac{1}{2}}+{\cal O}(\epsilon) ,
	   \label{BiasExpansionsProof}
	\end{align}
	where $$\lambda=\left\{\mathbb{E}_{P(\beta,f_{\sigma_{*}})}\left[ \widetilde \psi_{\beta,\sigma_{*}}(Y,X) \,  \widetilde  \psi_{\beta,\sigma_{*}}(Y,X)'\right]\right\}^{-1}\mathbb{E}_{P(\beta,f_{\sigma_{*}})}\left[\left(\gamma_{{\beta},{\sigma}_{*}}(Y,X)-\delta_{\beta}(U,X)\right) \, \widetilde    \psi_{\beta,\sigma_{*}}(Y,X)\right],
	$$
	Here, to simplify $b_{\epsilon}(\gamma^{\rm P})$
	we used that by the law of iterated expectations	
	we have that
	$\mathbb{E}_{P(\beta,f_{\sigma_*})}[\gamma^{\rm P}_{{\beta},{\sigma}_{*}}(Y,X)]-\mathbb{E}_{f_{\sigma_*}}[\delta_{\beta}(U,X)] = 0$
	(that is, the first term in $b_{\epsilon}(\gamma)$ is not present in $b_{\epsilon}(\gamma^{\rm P})$)
	and also $\mathbb{E}_{P(\beta,f_{\sigma_{*}})}\left[\left(\gamma^{\rm P}_{{\beta},{\sigma}_{*}}(Y,X)-\delta_{\beta}(U,X)\right) \widetilde \psi_{\beta,\sigma_{*}}(Y,X)\right] = 0$ (that is, the vector $\lambda$ is equal to zero for $\gamma^{\rm P}$).	We also use that under the reference model
	 $ \widetilde \gamma_{{\beta},{\sigma}_{*}}(Y,X)- \widetilde  \delta_{\beta}(U,X)-\lambda' \, \widetilde \psi_{\beta,\sigma_{*}}(Y,X)  $
	has zero mean, implying that its variance equals its second moment.
	
	For any $\gamma_{{\beta},{\sigma}_{*}}(y,x)$
	with 
	$\mathbb{E}_{P(\beta,f_{\sigma_*})}[\gamma_{{\beta},{\sigma}_{*}}(Y,X)]-\mathbb{E}_{f_{\sigma_*}}[\delta_{\beta}(U,X)] \neq 0$
	we have 
	$b_{\epsilon}(\gamma^{\rm P}) \leq b_{\epsilon}(\gamma) $ for sufficiently small $\epsilon$, 
	and the statement of the theorem thus holds in that case.
	In the following we therefore consider the case that 
	$\mathbb{E}_{P(\beta,f_{\sigma_*})}[\gamma_{{\beta},{\sigma}_{*}}(Y,X)]-\mathbb{E}_{f_{\sigma_*}}[\delta_{\beta}(U,X)] = 0$.
	The expression for $b_{\epsilon}(\gamma)$ then simplifies to
	\begin{align*}
	b_{\epsilon}(\gamma)&= \epsilon^{\frac{1}{2}}\left\{\frac{2}{\phi''(1)} 
	\mathbbm{E}_{P(\beta,f_{\sigma_{*}})}\left[\left( \widetilde  \gamma_{{\beta},{\sigma}_{*}}(Y,X)- \widetilde  \delta_{\beta}(U,X)
	-\lambda' \,  \widetilde  \psi_{\beta,\sigma_{*}}(Y,X)\right)^2\right]\right\}^{\frac{1}{2}}+{\cal O}(\epsilon).
	\end{align*}
      Again applying  the law of iterated expectations we find that
	\begin{align*}
	& \mathbb{E}_{P(\beta,f_{\sigma_{*}})}
	\left[ \widetilde  \gamma_{{\beta},{\sigma}_{*}}(Y,X)-  \widetilde  \delta_{\beta}(U,X)-\lambda' \,  \widetilde  \psi_{\beta,\sigma_{*}}(Y,X)\right]
	\left[\gamma^{\rm P}_{{\beta},{\sigma}_{*}}(Y,X)-\delta_{\beta}(U,X) \right]
	\\ & \qquad
	=           \mathbb{E}_{P(\beta,f_{\sigma_{*}})}
	\left[ -\delta_{\beta}(U,X) \right]
	\left[\gamma^{\rm P}_{{\beta},{\sigma}_{*}}(Y,X)-\delta_{\beta}(U,X) \right] 
	\\ & \qquad
	=           \mathbb{E}_{P(\beta,f_{\sigma_{*}})}
	\left[ \gamma^{\rm P}_{{\beta},{\sigma}_{*}}(Y,X) -\delta_{\beta}(U,X) \right]
	\left[\gamma^{\rm P}_{{\beta},{\sigma}_{*}}(Y,X)-\delta_{\beta}(U,X) \right] 
	\\ & \qquad
	=           \mathbb{E}_{P(\beta,f_{\sigma_{*}})}
	\left[\gamma^{\rm P}_{{\beta},{\sigma}_{*}}(Y,X)-\delta_{\beta}(U,X) \right]^2 .
	\end{align*}
	Using this we obtain
	\begin{align}
	&  \mathbb{E}_{P(\beta,f_{\sigma_{*}})}
	\left\{ \left[  \widetilde  \gamma_{{\beta},{\sigma}_{*}}(Y,X)-  \widetilde  \delta_{\beta}(U,X)-\lambda' \,  \widetilde  \psi_{\beta,\sigma_{*}}(Y,X)\right]
	- \left[ \gamma^{\rm P}_{{\beta},{\sigma}_{*}}(Y,X)-\delta_{\beta}(U,X) \right]
	\right\}^2
	\nonumber \\
	&=  \mathbb{E}_{P(\beta,f_{\sigma_{*}})} \left[ \widetilde  \gamma_{{\beta},{\sigma}_{*}}(Y,X)- \widetilde  \delta_{\beta}(U,X)-\lambda' \,  \widetilde \psi_{\beta,\sigma_{*}}(Y,X)\right]^2
	+  \mathbb{E}_{P(\beta,f_{\sigma_{*}})} \left[\gamma^{\rm P}_{{\beta},{\sigma}_{*}}(Y,X)-\delta_{\beta}(U,X) \right]^2
	\nonumber \\
	& \quad -2 \, \mathbb{E}_{P(\beta,f_{\sigma_{*}})} 
	\left[ \widetilde  \gamma_{{\beta},{\sigma}_{*}}(Y,X)- \widetilde  \delta_{\beta}(U,X)-\lambda' \,  \widetilde  \psi_{\beta,\sigma_{*}}(Y,X)\right]
	\left[\gamma^{\rm P}_{{\beta},{\sigma}_{*}}(Y,X)-\delta_{\beta}(U,X) \right]
	\nonumber \\
	&= 
	\mathbb{E}_{P(\beta,f_{\sigma_{*}})} \left[ \widetilde  \gamma_{{\beta},{\sigma}_{*}}(Y,X)- \widetilde  \delta_{\beta}(U,X)
	-\lambda' \,  \widetilde  \psi_{\beta,\sigma_{*}}(Y,X)\right]^2
	-  \mathbb{E}_{P(\beta,f_{\sigma_{*}})} \left[\gamma^{\rm P}_{{\beta},{\sigma}_{*}}(Y,X)-\delta_{\beta}(U,X) \right]^2 .
	\label{MainConclusionLocalProof}
	\end{align}
	Since $ \mathbb{E}_{P(\beta,f_{\sigma_{*}})}
	\left\{ \left[  \widetilde  \gamma_{{\beta},{\sigma}_{*}}(Y,X)-  \widetilde  \delta_{\beta}(U,X)-\lambda' \,  \widetilde  \psi_{\beta,\sigma_{*}}(Y,X)\right]
	- \left[ \gamma^{\rm P}_{{\beta},{\sigma}_{*}}(Y,X)-\delta_{\beta}(U,X) \right]
	\right\}^2 \geq 0$ we thus conclude that
	\begin{align*}
	\mathbb{E}_{P(\beta,f_{\sigma_{*}})} \left[ \widetilde  \gamma_{{\beta},{\sigma}_{*}}(Y,X)-  \widetilde  \delta_{\beta}(U,X)-\lambda' \,  \widetilde \psi_{\beta,\sigma_{*}}(Y,X)\right]^2
	\geq  \mathbb{E}_{P(\beta,f_{\sigma_{*}})} \left[\gamma^{\rm P}_{{\beta},{\sigma}_{*}}(Y,X)-\delta_{\beta}(U,X) \right]^2 ,
	\end{align*}
        and therefore we obtain that
	$$b_{\epsilon}(\gamma_{\beta,\sigma_*}^{\rm P})\leq b_{\epsilon}(\gamma)+{\cal O}(\epsilon).$$
	This is the first statement of the theorem. This concludes the proof of part (i) of  Theorem~\ref{theo_bias_app}, of which Theorem~\ref{theo_bias} in the main text is a special case.

\medskip

   \underline{\# Part (ii):} 
   Next, let $\gamma_{\beta,\sigma_*}(y,x)$ be such that
   \begin{align}    
       b_{\epsilon}(\gamma) = b_{\epsilon}(\gamma_{\beta,\sigma_*}^{\rm P})+ o(\epsilon^{1/2}).
       \label{eq:ConditionTheoremMainII}
   \end{align}    
   Then, the specification error expansions in \eqref{BiasExpansionsProof} are still valid, and using those we conclude that we must have
   \begin{align}
      \mathbb{E}_{P(\beta,f_{\sigma_*})}[\gamma_{{\beta},{\sigma}_{*}}(Y,X) - \delta_{\beta}(U,X)] 
        &= o(1) ,
        \label{FirstImplicationBias}
   \end{align}
   because otherwise that term dominates all other terms in \eqref{eq:ConditionTheoremMainII}.
    We also conclude that we must have
   \begin{align*}
    &   \mathbbm{E}_{P(\beta,f_{\sigma_{*}})}\left[\left( \widetilde  \gamma_{{\beta},{\sigma}_{*}}(Y,X)- \widetilde  \delta_{\beta}(U,X)
	-\lambda' \,  \widetilde  \psi_{\beta,\sigma_{*}}(Y,X)\right)^2\right]
     \\
     & \qquad  \qquad  \qquad
   \leq  \mathbb{E}_{P(\beta,f_{\sigma_{*}})}\left[\left(\gamma^{\rm P}_{{\beta},{\sigma}_{*}}(Y,X)-\delta_{\beta}(U,X) \right)^2\right]
      + 	o(1) 
   \end{align*}
   for \eqref{eq:ConditionTheoremMainII} to hold.
   Furthermore,
   the  calculation in \eqref{MainConclusionLocalProof} is still valid here, and the inequality in the last display can therefore 
   equivalently be rewritten as
   \begin{align*}
          \mathbb{E}_{P(\beta,f_{\sigma_{*}})}
	\left\{ \left[  \widetilde  \gamma_{{\beta},{\sigma}_{*}}(Y,X)-  \widetilde  \delta_{\beta}(U,X)-\lambda' \,  \widetilde  \psi_{\beta,\sigma_{*}}(Y,X)\right]
	- \left[ \gamma^{\rm P}_{{\beta},{\sigma}_{*}}(Y,X)-\delta_{\beta}(U,X) \right]
	\right\}^2 
	=  o(1) ,
   \end{align*}
   where we write $=$ instead of $\leq$, because the left hand side expression is non-negative.   
   Applying Markov's inequality we thus find that
   \begin{align*}
         \widetilde  \gamma_{{\beta},{\sigma}_{*}}(Y,X)-  \widetilde  \delta_{\beta}(U,X)-\lambda' \,  \widetilde  \psi_{\beta,\sigma_{*}}(Y,X)
	&=   \gamma^{\rm P}_{{\beta},{\sigma}_{*}}(Y,X)-\delta_{\beta}(U,X)  +   o_{P(\beta,f_{\sigma_{*}})}(1)  .
   \end{align*}   
   Defining 
   \begin{align*}
    \omega(x) &:=  \mathbb{E}_{P(\beta,f_{\sigma_*})} \left[   \gamma_{{\beta},{\sigma}_{*}}(Y,X)-    \delta_{\beta}(U,X) 
    -\lambda' \,   \psi_{\beta,\sigma_{*}}(Y,X)   \big| X=x \right] 
    \\
     & \qquad  \qquad \qquad
    -  \mathbb{E}_{P(\beta,f_{\sigma_*})} \left[   \gamma_{{\beta},{\sigma}_{*}}(Y,X)-    \delta_{\beta}(U,X) 
    -\lambda' \,   \psi_{\beta,\sigma_{*}}(Y,X)  \right], 
    \end{align*}
    we therefore obtain   
   \begin{align*}
          \gamma_{{\beta},{\sigma}_{*}}(Y,X) &=   \gamma^{\rm P}_{{\beta},{\sigma}_{*}}(Y,X)   + \omega(X) + \lambda' \ \psi_{\beta,\sigma_{*}}(Y,X)
    \\ & \quad
        + \mathbb{E}_{P(\beta,f_{\sigma_*})} \left[   \gamma_{{\beta},{\sigma}_{*}}(Y,X)-    \delta_{\beta}(U,X) 
    - \lambda' \,   \psi_{\beta,\sigma_{*}}(Y,X)  \right] 
       + o_{P(\beta,f_{\sigma_{*}})}(1)
      \\
       &=    \gamma^{\rm P}_{{\beta},{\sigma}_{*}}(Y,X)   + \omega(X) + \lambda' \ \psi_{\beta,\sigma_{*}}(Y,X)     +   o_{P(\beta,f_{\sigma_{*}})}(1)   ,
   \end{align*}
   where in the last step we have used  \eqref{FirstImplicationBias} and $\mathbb{E}_{P(\beta,f_{\sigma_{*}})} [\psi_{\beta,\sigma_{*}}(Y,X)]=0$.
   Finally, notice that by construction we have
   \begin{align*}
        \mathbb{E}_{f_X} [\omega(X)] = 0 .
   \end{align*}

\subsection{Proof of Theorem~\ref{theo_bias0}}

 We are going to show Theorem \ref{theo_bias0}, which we restate here.

\begin{theorem*}
	Assume that $ \mathbb{E}_{P(\beta,f_{\sigma_*})}  \psi_{{\beta},{\sigma}_{*}}(Y,X) = 0$,  $\phi(t)=\frac 1 2 (t-1)^2$,
	and that $\gamma_{{\beta},{\sigma}_{*}}(Y,X)$ and $\delta_{{\beta}}(U,X)$   have finite second moments under the reference model.
	Then, for $0< \epsilon \leq \overline \epsilon$, we have
	$$b_{\epsilon}(\gamma_{\beta,\sigma_*}^{\rm P})\leq b_{\epsilon}(\gamma) ,$$
	with $\gamma_{\beta,\sigma_*}^{\rm P}(y,x)$ given by \eqref{eq_gamma_P_app} and $\overline{\epsilon}$ given by (\ref{MaxEpsilon}). 
\end{theorem*}

\begin{proof}[\bf Proof of Theorem~\ref{theo_bias0}]

In the following proof, we again
omit the arguments $\beta$, $\sigma_*$. By defining $Q(\gamma,f_0):= \mathbb{E}_{P(\beta,f_0)}[\gamma_{{\beta},{\sigma_*}}(Y,X)]-\mathbb{E}_{f_0}[\delta_{\beta}(U,X)] $
we can rewrite \eqref{DefBias} as
\begin{align*}
b_{\epsilon}(\gamma)=\sup_{f_0\in\Gamma_{\epsilon}}\, \left| Q(\gamma,f_0) \right|.
\end{align*}
Using that $\phi(t)=\frac 1 2 (t-1)^2$, we find that
the two worst-case distributions that maximize ($s=+$) and minimize ($s=-$) the function $Q(\gamma^{\rm P},f_0)$
over $f_0\in\Gamma_{\epsilon}$ are given by
\begin{align}
   f^{(s)}_0(u|x) = f_{\sigma_*}(u|x) \left[ 1+   s \,   (2\epsilon)^{1/2}   \,
	\frac{  \gamma^{\rm P}(g(u,x),x)-   \delta(u,x)   }  {\left\{ {\rm Var}_*\left[ \gamma^{\rm P}(Y,X)-   \delta(U,X)  \right] \right\}^{1/2}} \right] .
   \label{WorstCaseF0}	
\end{align}    
Notice that our condition $ \epsilon \leq \overline \epsilon$
guarantees that $ f^{(s)}_0(u|x) \geq 0$ for all $u$, $x$, $s$.

We refer to Lemma~\ref{lemma:FOC} above for a more general derivation of the worst-case distribution \eqref{WorstCaseF0}.
The results of Lemma~\ref{lemma:FOC} simplifies here,
because we assume that $\phi(t)=\frac 1 2 (t-1)^2$ in the current theorem, which implies that the function $\rho(t)$
   defined in \eqref{DefRho} is now given by $\rho(t)=\max(0,1+t) $, and we are also only interested in the worst-case
   distribution $  f^{(s)}_0(u|x) $ at $\gamma=\gamma^{\rm P}$.
Using \eqref{WorstCaseF0}  we find that
\begin{align}
    b_{\epsilon}(\gamma^{\rm P}) &=  Q\left(\gamma, f^{(+)}_0 r\right)  = - \, Q\left(\gamma, f^{(-)}_0 r\right)
    =  \left\{ 2 \, \epsilon\,  {\rm Var}_*\left[ \gamma^{\rm P}(Y,X)-   \delta(U,X)  \right] \right\}^{1/2} .
    \label{ResultBepsilon}
\end{align}
For a given $\gamma(y,x)$ we set
$$
     s_\gamma \, :=  \left\{ \begin{array}{ll}
         +   & \text{if }    \mathbb{E}_* [ \gamma(Y,X)-\delta(U,X)]  \geq 0, \\
         -   & \text{otherwise.}
     \end{array} \right. 
$$
Using that  $\mathbb{E}_*[\gamma^{\rm P}(Y,X) - \delta(U,X)\,|\, Y=y,X=x] = 0$,
we then find that
\begin{align}
    \mathbb{E}_* 
	\left\{  \gamma(Y,X)  
	\left[  \gamma^{\rm P}(Y,X)  -   \delta(U,X)  \right] \right\}
  =  \mathbb{E}_* 
	\left\{ \gamma^{\rm P}(Y,X)  
	\left[  \gamma^{\rm P}(Y,X)  -   \delta(U,X)  \right] \right\}	= 0.
   \label{ZeroMeanGamma}	
\end{align}
We  now  calculate
\begin{align*}
  b_{\epsilon}( \gamma) 
 &= \sup_{f_0\in\Gamma_{\epsilon}}\, \left| Q(\gamma,f_0) \right|  
 \\
 &\geq  \left| Q\left( \gamma,f^{(s_\gamma)}_0 \right) \right| 
\\ 
    &  =  \left| \mathbb{E}_*   [ \gamma(Y,X)-\delta(U,X)]   +
    s_\gamma \,   (2\epsilon)^{1/2}   \,
	\frac{\mathbb{E}_* 
	\left\{ \left[  \gamma(Y,X)  -   \delta(U,X)  \right] 
	\left[  \gamma^{\rm P}(Y,X)  -   \delta(U,X)  \right] \right\}  }  {\left\{ {\rm Var}_*\left[ \gamma^{\rm P}(Y,X)-   \delta(U,X)  \right] \right\}^{1/2}}
      \right| 
\\ 
    &  =  \left| \mathbb{E}_*   [ \gamma(Y,X)-\delta(U,X)]   +
    s_\gamma \,   (2\epsilon)^{1/2}   \,
	\frac{\mathbb{E}_* 
	\left\{ \left[ \gamma^{\rm P}(Y,X)  -   \delta(U,X)  \right] 
	\left[  \gamma^{\rm P}(Y,X)  -   \delta(U,X)  \right] \right\}  }  {\left\{ {\rm Var}_*\left[ \gamma^{\rm P}(Y,X)-   \delta(U,X)  \right] \right\}^{1/2}}
      \right| 
 \\
    &=      \left| \mathbb{E}_*   [ \gamma(Y,X)-\delta(U,X)]   +
    s_\gamma \,   b_{\epsilon}(\gamma^{\rm P}) 
      \right| 
  \\
    &=        \left| \mathbb{E}_*  [ \gamma(Y,X)-\delta(U,X)]  \right|
    + b_{\epsilon}(\gamma^{\rm P})  
  \\
    &\geq     b_{\epsilon}(\gamma^{\rm P})  ,
\end{align*}
where the first step is the definition of $b_{\epsilon}( \gamma) $,
the second step is a property of the supremum, 
the third step uses \eqref{WorstCaseF0},
the fourth step uses \eqref{ZeroMeanGamma}, 
the fifth step uses \eqref{ResultBepsilon}, and the sixth step uses that $ \mathbb{E}_*  [ \gamma(Y,X)-\delta(U,X)] $
and $ s_\gamma$ have the same sign. The result in the last display is exactly the statement of the theorem.

\end{proof}
   
   \subsection{Proof of Theorem \ref{theo_global}\label{App_global}}
   
   We are going to show Theorem \ref{theo_global}, which we restate here.

   \begin{theorem*} %
   	Let $\gamma_{\beta,\sigma_*}^{\rm P}$ as in (\ref{eq_gamma_P_app}).
   	Then, for all $\epsilon>0$,
   	$$ b_{\epsilon}(\gamma_{\beta,\sigma_*}^{\rm P})\leq 2 \,  \limfunc{inf}_{\gamma}\, b_{\epsilon}(\gamma_{\beta,\sigma_*}).$$
   \end{theorem*}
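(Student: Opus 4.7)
The plan is to establish the factor-of-two bound by combining a triangle-inequality decomposition with a push-forward construction that stays inside $\Gamma_\epsilon$. For any candidate $\gamma$, I begin from the elementary identity
\[
\gamma^{\rm P}(y,x) - \delta(u,x) = [\gamma(y,x) - \delta(u,x)] - [\gamma(y,x) - \gamma^{\rm P}(y,x)],
\]
take absolute values, and pass to the supremum over $f_0 \in \Gamma_\epsilon$ to obtain
\[
b_\epsilon(\gamma^{\rm P}) \;\le\; b_\epsilon(\gamma) \;+\; \sup_{f_0 \in \Gamma_\epsilon} \bigl| \mathbb{E}_{P(\beta,f_0)}[\gamma(Y,X) - \gamma^{\rm P}(Y,X)]\bigr|.
\]
The whole proof reduces to bounding the second term by $b_\epsilon(\gamma)$ as well; infimizing over $\gamma$ then produces the factor~$2$.

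To control the residual term, I introduce the push-forward distribution
\[
\widetilde f_0(u\mid x) \;:=\; \int p_{\beta,\sigma_*}(u \mid y{=}g_\beta(v,x),x)\, f_0(v\mid x)\, dv.
\]
Because $p_{\beta,\sigma_*}(\,\cdot\mid y,x)$ is supported on $\{u : g_\beta(u,x)=y\}$, for any function of $(Y,X)$ alone one has $\mathbb{E}_{P(\beta,\widetilde f_0)}[h(Y,X)] = \mathbb{E}_{P(\beta,f_0)}[h(Y,X)]$; applied to $h = \gamma$ this gives $\mathbb{E}_{P(\beta,\widetilde f_0)}[\gamma(Y,X)] = \mathbb{E}_{P(\beta,f_0)}[\gamma(Y,X)]$. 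A direct exchange of integrals, using the definition of $\gamma^{\rm P}$, also yields $\mathbb{E}_{P(\beta,\widetilde f_0)}[\delta(U,X)] = \mathbb{E}_{P(\beta,f_0)}[\gamma^{\rm P}(Y,X)]$. Subtracting these two identities gives
\[
\mathbb{E}_{P(\beta,f_0)}[\gamma(Y,X) - \gamma^{\rm P}(Y,X)] \;=\; \mathbb{E}_{P(\beta,\widetilde f_0)}[\gamma(Y,X) - \delta(U,X)].
\]
If one can show $\widetilde f_0 \in \Gamma_\epsilon$ whenever $f_0 \in \Gamma_\epsilon$, then the supremum of the right-hand side over $f_0$ is at most $b_\epsilon(\gamma)$, which is exactly what is needed.

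The main obstacle is this $\Gamma_\epsilon$-invariance, i.e.\ preservation of both the moment constraint and the $\phi$-divergence bound under $f_0 \mapsto \widetilde f_0$. The moment constraint comes for free: since $\psi(Y,X)$ depends only on $(Y,X)$, the identity above gives $\mathbb{E}_{P(\beta,\widetilde f_0)}[\psi(Y,X)] = \mathbb{E}_{P(\beta,f_0)}[\psi(Y,X)] = 0$. The substantive step is the divergence inequality $d(\widetilde f_0, f_{\sigma_*}) \le d(f_0, f_{\sigma_*})$. The clean route is to apply Bayes' rule to the formulas for $p_{\beta,\sigma_*}(u \mid y,x)$ and rewrite the likelihood ratio as
\[
\frac{\widetilde f_0(u\mid x)}{f_{\sigma_*}(u\mid x)} \;=\; \int p_{\beta,\sigma_*}(v\mid y{=}g_\beta(u,x),x)\, \frac{f_0(v\mid x)}{f_{\sigma_*}(v\mid x)}\, dv,
\]
so that $\widetilde f_0 / f_{\sigma_*}$ at $u$ is a conditional expectation of $f_0/f_{\sigma_*}$ under the reference posterior at $y = g_\beta(u,x)$. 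Jensen's inequality for the convex $\phi$ produces a pointwise bound on $\phi(\widetilde f_0/f_{\sigma_*})$; integrating against $f_{\sigma_*}(u\mid x) f_X(x)$ and invoking the marginalization identity $\int p_{\beta,\sigma_*}(v\mid y{=}g_\beta(u,x),x)\, f_{\sigma_*}(u\mid x)\, du = f_{\sigma_*}(v\mid x)$ (a one-line Bayes computation) collapses the result to $d(f_0, f_{\sigma_*}) \le \epsilon$. Putting the pieces together gives $b_\epsilon(\gamma^{\rm P}) \le 2\, b_\epsilon(\gamma)$ for every $\gamma$, and taking the infimum over $\gamma$ proves the theorem.
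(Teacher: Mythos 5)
Your proposal is correct and follows essentially the same route as the paper's proof: the same triangle-inequality decomposition, the same push-forward distribution $\widetilde f_0$, and the same Jensen/marginalization argument showing $\widetilde f_0 \in \Gamma_\epsilon$ (the paper packages this last step as Lemma~\ref{lemma:SupremumBoundConvex}, with your reference-posterior kernel appearing there as $K_{\beta,\sigma_*}$). The only difference is presentational: you carry out the key invariance step inline rather than isolating it as a separate lemma.
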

   \noindent
   The following lemma is useful for the proof of this theorem  (Theorem \ref{theo_global} in the main text).
   
   \begin{lemma}
   	\label{lemma:SupremumBoundConvex}
   	Let $\epsilon \geq 0$, $\beta \in {\cal B}$, $\sigma_* \in {\cal S}$,
   	and let $\zeta : {\cal U} \times {\cal X} \rightarrow \mathbb{R}$.
   	Then we have
   	$$
   	\sup_{f_0\in\Gamma_{\epsilon}} 
   	\left|
   	\mathbb{E}_{P(\beta,f_0)} \left\{
   	\mathbb{E}_{p_{\beta,\sigma_*}} \left[ \zeta(U,X) \,|\, Y,X \right] 
   	\right\} \right|
   	\leq 
   	\sup_{f_0\in\Gamma_{\epsilon}}  \left| \mathbb{E}_{P(\beta,f_0)} \left[ \zeta(U,X) \right] \right| .
   	$$
   \end{lemma}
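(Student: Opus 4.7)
The plan is to show that the linear map $T: f_0 \mapsto \widetilde f_0$ defined by
$$\widetilde f_0(u\,|\,x) \;:=\; \int_{\mathcal U} p_{\beta,\sigma_*}\bigl(u \,\big|\, y = g_\beta(v,x),\, x\bigr)\, f_0(v\,|\,x)\, dv$$
sends $\Gamma_{\epsilon}$ into itself, and that the inner conditional expectation on the left-hand side of the lemma is precisely $\mathbb{E}_{P(\beta,\widetilde f_0)}[\zeta(U,X)]$. Once both facts are in place, the left-hand supremum is bounded by the supremum on the right, since it ranges over a subset of $\Gamma_\epsilon$.

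First, I would establish the representation
$$\mathbb{E}_{P(\beta,f_0)}\Bigl\{\mathbb{E}_{p_{\beta,\sigma_*}}\bigl[\zeta(U,X)\,|\,Y,X\bigr]\Bigr\} \;=\; \mathbb{E}_{P(\beta,\widetilde f_0)}\bigl[\zeta(U,X)\bigr]$$
by a Fubini argument: averaging $\zeta(U,X)$ over $U$ under $p_{\beta,\sigma_*}(\cdot\,|\,Y,X)$ with $Y$ drawn from the $f_0$-induced $Y\,|\,X$ marginal is the same as averaging under $\widetilde f_0$. This is essentially the definition of $\widetilde f_0$.

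Next, I need to verify $\widetilde f_0\in\Gamma_\epsilon$. That $\widetilde f_0(\cdot\,|\,x)$ integrates to one is immediate. The key identity, obtained by plugging the explicit posterior formula \eqref{eq_posterior} into the definition of $\widetilde f_0$, is
$$\frac{\widetilde f_0(u\,|\,x)}{f_{\sigma_*}(u\,|\,x)} \;=\; \frac{f_0^Y\!\bigl(g_\beta(u,x)\,|\,x\bigr)}{f_{\sigma_*}^Y\!\bigl(g_\beta(u,x)\,|\,x\bigr)} \;=\; \mathbb{E}_{p_{\beta,\sigma_*}(V|y=g_\beta(u,x),x)}\!\left[\frac{f_0(V\,|\,x)}{f_{\sigma_*}(V\,|\,x)}\right],$$
where $f_0^Y(\cdot\,|\,x)$ and $f_{\sigma_*}^Y(\cdot\,|\,x)$ denote the induced $Y\,|\,X$ marginals. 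From the first equality, the $Y\,|\,X$ marginal under $(\beta,\widetilde f_0)$ coincides with that under $(\beta,f_0)$, so the moment constraint $\mathbb{E}_{P(\beta,\cdot)}[\psi_{\beta,\sigma_*}(Y,X)]=0$ is preserved. From the second equality, Jensen's inequality (using convexity of $\phi$) followed by the tower property $\int p_{\beta,\sigma_*}(\cdot\,|\,y,x)\, f_{\sigma_*}^Y(y\,|\,x)\,dy = f_{\sigma_*}(\cdot\,|\,x)$ yields the data-processing inequality
$$d(\widetilde f_0, f_{\sigma_*}) \;=\; \int\!\!\int \phi\!\left(\frac{f_0^Y(y|x)}{f_{\sigma_*}^Y(y|x)}\right) f_{\sigma_*}^Y(y|x)\, f_X(x)\, dy\, dx \;\leq\; d(f_0, f_{\sigma_*}) \;\leq\; \epsilon,$$
after rewriting the $u$-integral as a $y$-integral using the identity above. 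This gives $\widetilde f_0 \in \Gamma_\epsilon$ and completes the argument.

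The only nontrivial step is Step~3, the data-processing inequality; once the ratio identity is established, Jensen plus tower is routine. No measurability subtleties beyond what is already implicit in the paper should arise, since $\widetilde f_0$ is defined by an explicit integral of the given conditional densities.
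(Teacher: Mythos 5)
Your proposal is correct and is essentially the paper's own argument: you construct the same smoothed distribution $\widetilde f_0(u\,|\,x)=\int_{\cal U} p_{\beta,\sigma_*}(u\,|\,g_\beta(v,x),x)\,f_0(v\,|\,x)\,dv$, show the left-hand expectation equals $\mathbb{E}_{P(\beta,\widetilde f_0)}[\zeta(U,X)]$, verify $\widetilde f_0\in\Gamma_\epsilon$ via the preserved moment condition and a Jensen-plus-tower bound $d(\widetilde f_0,f_{\sigma_*})\leq d(f_0,f_{\sigma_*})\leq\epsilon$, and conclude by the subset argument, exactly as in the paper's Lemma~\ref{lemma:SupremumBoundConvex}. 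The only difference is presentational: you route the Jensen step through the identity $\widetilde f_0/f_{\sigma_*}=(f_0^Y/f_{\sigma_*}^Y)\circ g_\beta$ (making the data-processing interpretation and the exact preservation of the $Y\,|\,X$ law explicit), whereas the paper applies Jensen directly in $U$-space to the Markov kernel $K_{\beta,\sigma_*}(\tilde u\,|\,u,x)$ -- the two computations are equivalent at the paper's level of rigor.
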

   \noindent
   The proof of this lemma is given in Section~\ref{sec:ProofTechnical}. Notice that both Theorem~\ref{theo_global}
   and Lemma~\ref{lemma:SupremumBoundConvex} require that $\phi(r)$ is convex with $\phi(1)=0$,
   but they do not require $\phi''(1)>0$. For example, $\phi(r) = |r-1|/2$ is allowed here, which gives the
   total variation distance for $d(f_0,f_{\sigma_{*}})$.

   \begin{proof}[\bf Proof of Theorem~\ref{theo_global}]
   	By definition we have 
   	\begin{align*}
   	b_{\epsilon}(\gamma)&=\sup_{f_0\in\Gamma_{\epsilon}}\,
   	\left|\mathbb{E}_{P(\beta,f_0)}[\gamma_{{\beta},{\sigma}_{*}}(Y,X) - \delta_{\beta}(U,X)]\right|,
   	\\
   	b_{\epsilon}(\gamma^{\rm P})&=\sup_{f_0\in\Gamma_{\epsilon}}\,
   	\left|\mathbb{E}_{P(\beta,f_0)}[\gamma^{\rm P}_{{\beta},{\sigma}_{*}}(Y,X) - \delta_{\beta}(U,X)]\right|.
   	\end{align*} 
   	By writing 
   	$\gamma^{\rm P}_{{\beta},{\sigma}_{*}}(Y,X) - \delta_{\beta}(U,X) 
   	= \gamma_{{\beta},{\sigma}_{*}}(Y,X) - \delta_{\beta}(U,X)
   	- \left[ \gamma_{{\beta},{\sigma}_{*}}(Y,X) -  \gamma^{\rm P}_{{\beta},{\sigma}_{*}}(Y,X) \right] $
   	we obtain
   	\begin{align*}
   	b_{\epsilon}(\gamma^{\rm P})&=\sup_{f_0\in\Gamma_{\epsilon}}\,
   	\left|
   	\mathbb{E}_{P(\beta,f_0)} \left[  \gamma_{{\beta},{\sigma}_{*}}(Y,X) - \delta_{\beta}(U,X)  \right]
   	- \mathbb{E}_{P(\beta,f_0)} \left[  \gamma_{{\beta},{\sigma}_{*}}(Y,X) - \gamma^{\rm P}_{{\beta},{\sigma}_{*}}(Y,X) \right]
   	\right|
   	\\
   	&\leq  b_{\epsilon}(\gamma)
   	+
   	\sup_{f_0\in\Gamma_{\epsilon}} 
   	\left|\mathbb{E}_{P(\beta,f_0)} \left[  \gamma_{{\beta},{\sigma}_{*}}(Y,X) - \gamma^{\rm P}_{{\beta},{\sigma}_{*}}(Y,X) \right] \right| 
   	\\
   	&=     b_{\epsilon}(\gamma)
   	+
   	\sup_{f_0\in\Gamma_{\epsilon}} 
   	\left|
   	\mathbb{E}_{P(\beta,f_0)} \left\{
   	\mathbb{E}_{p_{\beta,\sigma_*}} \left[ \gamma_{{\beta},{\sigma}_{*}}(g_\beta(U,X),X) - \delta_{\beta}(U,X)\,|\, Y,X \right] 
   	\right\} \right|
   	\\
   	&\leq
   	b_{\epsilon}(\gamma)
   	+ \sup_{f_0\in\Gamma_{\epsilon}}\,
   	\left|\mathbb{E}_{P(\beta,f_0)}[\gamma_{{\beta},{\sigma}_{*}}(Y,X) - \delta_{\beta}(U,X)]\right|
   	= 2 \, b_{\epsilon}(\gamma),
   	\end{align*}  
   	where in the second-to-last step we have used Lemma~\ref{lemma:SupremumBoundConvex}
   	with $ \zeta(u,x) = \gamma_{{\beta},{\sigma}_{*}}(g_\beta(u,x),x) - \delta_{\beta}(u,x)$.
   	We have thus shown that  $   b_{\epsilon}(\gamma^{\rm P}) \leq 2 \, b_{\epsilon}(\gamma)$
   	holds for any function $\gamma_{\beta,\sigma_*}(y,x)$, which implies that
   	$$ b_{\epsilon}(\gamma^{\rm P})\leq 2 \,  \limfunc{inf}_{\gamma}\, b_{\epsilon}(\gamma).$$ 
   \end{proof}

\section{Proofs of Technical Lemmas}
\label{sec:ProofTechnical}

\begin{proof}[\bf Proof of Lemma~\ref{lemma:FOC}]
          In the following we assume that  $f_{\sigma_*}(u|x) f_X(x) > 0$ for all  $(u,x)$ in the joint domain of $(U,X)$. This is without
          loss of generality, because we can define the joint domain of $(U,X)$ such that this is the case. With a slight abuse
          of notation we continue to write ${\cal U} \times {\cal X}$ for the joint domain, even though this need not be a product set.

	To account for the absolute value in the definition of $ b_{\epsilon}(\gamma)$
	in \eqref{DefBias} we let
	\begin{align*}
	b_{\epsilon}(\gamma,s)=\limfunc{sup}_{f_0\in\Gamma_{\epsilon}}
	\left\{ s  \, \mathbb{E}_{P(\beta,f_0)} \left[ \gamma_{{\beta},{\sigma}_{*}}(Y,X) - \delta_{\beta}(U,X) \right] \right\},
	\end{align*}
	for  $s \in \{-1,1\}$.
	We then have $b_{\epsilon}(\gamma) =  \max_{s \in \{-1,1\}}  b_{\epsilon}(\gamma,s) $.
	In the following we drop the arguments $\beta$ and $\sigma_*$ everywhere,  that is,
	we simply write $g(u,x)$, $\gamma(y,x)$, $\delta(u,x)$, $f_*(u|x)$, $\psi(y,x)$, $\lambda^{(1)}(s,x)$, $\lambda^{(2)}(s)$, $\lambda^{(3)}(s)$
	instead of $g_\beta(u,x)$, $\gamma_{\beta,\sigma_*}(y,x)$,
	$\delta_{\beta}(u,x)$, $f_{\sigma_*}(u|x)$, $\psi_{\beta,\sigma_*}(y,x)$, 
	$\lambda^{(1)}_{\beta,\sigma_*}(s)$, $\lambda^{(2)}_{\beta,\sigma_*}(s)$, $\lambda^{(3)}_{\beta,\sigma_*}(s)$.
	The optimal $f_0(u|x)$ in the definition of $b_{\epsilon}(\gamma,s)$ 
	solves, for  $u,x \in {\cal U} \times {\cal X}$ almost surely under the reference distribution,
	\begin{align}
	\widetilde f_0(u|x;  s)
	&= 
	\argmax_{f_0 \in [0,\infty)} \Bigg\{
	s\,  [
	\gamma(g(u,x),x) - \delta(u,x)
	] \, f_X(x) \, f_0 
	- \mu_1(s,x)  \, f_X(x) \, f_0
	\nonumber \\ & \qquad 
	-  \mu_2(s) \, \phi\left( \frac{f_0} {f_*(u|x)} \right)
	\, f_*(u|x) \, f_X(x)
	- \mu_3'(s)  \, \psi(g(u,x),x) \,  \, f_X(x) \, f_0
	\Bigg\} ,
	\label{LagrangeMultiplierMax}  
	\end{align}
	where $\mu_1(s,x) \in \mathbb{R}$, $\mu_2(s)>0$, $\mu_3(s) \in \mathbb{R}^{\dim \psi}$
	are Lagrange multipliers, which we choose to reparameterize as follows
	\begin{align*}	
  	    \mu_1(s,x) &= - \frac{\lambda^{(1)}(s,x)} {\lambda^{(2)}(s)} ,
	    &
    	     \mu_2(s) &= \frac 1 {\lambda^{(2)}(s)} ,
	     &
	     \mu_3(s)  &= - \frac{\lambda^{(3)}(s)} {\lambda^{(2)}(s)} .
	\end{align*}
	Those (reparameterized) Lagrange multipliers 
	need to be chosen such that the constraints
	\begin{align}
	\int_{ {\cal U} \times {\cal X} } \, \widetilde f_0(u|x;s) \,  f_X(x)  \, du \, dx &= 1 ,
	\nonumber \\
	\int_{ {\cal U} \times {\cal X} }  \, \phi\left( \frac{\widetilde f_0(u|x;s)} {f_*(u|x)} \right)
	\, f_*(u|x) \, f_X(x)  \, du \, dx &= \epsilon ,
	\nonumber  \\  
	\int_{ {\cal U} \times {\cal X} }  \,  \psi(g(u,x),x) \, \widetilde f_0(u|x;s) \,  f_X(x)  \, du \, dx &= 0 
	\label{LagrangeConstraints}
	\end{align}
	are satisfied. We need $\lambda^{(2)}(s)>0$ because the second constraint here is actually an inequality constraint ($\leq \epsilon$).
	Our assumptions guarantee that $f_*(u|x)>0$ and $f_X(x)>0$.
	We can therefore rewrite \eqref{LagrangeMultiplierMax} as follows,
	\begin{align*}
	\frac{ \widetilde f_0(u|x;  s) } {f_*(u|x)}
	&= 
	\argmax_{r \geq 0} \left\{  r\, t(u,x|s) - \phi(r) \right\} ,
	\end{align*}
	where $r = f_0 \, f_*(u|x)$, the objective function was multiplied with $f_{\sigma_*}(u|x) f_X(x)$
	(which does not change the value of the $\argmax$),
	and $t(u,x|s) = t_{\beta,\sigma_*}(u,x|s)$ is defined in the statement of the lemma.
	Comparing the last display with  the definition of $\rho(t)$ in \eqref{DefRho} we find that
	if $ \rho\left[ t(u,x|s)  \right] < \infty$, then
	\begin{align*}
	\widetilde f_0(u|x;  s) = f_*(u|x) \, \rho\left[ t(u,x|s)  \right].
	\end{align*}
	The condition $ \rho\left[ t(u,x|s)  \right] < \infty$ is implicitly imposed in the statement of the lemma, because otherwise we could not have 
	$ \mathbb{E}_{P(\beta,f_{\sigma_*})}   \,  \rho\left[ t_{\beta,\sigma_*}(U,X|s)  \right]  = 1$.
	Using the result in the last display we find that the constraints \eqref{LagrangeConstraints}
	are exactly the conditions \eqref{RewriteConstraints} imposed in the lemma.
	Under the conditions of the lemma we therefore have
	\begin{align*}
	b_{\epsilon}(\gamma,s)
	&=\limfunc{sup}_{f_0\in\Gamma_{\epsilon}}
	\left\{ s  \, \mathbb{E}_{P(\beta,f_0)} \left[ \gamma(Y,X) - \delta(U,X) \right] \right\}
	\\
	&=  \int_{ {\cal U} \times {\cal X} }  \,  \left[ \gamma(g(u,x),x) - \delta (u,x) \right]  \, \widetilde f_0(u|x;s) \,  f_X(x)  \, du \, dx 
	\\
	&=   s \; \mathbb{E}_{P(\beta,f_{\sigma_*})} \left\{  \left[\gamma(Y,X)- \delta(U,X) \right] 
	\phantom{\Big|}
	\rho\left[ t(U,X|s)  \right] \right\} ,
	\end{align*}
	and from $b_{\epsilon}(\gamma) =  \max_{s \in \{-1,1\}}  b_{\epsilon}(\gamma,s) $ we thus obtain the statement of the lemma.
\end{proof}

\begin{proof}[\bf Proof of Lemma~\ref{lemma:ConvexDual}]
	\# \underline{Part (i):}
	For $\nu = 0$
	we have $\phi = \overline \phi$.
	Our assumptions imply that there exists $\tau>0$ such that 
	$\phi'(r)$, $\phi''(r)$, $\phi'''(r)$ and $\phi''''(r)$ are all uniformly bounded over $r \in [1-\tau,1+\tau]$.
	We can choose $\eta>0$ such that $[\rho(-\eta), \rho(\eta)] \subset  [1-\tau,1+\tau]$.
	The conjugate of the convex function $\phi: \mathbb{R} \rightarrow \mathbb{R}$
	is given by
	\begin{align}
	\phi_*(t)  &=  \max_{r \geq 0} \, \left[ r\, t - \phi(r) \right] 
	= \rho(t)\, t - \phi(\rho(t))  .
	\label{ConvexDual}
	\end{align}
	We  have $\rho(t)  =  \phi_*'(t)$,
	which is the inverse function of $\phi'(r)$; that is,
	$\phi'(\rho(t)  )=t$.
	We can  express all derivatives of $\phi_*$ in terms of derivatives of $\phi$,
	for example, $ \phi_*''(t) = 1/  \phi''( \rho(t) )$
	and $ \phi_*'''(t) = - \phi'''( \rho(t) ) / [\phi''( \rho(t) )]^3$.
	A Taylor expansion of $  \rho(t)   =  \phi_*'(t) $ around $t=0=\phi'(1)$ reads
	\begin{align*}
	\rho(t)
	&=  1 + \frac {t} {\phi''(1)}  + t^2 \, R_1(t) ,
	\end{align*} 
	where by the mean-value formula for the remainder term we have
	\begin{align*}
	\left| R_1(t) \right| \leq  \frac {1} 2 \, \sup_{t' \in [-\eta,\eta]} \,  \left| \phi_*'''(t')  \right|
	\leq   \underbrace{
		\frac 1 2  \, \sup_{r \in [1-\tau, 1+\tau]} \,  \left| \frac{ \phi'''(r) } {[\phi''(  r )]^3}  \right| 
	}_{=:c_1 < \infty}   .
	\end{align*}
	Similarly, a Taylor expansion of
	$\phi( \rho(t) ) = t \, \rho(t) - \phi_*(t) $ around $t=0$ reads
	\begin{align*}
	\phi( \rho(t) )
	&=  \frac {t^2} {2 \, \phi''(1)} + t^3 \, R_2(t) ,
	\end{align*}
	where again by the mean-value formula for the remainder we have
	\begin{align*}
	\left| R_2(t) \right| \leq  
	\underbrace{
		\frac 1 6 \, \sup_{r \in [1-\tau, 1+ \tau]} \, 
		\left| - \frac{2 \phi'''(r) } {[\phi''(  r )]^3} + \frac{3 \phi'(r) [\phi'''(r)]^2 } {[\phi''(  r )]^5}
		-  \frac{\phi'(r) \phi''''(r) } {[\phi''(  r )]^4} \right| 
	}_{=:c_2 < \infty} .
	\end{align*}
	Continuity of $R_1(t)$ and $R_2(t)$ in a neighborhood of $t=0$ is also guaranteed by $\phi'(r)$ being four times continuously differentiable
	in neighborhood around $r=1$. This concludes the proof of part (i).
	
	\medskip
	\noindent
	\# \underline{Part (ii):}
	For $\nu>0$
	the function $\phi(r) = \overline \phi(r)+\nu \, (r-1)^2$ still satisfies all the assumptions of part (i)
	of the lemma, that is, we can apply part (i) to find that there exists 
	$\widetilde c_1>0$, $\widetilde  c_2>0$ and $\eta>0$ such that
	for all $t \in [-\eta,\eta]$ we have
	\begin{align}
	\left|  R_1(t)  \right| \, \leq \, \widetilde c_1 \, t^2, 
	\quad
	\text{and}
	\quad
	\left| R_2(t)  \right| \, \leq  \, \widetilde c_2 \, t^3.
	\label{BoundRhoTilde}            
	\end{align}
	What is left to show here is that there exists constant $c_1>0$ and $c_2>0$
	such that \eqref{BoundRho} also holds for $t<-\eta$ and for $t>\eta$.
	
	We have $\phi'(r) = \overline \phi'(r) + \nu (r-1)$. Plugging in $r = \rho(t)$ we have $\phi'(\rho(t)) =t$, and therefore
	$ t = \overline \phi'(\rho(t)) + \nu [\rho(t)-1]$.
	Our assumptions  imply that $ \overline \phi'(\rho(t)) > 0$ for $t>0$
	and $ \overline \phi'(\rho(t)) < 0$ for $t<0$. We therefore find that
	\begin{align}
	\left| \rho(t)-1 \right| =  \frac{ \left|  t  -  \overline \phi'(\rho(t)) \right| } \nu 
	\leq  \frac {|t|} \nu  .
	\label{CrudeBoundRho}
	\end{align}
	Using   \eqref{BoundRhoTilde} and \eqref{CrudeBoundRho}, and choosing $c_1 = \max\left\{  \widetilde c_1 , \, [1/\nu + 1/ \phi''(1)] / \eta \right\}$, we obtain
	$$
	\left| \rho(t) - 1 - \frac{t} { \phi''(1)}  \right| \,  \leq \,  c_1 \, t^2 ,
	$$
	for all $t \in \mathbb{R}$. This is the first inequality that we wanted to show.
	
	Using again the convex conjugate defined in \eqref{ConvexDual} we have
	\begin{align*}
	\phi( \rho(t) ) =  t \, \rho(t) - \phi_*(t) 
	=  t \, \rho(t) -  \max_{r \geq 0} \, \left[ r\, t - \phi(r) \right] 
	\leq  t [ \rho(t) - 1] = |t| \, \left|  \rho(t) - 1 \right| \,  ,
	\end{align*}
	where in the second to last step we used that $r=1$ is one possible choice for $r \geq 0$,
	and we have $\phi(1)=0$,
	and in the last step we used that ${\rm sign}[ \rho(t) - 1 ] = {\rm sign}(t)$.
	Our assumptions imply that $\phi(r) \geq 0$,
	that is,  $\left| \phi(r) \right| =  \phi(r)$.
	The result in the last display together with
	\eqref{CrudeBoundRho} therefore give
	\begin{align*}
	\left| \phi( \rho(t) ) \right| \leq   \frac {t^2} \nu ,
	\end{align*} 
	for all $t \in \mathbb{R}$.
	Using this and  \eqref{BoundRhoTilde},
	and choosing $c_2 =  \max\left\{  \widetilde c_2 , \, [1/\nu + 1/ \{2\phi''(1)\}] / \eta \right\}$,
	we thus obtain
	$$
	\left|   \phi( \rho(t) )  -  \frac {t^2} {2 \, \phi''(1)}  \right| \,  \leq  \,  c_2 \, t^3 ,
	$$    
	for all $t \in \mathbb{R}$,
	which is the second  inequality  that we wanted to show. 
	Continuity of $R_1(t)$ and $R_2(t)$ in $\mathbb{R}$ is also guaranteed by $\phi'(r)$ being four times continuously differentiable
	in $r \in (0, \infty)$.
	This concludes the proof of part~(ii).
\end{proof}

\begin{proof}[\bf Proof of Lemma~\ref{lemma:SupremumBoundConvex}]
	Let $f_0\in\Gamma_{\epsilon}$.
	Remember the definition of the posterior density $p_{\beta,\sigma_*}(u\,|\, y,x)$ in \eqref{eq_posterior}.
	Define
	\begin{align*}
	\widetilde f_0(u|x) :=  \mathbb{E}_{P(\beta,f_0)} \left[ p_{\beta,\sigma_*}(u\,|\, Y,x) \right] 
	= \int_{\cal U}  \, p_{\beta,\sigma_*}(u\,|\, g_\beta(\tilde u,x) ,x) \,  f_0(\tilde u|x)  \, d \tilde u.
	\end{align*}
	Then, for any $x \in {\cal X}$ we have $\widetilde f_0(u|x) \geq 0$, for all $u \in {\cal U}$,
	and $\int_{\cal U} \widetilde f_0(u|x) du =1$; that is,
	$\widetilde f_0(u|x)$ is a probability density over ${\cal U}$.
	Furthermore, by construction we have   
	\begin{align}
	\mathbb{E}_{P(\beta,f_0)} \left\{
	\mathbb{E}_{p_{\beta,\sigma_*}} \left[ \zeta(U,X) \,|\, Y,X \right] 
	\right\}  
	=     \mathbb{E}_{P(\beta,\widetilde f_0)} \left[ \zeta(U,X) \right]   .
	\label{PropertyTildeF0}
	\end{align}
	We also find that
	\begin{align}
	\mathbb{E}_{P(\beta,\widetilde f_0)} [\psi_{\beta,\sigma_{*}}(Y,X)]= 
	\mathbb{E}_{P(\beta,f_0)} \left\{
	\mathbb{E}_{p_{\beta,\sigma_*}} \left[ \psi_{\beta,\sigma_{*}}(Y,X) \,|\, Y,X \right] 
	\right\}  
	=    \mathbb{E}_{P(\beta, f_0)} [\psi_{\beta,\sigma_{*}}(Y,X)] = 0 .
	\label{ConditionGammaEpsilon1}
	\end{align}
	Furthermore, we have
	\begin{align*}
	d(\widetilde f_0,f_{\sigma_{*}})
	&=\int_{{\cal{X}}}\int_{{\cal{U}}} \phi \left(\frac{\widetilde f_0(u\,|\, x)}{f_{\sigma_{*}}(u\,|\, x)}\right)f_{\sigma_*}(u\,|\, x)f_X(x) \,du\,dx
	\nonumber   \\
	&=\int_{{\cal{X}}}\int_{{\cal{U}}} \phi \left(
	\frac{\int_{\cal U}  \, p_{\beta,\sigma_*}(u\,|\, g_\beta(\tilde u,x) ,x) \,  f_0(\tilde u|x)  \, d \tilde u}
	{f_{\sigma_{*}}(u\,|\, x)}
	\right)f_{\sigma_*}(u\,|\, x)f_X(x) \,du \,dx
	\nonumber   \\
	&=\int_{{\cal{X}}}\int_{{\cal{U}}} \phi \left(
	\int_{\cal U}  \,  \frac{ f_0(\tilde u|x) } {f_{\sigma_{*}}(\tilde u\,|\, x)}  \,
	K_{\beta,\sigma_*}(\tilde u | u,x) \, d \tilde u
	\right)f_{\sigma_*}(u\,|\, x)f_X(x) \,du \,dx ,
	\end{align*}
	where we defined
	\begin{align*}
	K_{\beta,\sigma_*}(\tilde u | u,x) =  \frac{f_{\sigma_{*}}(\tilde u\,|\, x) \,p_{\beta,\sigma_*}(u\,|\, g_\beta(\tilde u,x) ,x) }  {f_{\sigma_{*}}(u\,|\, x)} .
	\end{align*}
	Using the definition of $p_{\beta,\sigma_*}(u\,|\, y,x)$ one can verify that
	$K_{\beta,\sigma_*}(\tilde u | u,x)  \geq 0$, for all $\tilde u \in {\cal U}$,
	and $\int_{\cal U} K_{\beta,\sigma_*}(\tilde u | u,x)  d\tilde u = \frac{ \mathbb{E}_{P(\beta,f_{\sigma_*})} \left[ p_{\beta,\sigma_*}(u\,|\, Y,x) \right]  }  {f_{\sigma_{*}}(u\,|\, x)}=1$, 
	almost surely (under $P(\beta,f_{\sigma_*})$) for $u \in {\cal U}$ and $x \in {\cal X}$.
	Thus,
	$K_{\beta,\sigma_*}(\tilde u | u,x) $ is a probability density over $\tilde u \in {\cal U}$, for all $u,x$.
	Also using that $\phi(r)$ is convex, we can therefore  apply Jensen's inequality to obtain
	\begin{align}
	d(\widetilde f_0,f_{\sigma_{*}}) 
	&\leq
	\int_{{\cal{X}}}\int_{{\cal{U}}} 
	\int_{\cal U}  \,  \phi \left( \frac{ f_0(\tilde u|x) } {f_{\sigma_{*}}(\tilde u\,|\, x)} \right)  \,
	K_{\beta,\sigma_*}(\tilde u | u,x) \, d \tilde u \,
	f_{\sigma_*}(u\,|\, x)f_X(x) \,du \,dx
	\nonumber   \\
	&=
	\int_{{\cal{X}}}\int_{{\cal{U}}}   \phi \left( \frac{ f_0(\tilde u|x) } {f_{\sigma_{*}}(\tilde u\,|\, x)} \right)
	\underbrace{
		\left[   \int_{\cal U}  \,    f_{\sigma_*}(u\,|\, x)  \,
		K_{\beta,\sigma_*}(\tilde u | u,x) \, d   u \right] }_{= f_{\sigma_{*}}(\tilde u\,|\, x)}
	f_X(x) \,d\tilde u \,dx
	\nonumber   \\
	&=   d( f_0,f_{\sigma_{*}}) \leq \epsilon .
	\label{ConditionGammaEpsilon2}         
	\end{align}
	Because $\widetilde f_0 $ satisfies \eqref{ConditionGammaEpsilon1}
	and \eqref{ConditionGammaEpsilon2} we thus have $\widetilde f_0 \in \Gamma_{\epsilon}$.
	We have thus shown that for every $f_0\in\Gamma_{\epsilon}$
	there exists $\widetilde f_0 \in \Gamma_{\epsilon}$ such that \eqref{PropertyTildeF0} holds.
	Let $\widetilde \Gamma_{\epsilon}$  be the set of all such $\widetilde f_0$  obtained for an $f_0\in\Gamma_{\epsilon}$.
	Since $\widetilde \Gamma_{\epsilon} \subset  \Gamma_{\epsilon}$ we find that
	\begin{align*}
	\sup_{f_0\in\Gamma_{\epsilon}}   \left|
	\mathbb{E}_{P(\beta,f_0)} \left\{
	\mathbb{E}_{p_{\beta,\sigma_*}} \left[ \zeta(U,X) \,|\, Y,X \right] 
	\right\} \right|
	=  \sup_{\widetilde f_0\in \widetilde\Gamma_{\epsilon}}   \left| \mathbb{E}_{P(\beta,\widetilde f_0)} \left[ \zeta(U,X) \right] \right|
	\leq 
	\sup_{f_0\in\Gamma_{\epsilon}}  \left| \mathbb{E}_{P(\beta,f_0)} \left[ \zeta(U,X) \right] \right| .
	\end{align*}
\end{proof}

\section{Robustness in prediction\label{App_Pred}}

Under squared loss, we wish to find a predictor $\gamma_{\widehat{\beta},\widehat{\sigma}}(Y_i,X_i)$, for some function $\gamma$, such that the worst-case mean squared prediction error is minimum. That is, our goal is to minimize
$$e_{\epsilon}(\gamma)=\underset{f_0\in\Gamma_{\epsilon}} \sup\, \mathbb{E}_{P(\beta,f_0)}\, [(\delta_{\beta}(U,X)-\gamma_{\beta,\sigma_*}(Y,X))^2]
$$ with respect to $\gamma$. Similarly to our measure of worst-case specification error, here the mean squared  prediction error is asymptotic, hence well-suited for settings with a large cross-section (e.g., settings with many teachers).

We first state the following local result, which is a direct generalization of Lemma \ref{lem_bias}.

\begin{lemma}\label{lem_prediction_expansion}
	In addition to defining $\widetilde \psi(y,x) = \psi(y,x) - \mathbb{E}_* \left[ \psi(Y,X)  \big| X=x \right]$, let $\widetilde \gamma(y,x) = \gamma(y,x) - \mathbb{E}_* \left[ \gamma(Y,X)  \big|  X=x \right]$ and $\widetilde  \delta(u,x) = \delta(u,x) - \mathbb{E}_* \left[ \delta(U,X)  \big| X=x \right]$. Suppose that 
	$\phi(r) = \overline \phi(r)+\nu \, (r-1)^2$, with $\nu \geq 0$,
	and a function $\overline \phi(r)$ that is four times continuously differentiable 
	with $\overline \phi(1) =0$ and $\overline \phi''(r)>0$, for all $r \in (0,\infty)$.
	Assume
	$ \mathbb{E}_{P(\beta,f_{\sigma_*})}   \psi_{{\beta},{\sigma}_{*}}(Y,X) = 0$ and 
	$\mathbb{E}_{P(\beta,f_{\sigma_{*}})}\left[\widetilde \psi_{\beta,\sigma_{*}}(Y,X) \widetilde \psi_{\beta,\sigma_{*}}(Y,X)'\right] >0$.
	Furthermore, assume that one of the following holds:
	\begin{itemize}
		\item[(i)] $\nu = 0$, and the functions
		$\left| \gamma_{{\beta},{\sigma}_{*}}(y,x) \right|$, $\left| \delta_\beta(u,x) \right| $ and $\left| \psi_{{\beta},{\sigma}_{*}}(y,x) \right|$ are bounded over the domain of $Y$, $U$, $X$.
		
		\item[(ii)] $\nu>0$, and 
		$\mathbb{E}_{P(\beta,f_{\sigma_*})} \left| \gamma_{{\beta},{\sigma}_{*}}(Y,X) - \delta_{\beta}(U,X) \right|^6 < \infty$,
		and  $\mathbb{E}_{P(\beta,f_{\sigma_*})} \left| \psi_{{\beta},{\sigma}_{*}}(Y,X)  \right|^3 < \infty$.
		
	\end{itemize}
	Then, as $\epsilon\rightarrow 0$ we have
	\begin{align*}
	e_{\epsilon}(\gamma)&=
	\mathbb{E}_{P(\beta,f_{\sigma_*})} \left[ \left( \gamma_{{\beta},{\sigma}_{*}}(Y,X) - \delta_{\beta}(U,X) \right)^2  \right]\\
	&\quad
	+\epsilon^{\frac{1}{2}}\Bigg(\frac{2}{\phi''(1)}
	{\rm Var}_{P(\beta,f_{\sigma_{*}})}\Bigg\{ \left( \gamma_{{\beta},{\sigma}_{*}}(Y,X) - \delta_{\beta}(U,X) \right)^2  
	\\ & \qquad \qquad \quad
	- \mathbb{E}_{P(\beta,f_{\sigma_{*}})}
	\left[ \left(  \gamma_{{\beta},{\sigma}_{*}}(Y,X) -  \delta_{\beta}(U,X) \right)^2  \bigg| X \right]
	- \lambda'\widetilde\psi_{\beta,\sigma_{*}}(Y,X) \Bigg\} \Bigg)^{\frac{1}{2}}
		+{\cal O}(\epsilon),
	\end{align*}
	where $$\lambda{=}\left\{\mathbb{E}_{P(\beta,f_{\sigma_{*}})}\left[\widetilde \psi_{\beta,\sigma_{*}}(Y,X) \widetilde \psi_{\beta,\sigma_{*}}(Y,X)'\right]\right\}^{-1}\mathbb{E}_{P(\beta,f_{\sigma_{*}})}\left[\left(\gamma_{{\beta},{\sigma}_{*}}(Y,X){-}\delta_{\beta}(U,X)\right)^2 \, \widetilde \psi_{\beta,\sigma_{*}}(Y,X)\right].$$
\end{lemma}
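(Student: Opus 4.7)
The plan is to reduce the statement to essentially the same Lagrangian analysis as in the proof of Lemma~\ref{lem_bias_app}, by observing that
$$
e_{\epsilon}(\gamma) \;=\; \sup_{f_0\in\Gamma_{\epsilon}} \mathbb{E}_{P(\beta,f_0)}\!\left[h(U,X)\right],
\qquad h(u,x):=\left[\gamma_{\beta,\sigma_*}(g_\beta(u,x),x)-\delta_\beta(u,x)\right]^2 \ge 0.
$$
Since $h$ is non-negative, no absolute value is needed, and the problem is again to maximize a linear-in-$f_0$ functional with integrand $h(u,x)$ over $\Gamma_\epsilon$. Thus the role of ``$\gamma - \delta$'' in Lemma~\ref{lem_bias_app} is played by $h$, while the constraints defining $\Gamma_\epsilon$ are unchanged.

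The first step is to derive the direct analog of Lemma~\ref{lemma:FOC}: the worst-case $f_0$ has the form $f_0(u|x)=f_{\sigma_*}(u|x)\,\rho[t(u,x)]$, where
$$t(u,x)=\lambda^{(1)}(x)+\lambda^{(2)}\,h(u,x)+\lambda^{(3)\,\prime}\,\psi(g_\beta(u,x),x),$$
and $(\lambda^{(1)},\lambda^{(2)},\lambda^{(3)})$ are chosen to satisfy the unit-mass, $\phi$-divergence radius, and $\psi$-moment constraints. Following the template of the proof of Lemma~\ref{lem_bias_app}, I would parametrize
$\lambda^{(2)}=\epsilon^{1/2}\kappa^{-1}+\epsilon\,\lambda^{(2)}_{\rm rem}$,
with $\kappa=\{(2/\phi''(1))\,{\rm Var}_*[\widetilde h(U,X)-\lambda'\widetilde\psi(Y,X)]\}^{1/2}$ and $\widetilde h(u,x)=h(u,x)-\mathbb{E}_*[h(U,X)\mid X=x]$, and similar $\epsilon^{1/2}$-leading forms for $\lambda^{(1)}(x)$ and $\lambda^{(3)}$, with $\lambda$ as defined in the lemma statement. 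Plugging the Taylor expansions $\rho(t)=1+t/\phi''(1)+t^2R_1(t)$ and $\phi(\rho(t))=t^2/(2\phi''(1))+t^3R_2(t)$ from Lemma~\ref{lemma:ConvexDual} into the three constraints yields a fixed-point system for $\lambda^{(1)}_{\rm rem}, \lambda^{(2)}_{\rm rem}, \lambda^{(3)}_{\rm rem}$ analogous to \eqref{ConditionsRemainder}.

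The second step is to solve this system for small $\epsilon$. In case~(i), $h$ and $\psi$ are bounded so $t(u,x)$ lies in a small neighborhood of zero, and Lemma~\ref{lemma:ConvexDual}(i) provides bounded $R_1,R_2$ locally; in case~(ii), $\nu>0$ ensures that $R_1,R_2$ are bounded globally, and the moment conditions $\mathbb{E}_*|\gamma-\delta|^{6}<\infty$ and $\mathbb{E}_*|\psi|^3<\infty$ guarantee that all expectations appearing in the remainder equations are finite (note $\mathbb{E}_* h^2=\mathbb{E}_*(\gamma-\delta)^4$ and $\mathbb{E}_* h^3=\mathbb{E}_*(\gamma-\delta)^6$). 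In both cases a solution with $\lambda^{(1)},\lambda^{(2)},\lambda^{(3)}={\cal O}(1)$ exists, and substituting back into the expression for the worst-case $\mathbb{E}_{P(\beta,f_0)}[h(U,X)]$ gives
$$e_\epsilon(\gamma)=\mathbb{E}_*[h(U,X)]+\epsilon^{1/2}\!\left\{\tfrac{2}{\phi''(1)}{\rm Var}_*\!\left(\widetilde h(U,X)-\lambda'\widetilde\psi(Y,X)\right)\right\}^{1/2}+{\cal O}(\epsilon),$$
which is exactly the claimed expansion.

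The main obstacle, as in Lemma~\ref{lem_bias_app}, is making the fixed-point argument for the Lagrange remainders rigorous in the unbounded case: because $h=(\gamma-\delta)^2$ is quadratic, controlling the cubic term $t^3 R_2(t)$ in $\phi(\rho(t))$ requires third moments of $h$, hence the sixth-moment assumption on $\gamma-\delta$. Otherwise the derivation proceeds by direct substitution, and the final step (matching the variance expression to the one in the lemma) is bookkeeping: the centering by $\mathbb{E}_*[h\mid X]$ arises from the $x$-dependent multiplier $\lambda^{(1)}(x)$ that enforces unit mass for every $x$, and the subtraction of $\lambda'\widetilde\psi$ arises from the moment constraint, exactly paralleling the bias calculation.
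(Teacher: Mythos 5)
Your proposal is correct and takes essentially the same route as the paper: the published proof simply states that Lemma~\ref{lem_prediction_expansion} follows from Lemma~\ref{lem_bias_app} by replacing $\gamma_{\beta,\sigma_*}(Y,X)-\delta_\beta(U,X)$ with its square, which is exactly the substitution $h=(\gamma-\delta)^2$ that you carry through the Lagrangian characterization and the $\rho$/$\phi(\rho)$ expansions. Your remark that the sixth-moment assumption on $\gamma-\delta$ is just the third-moment requirement on $h$ (and that nonnegativity of $h$ removes the absolute value) is precisely the bookkeeping the paper leaves implicit.
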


\vskip .3cm

Let $\gamma^{\rm P}$ as in (\ref{eq_gamma_P}), so $\gamma^{\rm P}_{\widehat{\beta},\widehat{\sigma}}(Y_i,X_i)$ is the empirical Bayes estimate of $\delta_{\beta}(U_i,X_i)$. Under correct specification of the reference density $f_{\sigma}$, the posterior mean $\gamma^{\rm P}_{\beta,\sigma_*}(Y_i,X_i)$ is the minimum mean squared error predictor of $\delta_{\beta}(U_i,X_i)$ under squared loss. Under misspecification of $f_{\sigma}$, Lemma \ref{lem_prediction_expansion} implies that the leading term of the worst-case mean squared error is minimized at $\gamma=\gamma^{\rm P}$. Moreover, the lemma also implies the stronger result that the first-order term in the expansion of the worst-case mean squared prediction  error (which is a multiple of $\epsilon^{\frac{1}{2}}$) is also minimized at $\gamma^{\rm P}$, provided the following condition holds almost surely:
\begin{equation}
\label{cond_skew}\mathbb{E}_{p_{\beta,\sigma_{*}}}\, \left[({\delta}_{\beta}(U,X)-{\gamma}^{\rm P}_{\beta,\sigma_*}(Y,X))^3\,|\, Y,X\right]=0.
\end{equation}

While (\ref{cond_skew}) is restrictive in general, it is satisfied in the fixed-effects model (\ref{FE_mod}), when the researcher wishes to predict the quality $\alpha_i$ of teacher $i$. Indeed, in that case (\ref{cond_skew}) is equivalent to the posterior skewness of $\alpha_i$ being zero, when using the normal reference model as the prior. Since the normal distribution is symmetric, (\ref{cond_skew}) is satisfied, and the empirical Bayes estimator $\gamma^{\rm P}_{\widehat{\beta},\widehat{\sigma}}(Y_i,X_i)=\widehat{\mu}_{\alpha}+\widehat{\rho}(\overline{Y}_i-\widehat{\mu}_{\alpha})$ has minimum worst-case mean squared  prediction  error, up to second-order terms in $\epsilon^{\frac{1}{2}}$. 

We also have a fixed-$\epsilon$ bound in the spirit of Theorem \ref{theo_global}.

\begin{theorem}
	\label{theo_prediction_global}
	Let $\gamma_{\beta,\sigma_*}^{\rm P}$ as in (\ref{eq_gamma_P_app}).
	Then, for all $\epsilon>0$,
	$$ e_{\epsilon}(\gamma_{\beta,\sigma_*}^{\rm P})\leq 4 \,  \inf_{\gamma}\, e_{\epsilon}(\gamma_{\beta,\sigma_*}).$$
\end{theorem}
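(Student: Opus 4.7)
The plan is to mimic the global-bound argument used for Theorem~\ref{theo_global}, but replacing the linear decomposition there with a quadratic one that accommodates squared loss. Concretely, I would write, for any competing function $\gamma_{\beta,\sigma_*}(y,x)$,
\[
  \delta_\beta(U,X) - \gamma^{\rm P}_{\beta,\sigma_*}(Y,X)
  \;=\; \bigl[\delta_\beta(U,X) - \gamma_{\beta,\sigma_*}(Y,X)\bigr]
       + \bigl[\gamma_{\beta,\sigma_*}(Y,X) - \gamma^{\rm P}_{\beta,\sigma_*}(Y,X)\bigr],
\]
square both sides, and use $(a+b)^2 \le 2a^2+2b^2$. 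Taking expectation under $P(\beta,f_0)$ and then the supremum over $f_0 \in \Gamma_\epsilon$ would give
\[
  e_\epsilon(\gamma^{\rm P}) \;\le\; 2\,e_\epsilon(\gamma)
  + 2\,\sup_{f_0\in\Gamma_\epsilon} \mathbb{E}_{P(\beta,f_0)}\!\left[\bigl(\gamma_{\beta,\sigma_*}(Y,X) - \gamma^{\rm P}_{\beta,\sigma_*}(Y,X)\bigr)^2\right].
\]

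The key step, analogous to the use of Lemma~\ref{lemma:SupremumBoundConvex} in the proof of Theorem~\ref{theo_global}, is to bound the second supremum by $e_\epsilon(\gamma)$ itself. Here I would set $\zeta(u,x) = \bigl(\gamma_{\beta,\sigma_*}(g_\beta(u,x),x) - \delta_\beta(u,x)\bigr)^2$. Since $\gamma_{\beta,\sigma_*}(Y,X)$ is a function of $(Y,X)$ and $\gamma^{\rm P}_{\beta,\sigma_*}(Y,X) = \mathbb{E}_{p_{\beta,\sigma_*}}[\delta_\beta(U,X)\,|\,Y,X]$, the conditional Jensen inequality under the reference posterior gives, pointwise in $(Y,X)$,
\[
  \bigl(\gamma_{\beta,\sigma_*}(Y,X) - \gamma^{\rm P}_{\beta,\sigma_*}(Y,X)\bigr)^2
  = \bigl(\mathbb{E}_{p_{\beta,\sigma_*}}[\gamma_{\beta,\sigma_*}(Y,X) - \delta_\beta(U,X)\,|\,Y,X]\bigr)^2
  \le \mathbb{E}_{p_{\beta,\sigma_*}}[\zeta(U,X)\,|\,Y,X].
\]
Taking $\mathbb{E}_{P(\beta,f_0)}$ on both sides and then applying Lemma~\ref{lemma:SupremumBoundConvex} to the resulting nonnegative integrand $\zeta$ yields
\[
  \sup_{f_0\in\Gamma_\epsilon} \mathbb{E}_{P(\beta,f_0)}\!\left[\bigl(\gamma - \gamma^{\rm P}\bigr)^2\right]
  \;\le\; \sup_{f_0\in\Gamma_\epsilon} \mathbb{E}_{P(\beta,f_0)}[\zeta(U,X)] \;=\; e_\epsilon(\gamma).
\]

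Combining the two displays gives $e_\epsilon(\gamma^{\rm P}) \le 4\,e_\epsilon(\gamma)$ for every measurable $\gamma$, and taking the infimum over $\gamma$ delivers the theorem. The only non-routine step is the one I expect to be the main obstacle: checking that the conditional-Jensen reduction of $(\gamma-\gamma^{\rm P})^2$ to a nonnegative function of $(U,X)$ (under the reference posterior) indeed puts us into the form required by Lemma~\ref{lemma:SupremumBoundConvex}, without picking up any extra factors from the change of measure between $p_{\beta,\sigma_*}$ and $f_0$. This is fine because $(\gamma-\gamma^{\rm P})^2$ is a function of $(Y,X)$ only, so its expectation under $P(\beta,f_0)$ is unambiguous, while the inequality $(\gamma-\gamma^{\rm P})^2 \le \mathbb{E}_{p_{\beta,\sigma_*}}[\zeta\,|\,Y,X]$ holds pointwise in $(y,x)$ and therefore transfers to expectation under any $f_0$.
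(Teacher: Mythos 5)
Your proposal is correct and matches the paper's own proof essentially step for step: the same quadratic decomposition with $(a+b)^2\le 2a^2+2b^2$, the same reduction of $\bigl(\gamma-\gamma^{\rm P}\bigr)^2$ via the conditional Jensen (the paper labels it Cauchy--Schwarz) inequality to $\mathbb{E}_{p_{\beta,\sigma_*}}[\zeta(U,X)\,|\,Y,X]$ with $\zeta(u,x)=\bigl(\gamma_{\beta,\sigma_*}(g_\beta(u,x),x)-\delta_\beta(u,x)\bigr)^2$, and the same application of Lemma~\ref{lemma:SupremumBoundConvex} to bound that supremum by $e_\epsilon(\gamma)$. No gaps; the step you flagged as the potential obstacle is handled exactly as you describe and as in the paper.
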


\vskip .3cm

Theorem~\ref{theo_prediction_global} shows that EB estimators are optimal, up to a factor of at most four, 
in terms of worst-case mean squared  prediction  error. In model (\ref{FE_mod}), when ${\varepsilon}_{1},...,{\varepsilon}_{J}$ are normally distributed and $\alpha_1,...,\alpha_N$ are parameters belonging to an $L^2$ ball, empirical Bayes James-Stein estimators are known to be optimal in terms of asymptotic  minimax mean squared error since they achieve the Pinsker bound (see Wasserman, 2006, Chapter 7). Here, by contrast, we consider a worst case computed in a set of unrestricted, possibly non-normal joint distributions of $\alpha,\varepsilon_1,...,\varepsilon_{J}$.

\begin{proof}[\bf Proof of Lemma~\ref{lem_prediction_expansion}]
	This statement of the lemma is obtained from Lemma~\ref{lem_bias_app} by replacing $ ( \gamma_{{\beta},{\sigma}_{*}}(Y,X) - \delta_{\beta}(U,X) )$ by
	$ \left( \gamma_{{\beta},{\sigma}_{*}}(Y,X) - \delta_{\beta}(U,X) \right)^2 $.
	The proof is obtained by the same replacement from the proof of Lemma~\ref{lem_bias_app}.
\end{proof}

\begin{proof}[\bf Proof of Theorem~\ref{theo_prediction_global}]
	By definition we have 
	\begin{align*}
	e_{\epsilon}(\gamma) &=\underset{f_0\in\Gamma_{\epsilon}} \sup\, \mathbb{E}_{P(\beta,f_0)}\, [(\delta_{\beta}(U,X)-\gamma_{\beta,\sigma_*}(Y,X))^2],
	\\
	e_{\epsilon}(\gamma^{\rm P})&=
	\underset{f_0\in\Gamma_{\epsilon}} \sup\, \mathbb{E}_{P(\beta,f_0)}\, [(\delta_{\beta}(U,X)-\gamma^{\rm P}_{\beta,\sigma_*}(Y,X))^2].
	\end{align*} 
	Using that $(a-b)^2 \leq 2 (a^2+b^2)$ 
	with $a=\delta_{\beta}(U,X) - \gamma_{{\beta},{\sigma}_{*}}(Y,X) $
	and $b=\gamma^{\rm P}_{{\beta},{\sigma}_{*}}(Y,X) - \gamma_{{\beta},{\sigma}_{*}}(Y,X)$
	we obtain
	\begin{align*}
	e_{\epsilon}(\gamma^{\rm P}) &\leq 
	2 \,
	\sup_{f_0\in\Gamma_{\epsilon}}\,
	\left|
	\mathbb{E}_{P(\beta,f_0)} \left[  \left( \delta_{\beta}(U,X) - \gamma_{{\beta},{\sigma}_{*}}(Y,X)    \right)^2  \right]
	+  \mathbb{E}_{P(\beta,f_0)} \left[  \left(  \gamma^{\rm P}_{{\beta},{\sigma}_{*}}(Y,X)  - \gamma_{{\beta},{\sigma}_{*}}(Y,X) \right)^2 \right] \right|
	\\
	&\leq
	2 e_{\epsilon}(\gamma)
	+ 2\, \sup_{f_0\in\Gamma_{\epsilon}}
	\left|
	\mathbb{E}_{P(\beta,f_0)} \left[  \left( \gamma_{{\beta},{\sigma}_{*}}(Y,X) - \gamma^{\rm P}_{{\beta},{\sigma}_{*}}(Y,X) \right)^2 \right]
	\right| .
	\end{align*}
	We furthermore have
	\begin{align*}
	\sup_{f_0\in\Gamma_{\epsilon}} &
	\left|
	\mathbb{E}_{P(\beta,f_0)} \left[  \left( \gamma_{{\beta},{\sigma}_{*}}(Y,X) - \gamma^{\rm P}_{{\beta},{\sigma}_{*}}(Y,X) \right)^2 \right]
	\right| 
	\\
	&=  \sup_{f_0\in\Gamma_{\epsilon}}
	\left|
	\mathbb{E}_{P(\beta,f_0)} \left\{  \left[
	\mathbb{E}_{p_{\beta,\sigma_*}} \left( \gamma_{{\beta},{\sigma}_{*}}(Y,X) - \delta_{\beta}(U,X)\,|\, Y, X \right) \right]^2 \right\}
	\right| 
	\\
	&\leq  \sup_{f_0\in\Gamma_{\epsilon}}
	\left|
	\mathbb{E}_{P(\beta,f_0)} \left\{  
	\mathbb{E}_{p_{\beta,\sigma_*}}\left[ \left( \gamma_{{\beta},{\sigma}_{*}}(Y,X) - \delta_{\beta}(U,X) \right)^2 \,|\, Y, X  \right] \right\}
	\right| 
	\\
	&\leq 
	\sup_{f_0\in\Gamma_{\epsilon}}
	\left|
	\mathbb{E}_{P(\beta,f_0)}  \left[ \left( \gamma_{{\beta},{\sigma}_{*}}(Y,X) - \delta_{\beta}(U,X) \right)^2  \right]    \right| 
	= e_{\epsilon}(\gamma) ,
	\end{align*}
	where in the first step we used the definition of $ \gamma^{\rm P}_{{\beta},{\sigma}_{*}}(y,x)$,
	in the second step we applied the Cauchy-Schwarz inequality,
	and in the last line we used Lemma~\ref{lemma:SupremumBoundConvex}
	and the definition of $e_{\epsilon}(\gamma)$.
	Combining the results of the last two displays we obtain that
	$$ e_{\epsilon}(\gamma_{\beta,\sigma_*}^{\rm P})\leq 4 \,  \inf_{\gamma}\, e_{\epsilon}(\gamma_{\beta,\sigma_*}).$$
\end{proof}

\section{Simulations\label{sec_App_sim}}

Here we provide details on the simulations summarized in Subsection \ref{subsec_sim}. We consider four data generating processes (DGP) based on model (\ref{FE_mod}). We use a standard normal as the (possibly misspecified) reference model for $\alpha_i$. In the first two DGP, we draw $\alpha_i$ from a standard normal distribution. Hence, the reference normal model for $\alpha_i$ is correctly specified in this case. We compare two specification for $\varepsilon_{ij}$. In DGP 1, $\varepsilon_{ij}$ are i.i.d. standard normal. In DGP 2, we model heteroskedastic errors as ${\cal{N}}(0,s(\alpha_i)^2)$, where $s(\alpha_i)=\boldsymbol{1}\{\alpha_i>0\}\times .1+ \boldsymbol{1}\{\alpha_i\leq 0\}\times 1.41$, so the variance of errors is the same as in DGP 1. In the next two DGP, we draw $\alpha_i$ from a Beta distribution with parameters $(11,1)$, shifted and rescaled such that $\alpha_i$ has mean 0 and variance 1. This distribution is skewed to the left, and the reference normal model for $\alpha_i$ is thus misspecified. In DGP 3, $\varepsilon_{ij}$ are i.i.d. standard normal. In DGP 4, we model heteroskedastic errors as ${\cal{N}}(0,s(\alpha_i)^2)$, where $s(\alpha_i)=\boldsymbol{1}\{\alpha_i>0\}\times .1+ \boldsymbol{1}\{\alpha_i\leq 0\}\times 1.61$, so the variance of errors is the same as in the other DGP.

In all DGP, we compare the performance of four estimators: the fixed-effects estimator given by (\ref{eq_cdf_FE}), the PAE given by (\ref{posterior_FE}), the model-based estimator given by (\ref{model_based_FE}), and a nonparametric kernel deconvolution estimator with normal errors (Stefanski and Carroll, 1990). Unlike the other three estimators, the nonparametric kernel deconvolution estimator requires choosing a tuning parameter. We use the MISE-minimization approach of Delaigle and Gijbels (2004). To implement bandwidth selection and estimator, we use the codes available on Aurore Delaigle's page: https://researchers.ms.unimelb.edu.au/$\sim$aurored/links.html\#Code. To estimate the variance components $s_{\alpha}^2$ and $s_{\varepsilon}^2$, we use a minimum-distance estimator based on the empirical covariance matrix of the $Y_{ij}$. We set $\mu_{\alpha}=0$ and do not estimate it.

We run 1000 simulations for $n=1000$, and report estimates of $F_{\alpha}(a)=\mathbb{E}[\boldsymbol{1}\{\alpha \leq a\}]$ for a grid of $a$ values. We focus on bias (in absolute value), standard deviation, and root-MSE. We vary the number of measurements by taking $J\in\{2,20\}$. When $J=2$, the shrinkage factor is $\rho=.67$. In addition, the average value of the informativeness measure (\ref{eq_R2_FE}), over simulations and $a$ values, is $R^2=27\%$. When $J=20$, the shrinkage factor is $\rho=.95$. In addition, the average informativeness value is $R^2=69\%$.

In Figure \ref{fig_mc}, we show the results of the simulations in the cases where $\alpha_i$ is normally distributed. In this case, we expect the model-based estimator (in dashed lines) to be particularly well-behaved, since it is consistent and, for a suitable weighting of the minimum-distance estimator of $s_{\alpha}^2$ and $s_{\varepsilon}^2$, efficient as well. In DGP 1, $\varepsilon_{ij}$ are normal homoskedastic, so both the PAE (in solid) and the deconvolution estimator (in dotted) are consistent for fixed $J$ as $n$ tends to infinity. In the top two panels of Figure \ref{fig_mc}, we see that the bias of these two estimators is indeed small. Yet, the model-based estimator has the smallest bias, as well as the smallest variance and root-MSE.

\begin{figure}[tbp]
	\caption{Monte Carlo results for $F_{\alpha}(a)=\mathbb{E}[\boldsymbol{1}\{\alpha \leq a\}]$ in the fixed-effects model $Y_{ij}=\alpha_i+{\varepsilon}_{ij}$, when the distribution of $\alpha$ is correctly specified\label{fig_mc}}
	\begin{center}
		\begin{tabular}{ccc}
			Bias&Std&RMSE \\
			\multicolumn{3}{c}{(1) $\alpha_i\sim {iid}{\cal{N}}\left(0,1\right)$, ${\varepsilon}_{ij}\sim{iid}{\cal{N}}\left(0,1\right)$, $J=2$}\\
			\includegraphics[width=50mm, height=40mm]{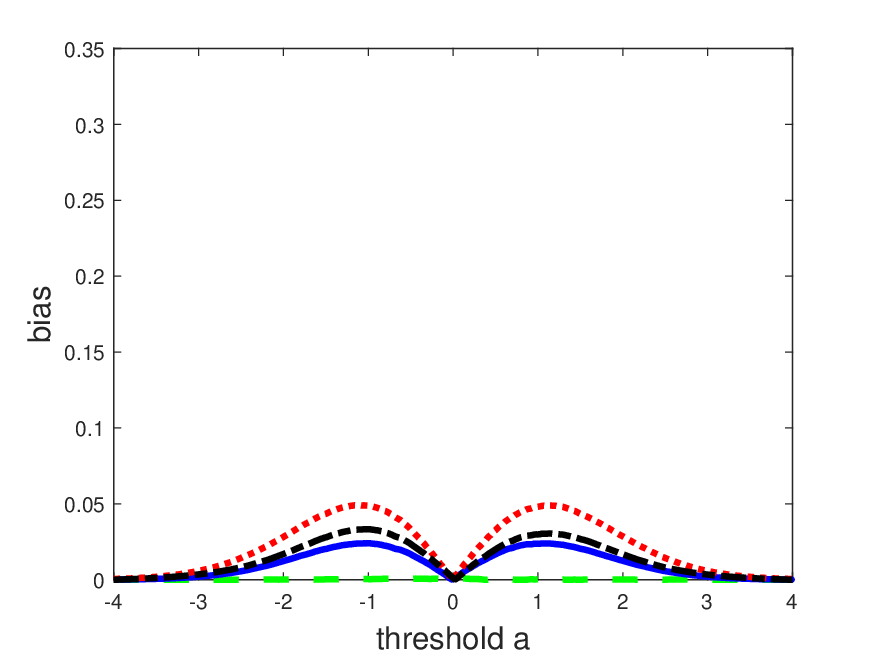}&	\includegraphics[width=50mm, height=40mm]{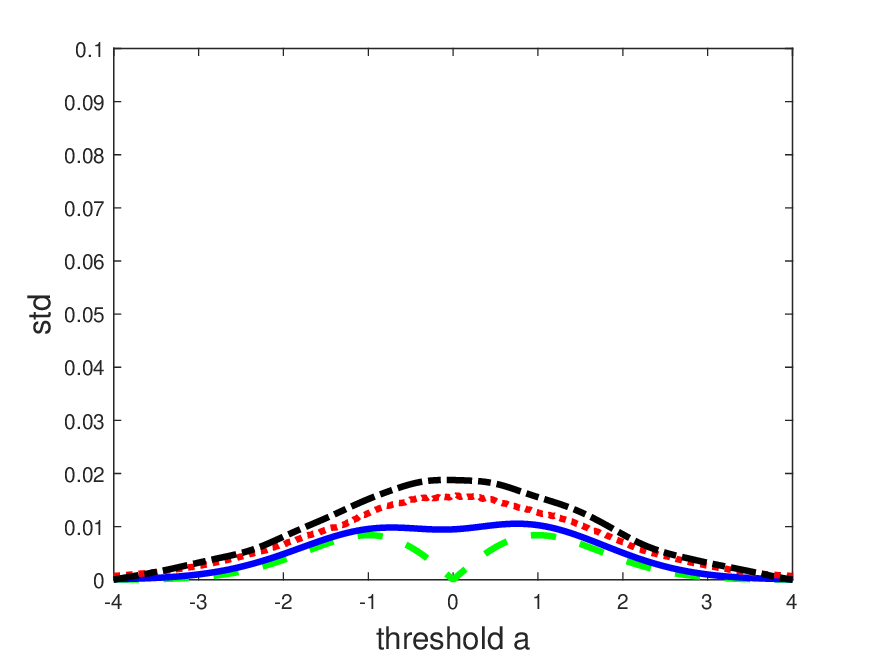}&	\includegraphics[width=50mm, height=40mm]{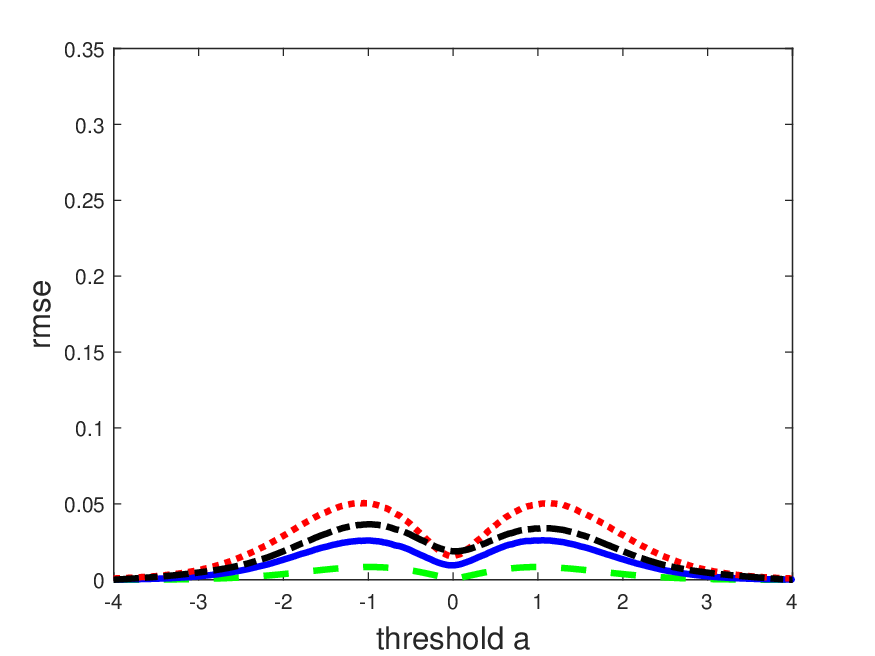}\\
			\multicolumn{3}{c}{(2) $\alpha_i\sim {iid}{\cal{N}}\left(0,1\right)$, ${\varepsilon}_{ij}\sim{iid}{\cal{N}}\left(0,1\right)$, $J=20$}\\
			\includegraphics[width=50mm, height=40mm]{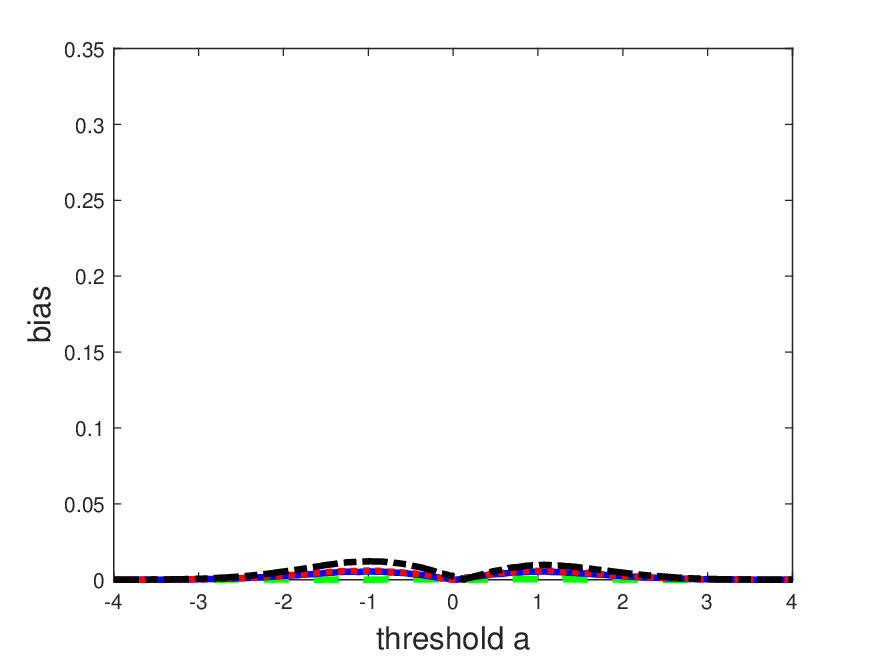}&	\includegraphics[width=50mm, height=40mm]{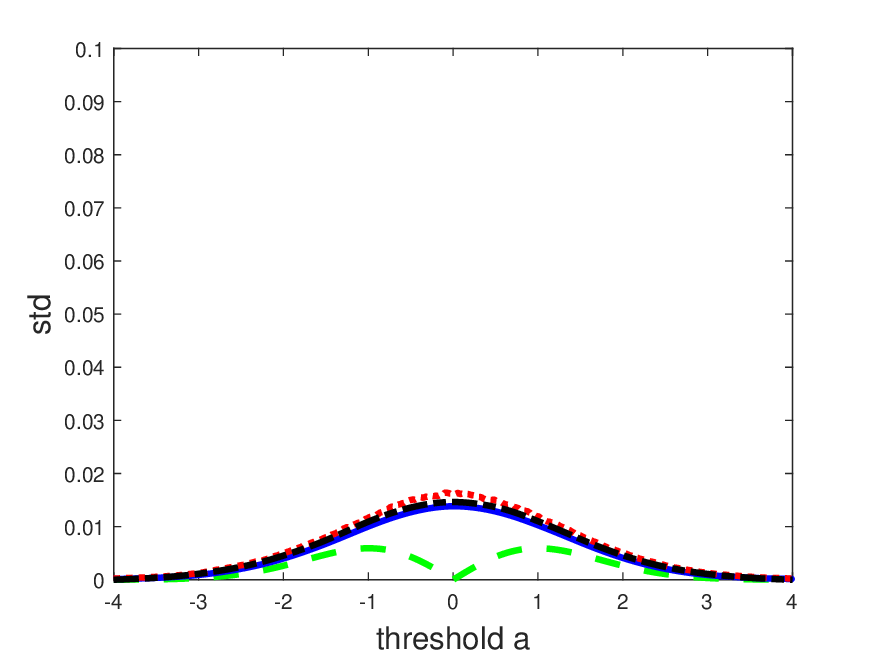}&	\includegraphics[width=50mm, height=40mm]{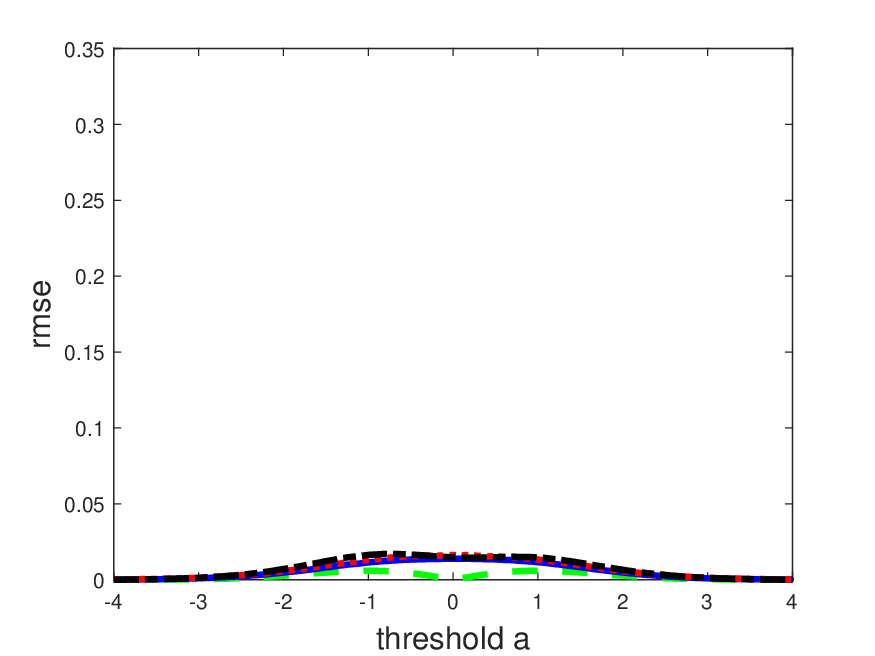}\\
			\multicolumn{3}{c}{(3) $\alpha_i\sim{iid}{\cal{N}}\left(0,1\right)$, $\frac{{\varepsilon}_{ij}}{s(\alpha_i)}\sim{iid}{\cal{N}}\left(0,1\right)$, $s(\alpha_i)\in\{.1,1.41\}$, $J=2$}\\
			\includegraphics[width=50mm, height=40mm]{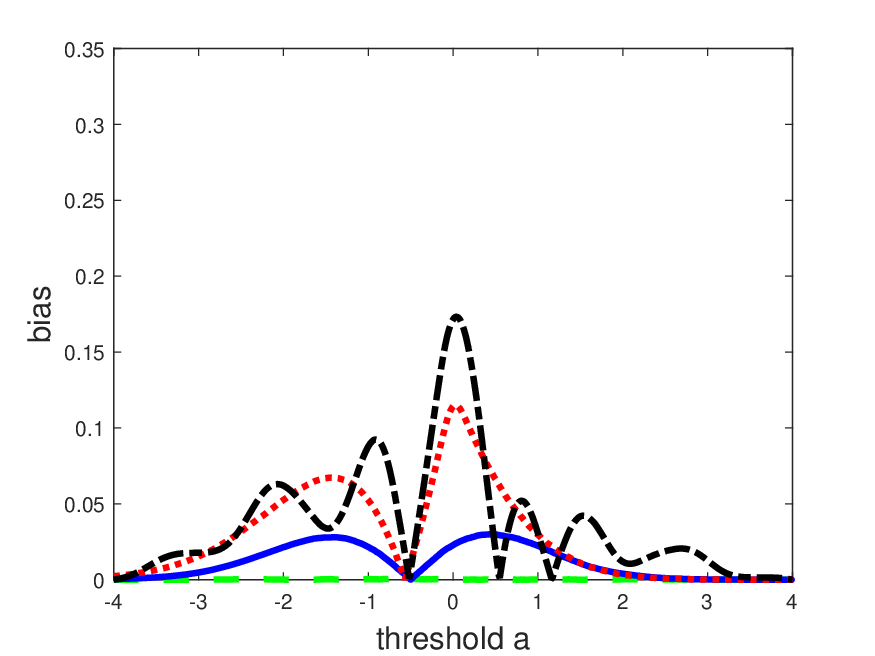}&	\includegraphics[width=50mm, height=40mm]{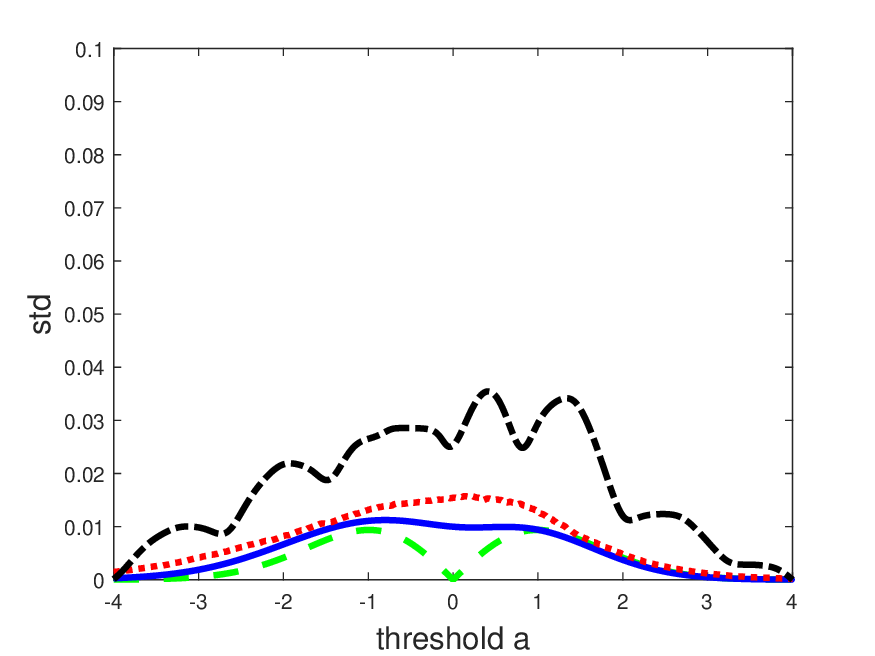}&	\includegraphics[width=50mm, height=40mm]{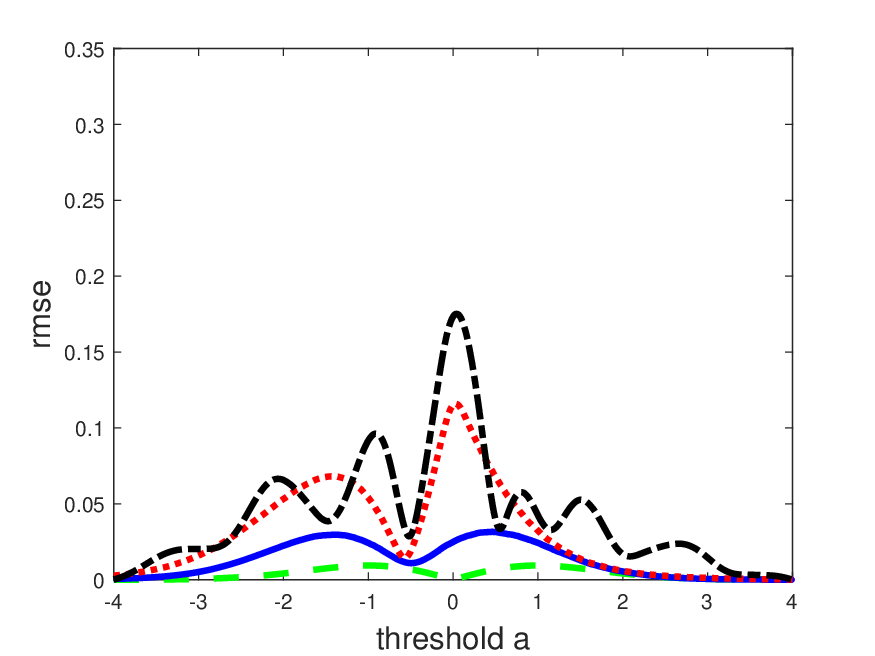}\\
			\multicolumn{3}{c}{(4) $\alpha_i\sim{iid}{\cal{N}}\left(0,1\right)$, $\frac{{\varepsilon}_{ij}}{s(\alpha_i)}\sim{iid}{\cal{N}}\left(0,1\right)$, $s(\alpha_i)\in\{.1,1.41\}$, $J=20$}\\
			\includegraphics[width=50mm, height=40mm]{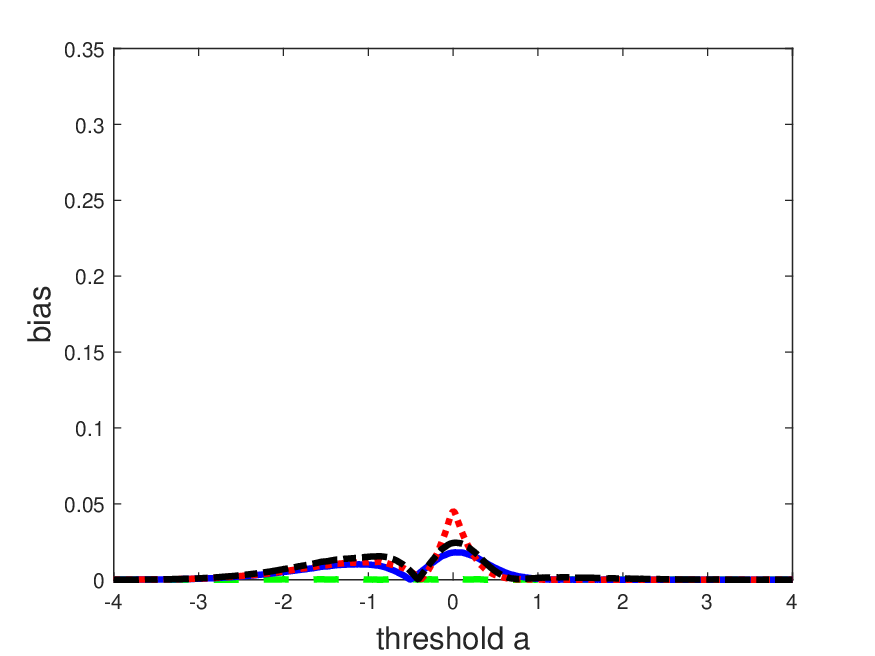}&	\includegraphics[width=50mm, height=40mm]{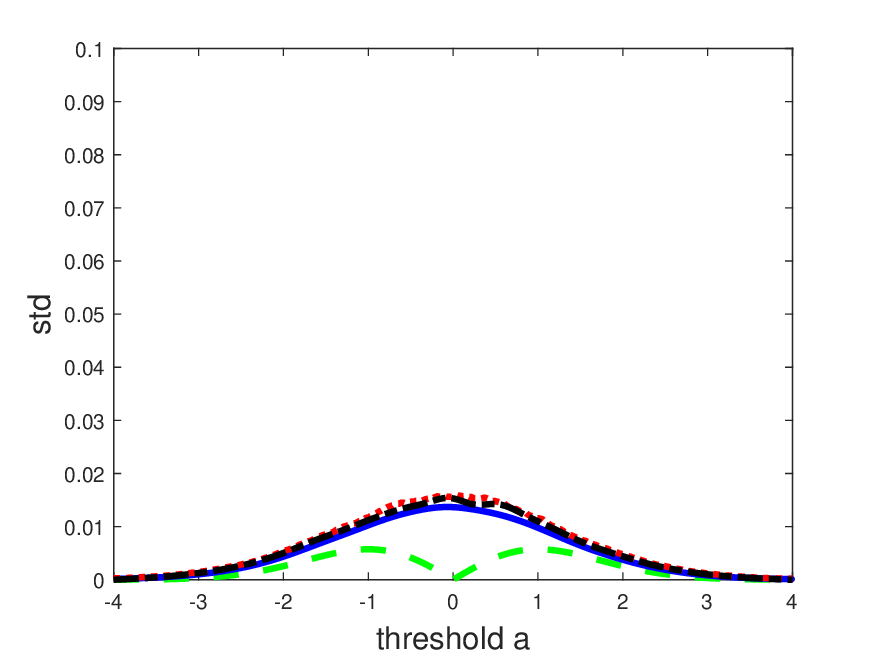}&	\includegraphics[width=50mm, height=40mm]{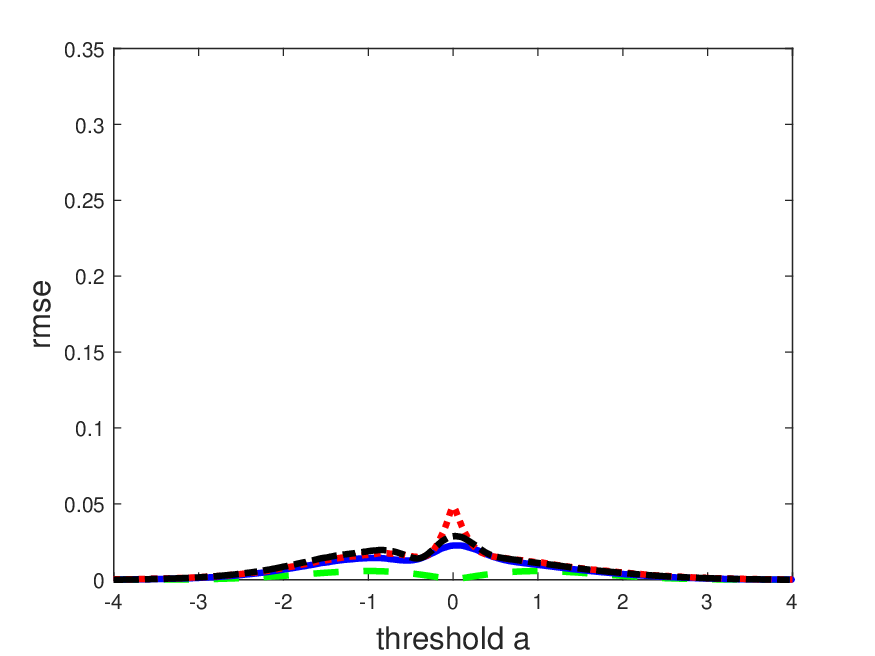}\end{tabular}
	\end{center}
	\par
	\textit{{\small Notes: 1000 simulations based on the fixed-effects model (\ref{FE_mod}). The left column shows the absolute bias, the middle column shows the standard deviation, and the right column shows the root-MSE, for four estimators: model-based (in {\color{green}{dashed}}), PAE (in {\color{blue}{solid}}), fixed-effects (in  {\color{red}{dotted}}), and nonparametric deconvolution (in \textbf{dash-dotted}).}}
\end{figure}

\begin{figure}[tbp]
	\caption{Monte Carlo results for $F_{\alpha}(a)=\mathbb{E}[\boldsymbol{1}\{\alpha \leq a\}]$ in the fixed-effects model $Y_{ij}=\alpha_i+{\varepsilon}_{ij}$, when the distribution of $\alpha$ is misspecified\label{fig_mc_misp}}
	\begin{center}
		\begin{tabular}{ccc}
			Bias&Std&RMSE \\
			\multicolumn{3}{c}{(1) $\alpha_i\sim {iid}\limfunc{Beta}\left(11,1\right)$ (rescaled), ${\varepsilon}_{ij}\sim{iid}{\cal{N}}\left(0,1\right)$, $J=2$}\\
			\includegraphics[width=50mm, height=40mm]{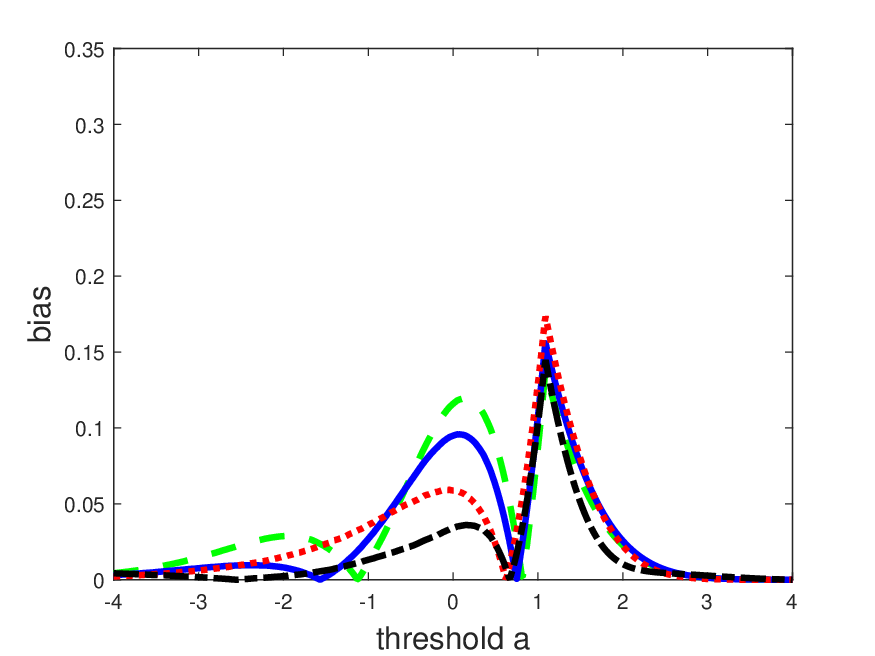}&	\includegraphics[width=50mm, height=40mm]{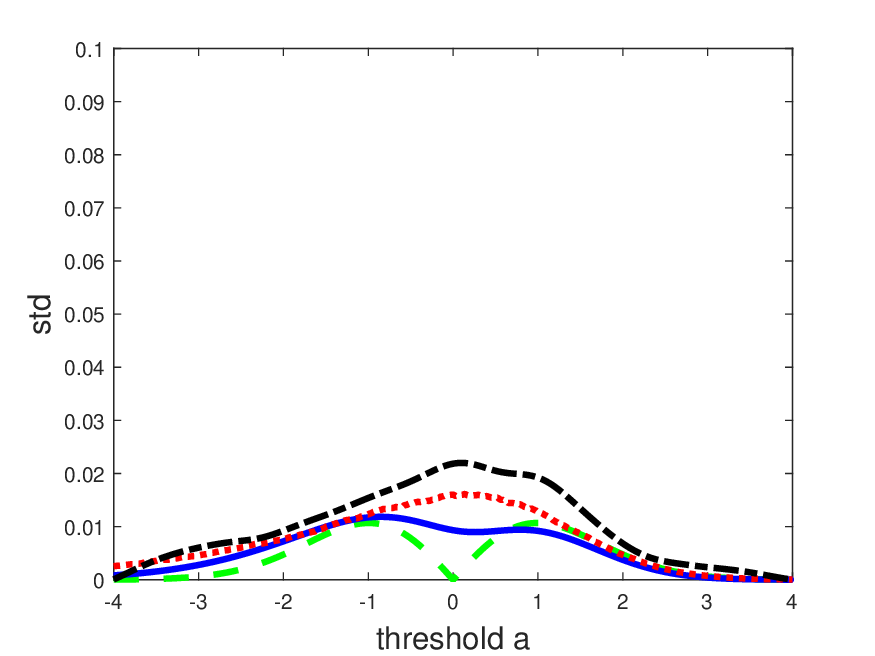}&	\includegraphics[width=50mm, height=40mm]{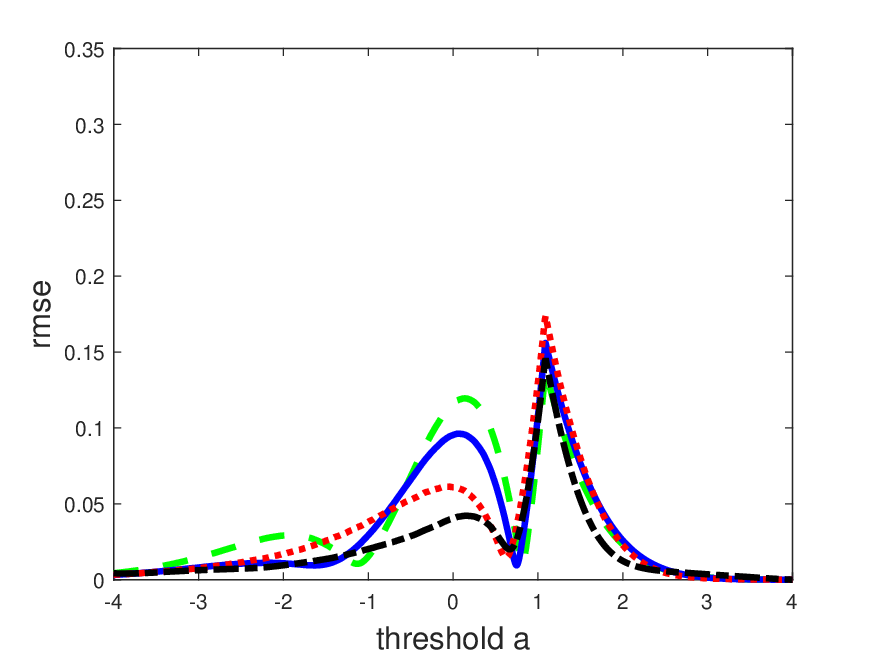}\\\multicolumn{3}{c}{(2) $\alpha_i\sim {iid}\limfunc{Beta}\left(11,1\right)$ (rescaled), ${\varepsilon}_{ij}\sim{iid}{\cal{N}}\left(0,1\right)$, $J=20$}\\
			\includegraphics[width=50mm, height=40mm]{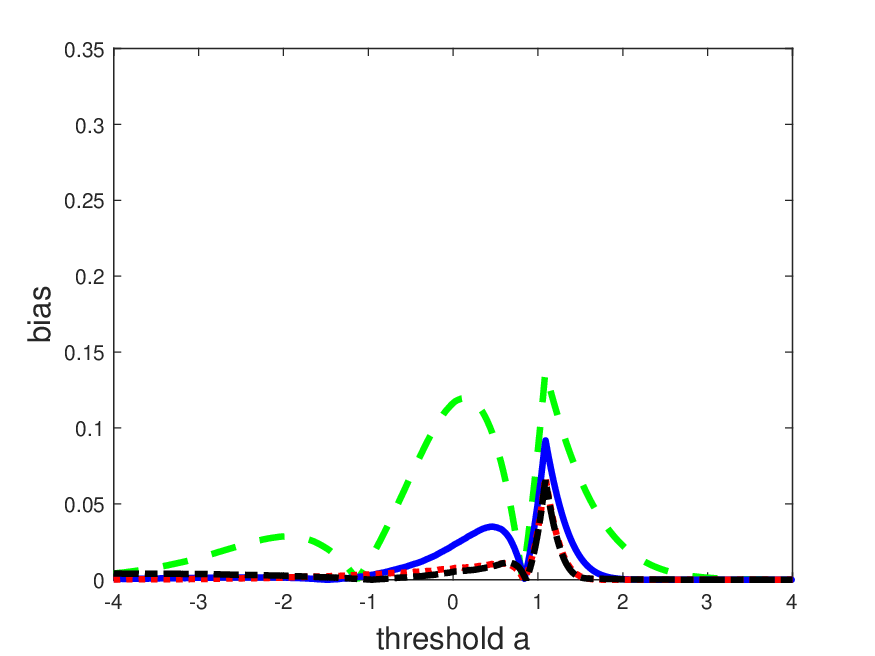}&	\includegraphics[width=50mm, height=40mm]{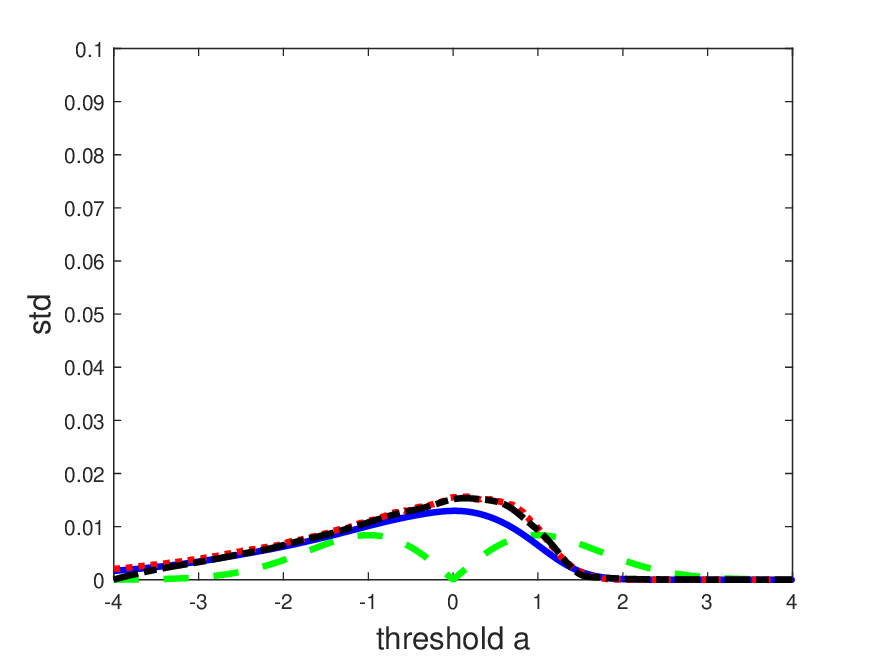}&	\includegraphics[width=50mm, height=40mm]{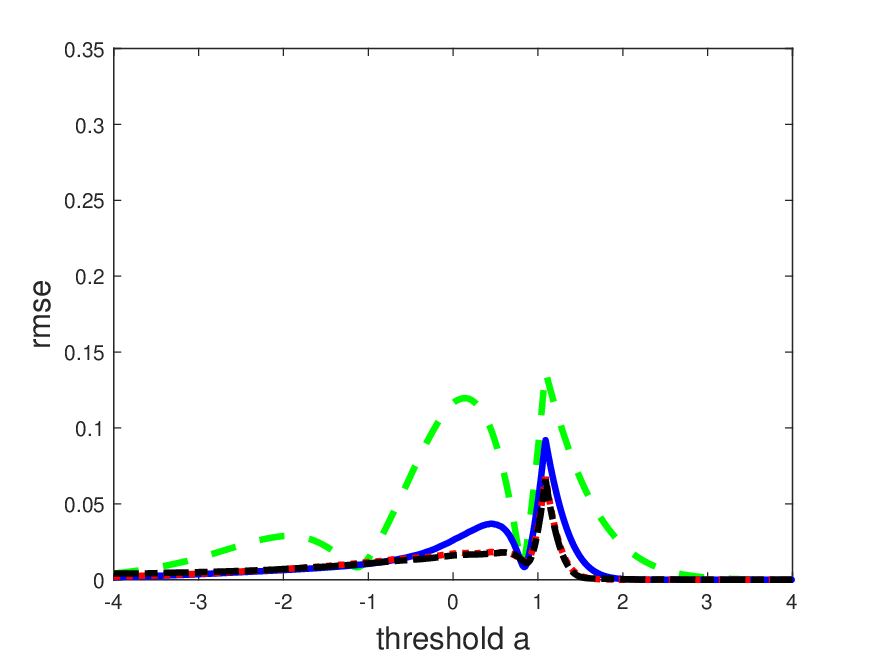}\\
			\multicolumn{3}{c}{(3) $\alpha_i\sim{iid} \limfunc{Beta}\left(11,1\right)$ (rescaled), $\frac{{\varepsilon}_{ij}}{s(\alpha_i)}\sim{iid}{\cal{N}}\left(0,1\right)$, $s(\alpha_i)\in\{.1,1.61\}$, $J=2$}\\
			\includegraphics[width=50mm, height=40mm]{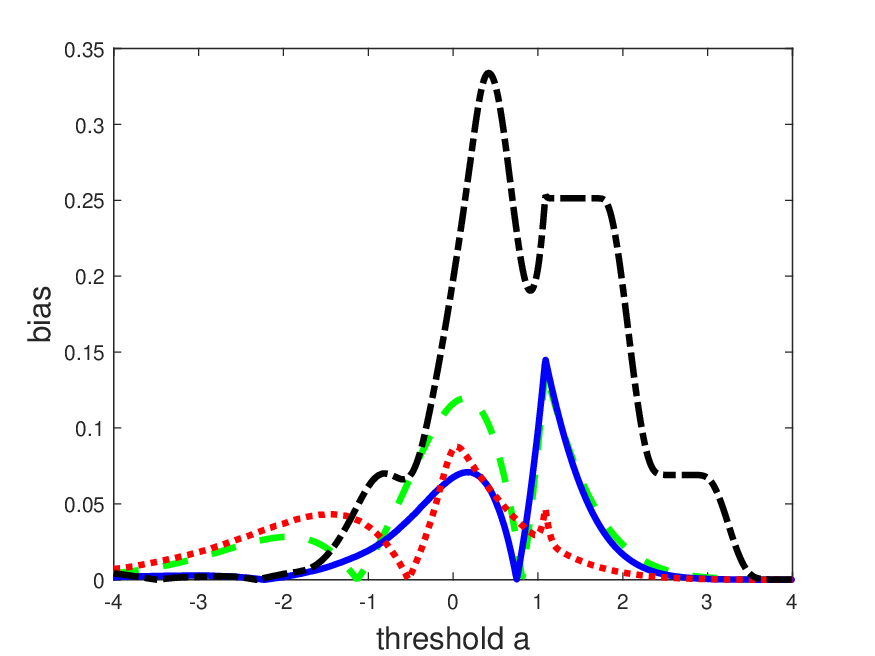}&	\includegraphics[width=50mm, height=40mm]{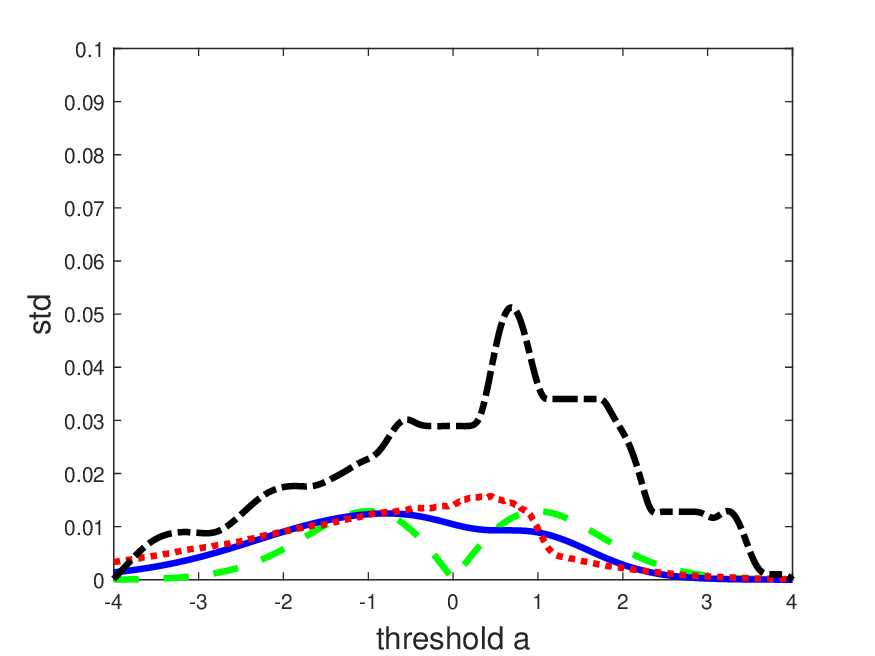}&	\includegraphics[width=50mm, height=40mm]{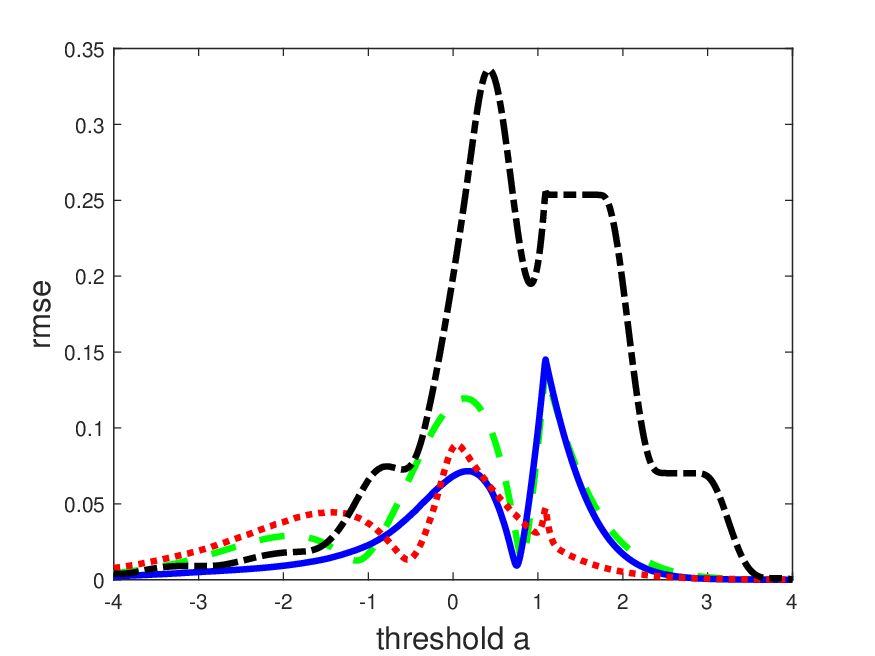}\\
			\multicolumn{3}{c}{(4) $\alpha_i\sim{iid} \limfunc{Beta}\left(11,1\right)$ (rescaled), $\frac{{\varepsilon}_{ij}}{s(\alpha_i)}\sim{iid}{\cal{N}}\left(0,1\right)$, $s(\alpha_i)\in\{.1,1.61\}$, $J=20$}\\
			\includegraphics[width=50mm, height=40mm]{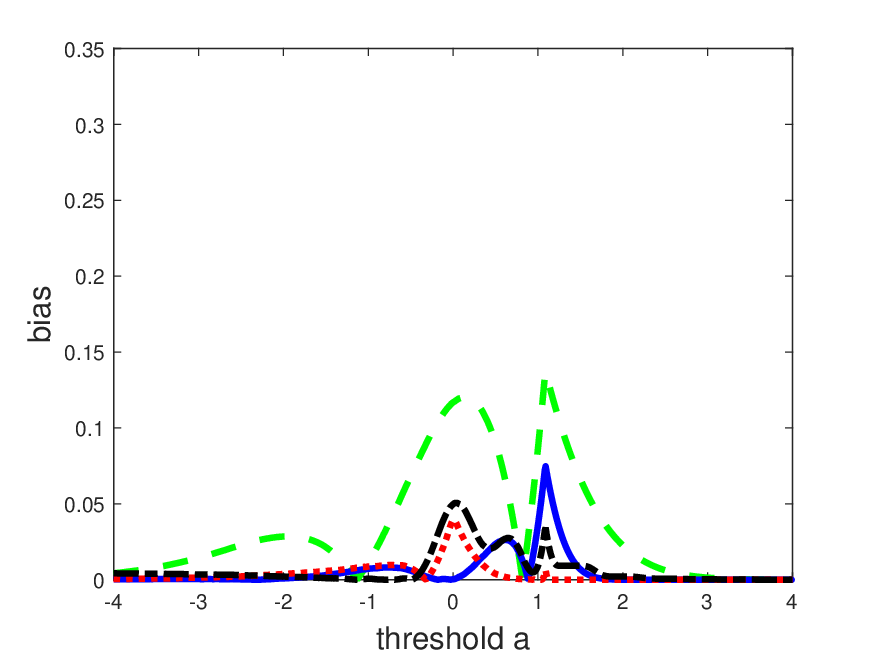}&	\includegraphics[width=50mm, height=40mm]{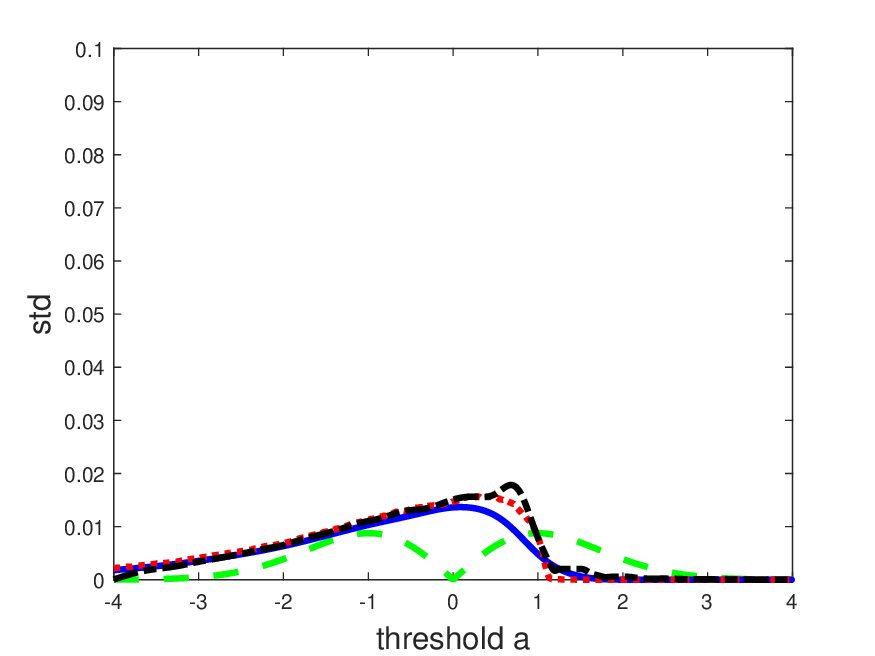}&	\includegraphics[width=50mm, height=40mm]{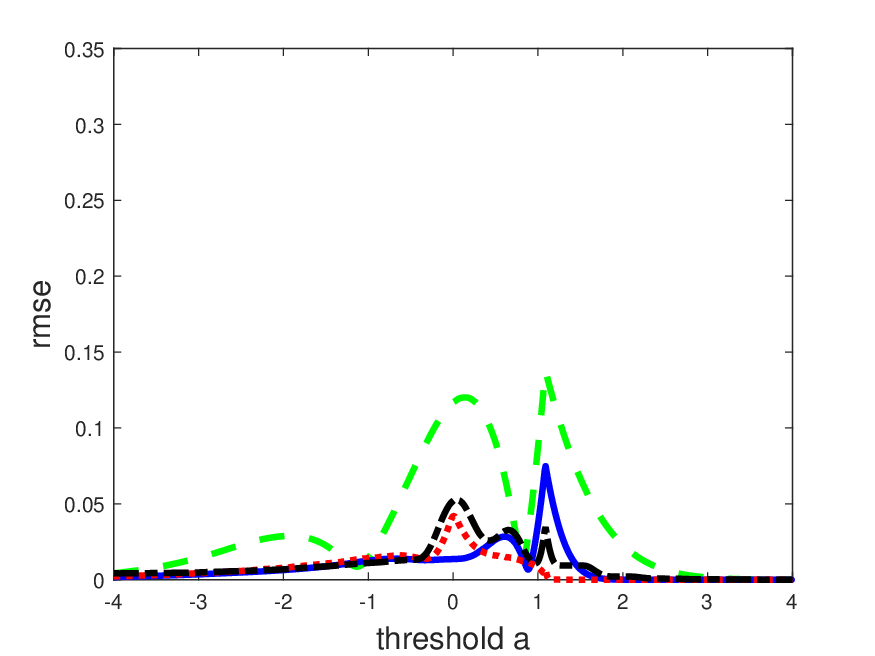}\end{tabular}
	\end{center}
	\par
	\textit{{\small Notes: 1000 simulations based on the fixed-effects model (\ref{FE_mod}). The left column shows the absolute bias, the middle column shows the standard deviation, and the right column shows the root-MSE, for four estimators: model-based (in {\color{green}{dashed}}), PAE (in {\color{blue}{solid}}), fixed-effects (in  {\color{red}{dotted}}), and nonparametric deconvolution (in \textbf{dash-dotted}).}}
\end{figure}

In the bottom two panels of Figure \ref{fig_mc}, $\varepsilon_{ij}$ are heteroskedastic, so for fixed $J$ as $n$ tends to infinity neither the PAE nor the deconvolution and fixed-effects estimators are consistent, yet the model-based estimator remains consistent in this case. We see that the PAE and the model-based estimator behave comparably to the case of DGP 2, with rather small biases and root-MSE, with smallest bias and root-MSE achieved by the model-based estimator. However, the performance of the deconvolution and fixed-effects estimators worsens relative to the homoskedastic case, especially when $J=2$. This suggests that the PAE is less sensitive to this particular form of misspecification than the deconvolution and fixed-effects estimators. 

We next turn to cases where the reference distribution of $\alpha_i$ is misspecified, which is the focus of our theory. In Figure \ref{fig_mc_misp}, we show the results of the simulations when $\alpha_i$ is distributed as a shifted and rescaled Beta with parameters $(11,1)$. In this case, the model-based estimator is substantially biased, as shown by the left column in Figure \ref{fig_mc_misp}, and the bias remains similar when varying $J$ and the distribution of $\varepsilon_{ij}$. In the top two panels of Figure \ref{fig_mc_misp}, $\varepsilon_{ij}$ are normal homoskedastic, so the deconvolution estimator is consistent as $n$ tends to infinity for fixed $J$. When $J=2$, the deconvolution estimator is biased, which is likely to reflect the ill-posedness of the estimation problem. When $J=20$, the bias and root-MSE of the deconvolution estimator are small. Interestingly, although it has no consistency guarantees in this DGP, the PAE performs relatively well. Indeed, while the PAE is biased for $J=2$, its variance is smaller than the one of the deconvolution estimator. In addition, when $J=20$ and the posterior conditioning is more informative, the performance of the PAE and deconvolution estimator improves, the latter still dominating the former. 

In the bottom two panels of Figure \ref{fig_mc_misp}, $\varepsilon_{ij}$ are heteroskedastic. We see that this form of misspecification has large effects on the performance of the deconvolution estimator, especially when $J=2$ so the  signal-to-noise ratio is lower. However, the performance of the PAE is very similar to the homoskedastic case: it is slightly less biased than the model-based estimator when $J=2$, and it is substantially less biased and has small root-MSE when $J=20$. In these designs the fixed-effects estimator and the PAE perform similarly.

\section{Extensions\label{App_Ext}}

In this section of the appendix we consider a number of issues in turn: how to compute PAE when they are not available in closed form, how to estimate quantities of interest that are nonlinear in $f_0$, whether the constant two appearing in Theorem \ref{theo_global} can be improved upon, how to construct confidence intervals, how to perform specification tests, how to derive the form of minimum-MSE estimators, and how to interpret PAE as Bayesian estimators in models where $U$ has finite support. 

\subsection{Computation}

$\widehat{\delta}^{\rm P}$ can be computed in closed form in simple models, such as all the examples in this paper. However, in complex models such as structural models, the likelihood function or posterior density may not be available in closed form. A simple approach in such cases is to proceed by simulation. 

Specifically, for all $i=1,...,n$ we first draw $U_i^{(s)}$, $s=1,...,S$ according to $f_{\widehat{\sigma}}(\cdot\,|\, X_i)$, and compute $Y_i^{(s)}=g_{\widehat{\beta}}(U_i^{(s)},X_i)$. Then, we regress $\delta_{\widehat{\beta}}(U_i^{(s)},X_i)$ on $Y_i^{(s)}$, for $s=1,...,S$. Any nonparametric/machine learning regression estimator can be used for this purpose. This procedure requires virtually no additional coding given simulation codes for outcomes and counterfactuals.

\subsection{Nonlinear effects}

The researcher may be interested in a nonlinear function of $f_0$. Specifically, here we abstract from covariates $X$ and focus on $\overline{\delta}=\varphi_{\beta}(f_0)$, for some functional $\varphi_{\beta}$. As an example, in the fixed-effects model (\ref{FE_mod}), $\overline{\delta}$ may be the Gini coefficient of $\alpha$. The analysis in the linear case applies verbatim to this case, since under regularity conditions
\begin{equation} \varphi_{\beta}(f_0)=\varphi_{\beta}(f_{\sigma_*})+\nabla \varphi_{\beta}(f_{\sigma_*})[f_0-f_{\sigma^*}]+o(\epsilon^{\frac{1}{2}}),\label{nonlinear_approx}\end{equation}
which is linear in $f_0$, up to smaller-order terms. Here $\nabla \varphi_{\beta}$ denotes the gradient of $\varphi_{\beta}(f)$ with respect to $f$. In Appendix \ref{sec_other_ex2} we report model-based and posterior estimates of Gini coefficients based on simulated data.

\subsection{The constant in Theorem \ref{theo_global}}

The binary choice model that we describe in Section \ref{sec_other_ex2} below is helpful to see that the global bound in Theorem \ref{theo_global}, which depends on the constant two, cannot be improved upon in general. To see this, consider the binary choice model (\ref{eq_binary_choice}) of Section \ref{sec_other_ex2} with three simplifications: $X$ consists of a single value, $b$ is known, and $\sigma_*=1$ is fixed. We assume that $x'b> X'b$. 

In this example, for $\epsilon$ large enough the worst-case blue specification errors of $\widehat{\delta}^{\rm M}$ and $\widehat{\delta}^{\rm P}$ are 
$${\limfunc{Bias}}_{\rm M}=\limfunc{max}(\Phi(x'b),1-\Phi(x'b)),$$
and
$${\limfunc{Bias}}_{\rm P}=\frac{\limfunc{max}(\Phi(x'b)-\Phi(X'b),1-\Phi(x'b))}{1-\Phi(X'b)},$$
respectively. 

From this, we first see that the specification error of the posterior estimator is smaller than twice that of the model-based estimator. In addition, taking $X'b=0$ and $x'b=\eta$, we have, for small $\eta$,
$$\frac{{\limfunc{Bias}}_{\rm P}}{{\limfunc{Bias}}_{\rm M}}=\frac{2(1-\Phi(\eta))}{\Phi(\eta)}\overset{\eta\rightarrow 0}{\rightarrow} 2.$$
This shows that two is indeed the smallest possible constant in Theorem \ref{theo_global}.

\subsection{Confidence intervals}

Consider first the correctly specified case. Suppose that $\widehat{\beta}$ and $\widehat{\sigma}$ are asymptotically linear in the sense that, for some mean-zero function $h$, we have
$$\left(\begin{array}{c}\widehat{\beta}\\\widehat{\sigma}\end{array}\right)=\left(\begin{array}{c}{\beta}\\{\sigma}_*\end{array}\right) +\frac{1}{n}\sum_{i=1}^n h(Y_i,X_i)+o_P(n^{-\frac{1}{2}}).$$
Then, under standard conditions (e.g., Newey and McFadden, 1994), we have
\begin{equation}
\label{eq_asympt}n^{\frac{1}{2}}\left(\begin{array}{c}\widehat{\delta}^{\rm M}-\overline{\delta}\\\widehat{\delta}^{\rm P}-\overline{\delta}\end{array}\right)\overset{d}{\rightarrow}{\cal{N}}\left(\left(\begin{array}{c} 0\\0\end{array}\right),\left(\begin{array}{cc}\Sigma_{11} & \Sigma_{12}\\\Sigma_{21} & \Sigma_{22}\end{array}\right)\right).
\end{equation}
Here,   $\Sigma_{11}=\limfunc{Var}_{*}\left(G_1' h(Y,X)+\mathbb{E}_{*}[\delta(U,X)\,|\, X]\right)$, 
$\Sigma_{12}=\limfunc{Cov}_{*}\big(G_1' h(Y,X)+\mathbb{E}_{*}[\delta(U,X)\,|\, X], \allowbreak
G_2' h(Y,X)+\mathbb{E}_{*}[\delta(U,X)\,|\,Y, X]\big)$, $\Sigma_{21}=\Sigma_{12}$, and 
$\Sigma_{22}=\limfunc{Var}_{*}\left(G_2' h(Y,X)+\mathbb{E}_{*}[\delta(U,X)\,|\, Y, X]\right)$,
for $G_1=\partial_{\beta,\sigma}\mathbb{E}_{\beta,\sigma_*}\,[ \delta_{\beta}(U,X)]$
and $G_2 = \mathbb{E}_{\beta,\sigma_*}\left\{ \partial_{\beta,\sigma}\mathbb{E}_{p_{\beta,\sigma_*}}\,[ \delta_{\beta}(U,X)\,|\,Y,X] \right\}$, where $\partial_\theta g(\theta_1)$ denotes the gradient of $g(\theta)$ at $\theta=\theta_1$. Note that in (\ref{eq_asympt}) we allow $\delta_{\beta}$ to be non-smooth in $\beta$ (e.g., an indicator function).

Consider next the locally misspecified case. A simple possibility to ensure uniform coverage within an $\epsilon$-neighborhood is to add $b_{\epsilon}(\gamma)$ on both sides of a standard confidence interval of $\overline{\delta}$. For example, one may construct the 95\% interval
$$\left[\widehat{\delta}^{\rm P}\pm \left(\epsilon^{{\frac{1}{2}}}\left\{\frac{2}{\phi''(1)}{\limfunc{Var}}_{*}\left(\delta(U,X)-\mathbb{E}_{*}[\delta(U,X)\,|\,Y,X]\right)\right\}^{\frac{1}{2}}+1.96n^{-\frac{1}{2}}\widehat{\Sigma}_{22}^{\frac{1}{2}}\right)\right],$$
for $\widehat{\Sigma}_{22}=\limfunc{Var}_{*}\left(G_2' h(Y,X)+\mathbb{E}_{*}[\delta(U,X)\,|\, Y, X]\right)$, where expectations and variances are taken with respect to $P(\widehat{\beta},f_{\widehat{\sigma}})$, and $\delta$, $G_2$, and $h$ are evaluated at $\widehat{\beta}$ and $\widehat{\sigma}$. Note that this confidence interval requires setting a value for $\epsilon$. Building on Hansen and Sargent (2008), Bonhomme and Weidner (2018) propose to interpret $\epsilon$ by relating it to the local power of a specification test.

\subsection{Specification test\label{App_spec}}

Using the asymptotic distribution of $(\widehat{\delta}^{\rm M},\widehat{\delta}^{\rm P})$ under correct specification of $f_{\sigma}$, we obtain
$$n^{\frac{1}{2}}\left(\widehat{\delta}^{\rm P}-\widehat{\delta}^{\rm M}\right)\overset{d}{\rightarrow}{\cal{N}}\left(0,\widetilde{\Sigma}\right),$$
where $\widetilde{\Sigma}=\limfunc{Var}_{*}\left(\mathbb{E}_{*}[\delta(U,X)\,|\, Y,X]-\mathbb{E}_{*}[\delta(U,X)\,|\, X] + (G_2-G_1)' h(Y,X)\right)$. Hence, under correct specification, $$n\left(\widehat{\delta}^{\rm P}-\widehat{\delta}^{\rm M}\right)'\widetilde{\Sigma}^{-1}\left(\widehat{\delta}^{\rm P}-\widehat{\delta}^{\rm M}\right)\overset{d}{\rightarrow} \chi^2_1.$$
Plugging-in a consistent empirical counterpart for $\widetilde{\Sigma}$ in this expression, we obtain a simple test of correct specification of the parametric density $f_{\sigma}$.

\subsection{Minimum local worst-case MSE estimator}
\label{app:MSE}

Here we explain why
$\widehat{\delta}^{\rm MMSE}$ in \eqref{deltaMMSE}
gives the estimator with minimum worst-case MSE, in a local asymptotic framework where $n$ tends to infinity, $\epsilon$ tends to zero, and $n\epsilon$ tends to a positive constant.
We only consider the case where $\beta$ and $\sigma_*$ are known and not estimated; that is, we have $\psi(y,x)=0$.
Then, finding $\gamma^{\rm MMSE}(y,x) $ such that $\widehat{\delta}^{\rm MMSE} $ minimizes worst-case MSE
 over $f_0 \in \Gamma_\epsilon$ can, to leading order in $\epsilon$ and $n^{-1}$, be shown 
to be equivalent to minimizing 
$$
   [b_{\epsilon}(\gamma)]^2 \, + \, \frac 1 n \, {\rm Var}_*[\gamma(Y,X)].
$$
See Bonhomme and Weidner (2018) for details.

Next, applying Lemma~\ref{lem_bias} and noting that $\mathbb{E}_{*}[\gamma(Y,X)-\delta(U,X)]=0$
is required for MSE minimization (since
adding a constant to $\gamma(y,x)$ such that  $\mathbb{E}_{*}[\gamma(Y,X)-\delta(U,X)]=0$
has no effect on the higher order bias terms in Lemma~\ref{lem_bias}, nor on ${\rm Var}_*[\gamma(Y,X)]$, it is optimal to eliminate the leading bias term $\mathbb{E}_{*}[\gamma(Y,X)-\delta(U,X)]$ in this way), we find that to leading order in $\epsilon$ and $n^{-1}$ the worst-case MSE reads
$$
   \frac{2 \,   \epsilon}{\phi''(1)} \,
   \mathbb{E}_{*}\left\{ {\rm Var}_*\left[  \gamma(Y,X)-  \delta(U,X) \, \big| \, X \right] \right\}
   +\frac 1 n \, \mathbb{E}_{*} \left\{ \gamma(Y,X) - \mathbb{E}_{*}[ \delta(U,X)] \right\}^2 .
$$
This expression for the approximate worst-case MSE depends on the distribution of $X$, which is unknown.
For the minimum local worst-case specification error result in Theorem~\ref{theo_bias}, it does not matter that the distribution of $X$
is unknown, because that distribution is identified from the sample as $n \rightarrow \infty$. However, for the MSE
result here we have to take a stand on how to deal with the randomness in the observed covariates.
In the following we {condition on the observed sample of covariates}, and replace all population
expectations over $X$ by sample averages over $X_i$, $i=1,\ldots,n$. We write $\widehat {\mathbb{E}}_X$ for those sample
averages. The worst-case MSE objective function in the last display then reads
$$
   \frac{2 \,   \epsilon}{\phi''(1)} \,
   \widehat {\mathbb{E}}_X    {\rm Var}_*\left[  \gamma(Y,X)-  \delta(U,X) \, \big| \, X \right]  
   +\frac 1 n \,  \widehat {\mathbb{E}}_X \, \mathbb{E}_{*}\left( \left\{ \gamma(Y,X) -  \widehat {\mathbb{E}}_X \mathbb{E}_{*}[ \delta(U,X)|X] \right\}^2 \, \Big| \, X\right) .
$$
By the law of total variance we have
\begin{align*}
  &  {\rm Var}_*\left[  \gamma(Y,X)-  \delta(U,X) \, \big| \, X \right] 
  \\
  &=  \mathbb{E}_{*}\left\{ {\rm Var}_*\left[  \gamma(Y,X)-  \delta(U,X) \, \big| \, Y, X \right]  \big| \, X \right\}
    +   {\rm Var}_*\left\{ \mathbb{E}_*\left[  \gamma(Y,X)-  \delta(U,X) \, \big| \, Y, X \right] \, \big| \, X \right\} 
  \\ 
  &=  \mathbb{E}_{*}\left\{ {\rm Var}_*\left[   \delta(U,X) \, \big| \, Y, X \right]  \big| \, X \right\}
    +   {\rm Var}_*\left\{ \mathbb{E}_*\left[  \gamma(Y,X)-  \delta(U,X) \, \big| \, Y, X \right] \, \big| \, X \right\} .
\end{align*}

In the following we can ignore the term $ \mathbb{E}_{*}\left\{ {\rm Var}_*\left[   \delta(U,X) \, \big| \, Y, X \right]  \big| \, X \right\}$, because 
it does not depend on $\gamma(y,x)$. Then, the leading approximation to the worst-case MSE is given by the sample average over $X$ of 
$$
   \frac{2 \,   \epsilon}{\phi''(1)} \,
  {\rm Var}_*\left\{  \gamma(Y,X)-  \mathbb{E}_*\left[  \delta(U,X) \, \big| \, Y, X \right] \, \big| \, X \right\} 
   +\frac 1 n \,  \mathbb{E}_{*}\left( \big\{ \gamma(Y,X) -  \widehat {\mathbb{E}}_X \mathbb{E}_{*}[ \delta(U,X)|X]  \big\}^2  \, \Big| \, X \right) .
$$

Clearly, if for any given $X=x$ we find $\gamma(y,x)$ that minimizes this objective function, then its expected value
over the sample distribution of $X$ is also minimized. 
The corresponding first-order condition for $ \gamma^{\rm MMSE}(Y,X)$ reads
\begin{align*}
 & \frac 1 n   \left\{  \gamma^{\rm MMSE}(y,x)  -  \widehat {\mathbb{E}}_X \mathbb{E}_{*}[ \delta(U,X)|X]  \right\} +    \frac{2 \,   \epsilon}{\phi''(1)} \,
    \Bigg\{ \gamma^{\rm MMSE}(y,x)  -   \mathbb{E}_*\left[  \delta(U,X) \, \big| \, Y=y, \, X=x \right]  
   \\ &   \qquad \qquad\qquad\qquad\qquad \qquad \qquad
    -  \mathbb{E}_*\left[ \gamma^{\rm MMSE}(Y,x) \, \big| \, X=x \right] 
    +   \mathbb{E}_*\left[  \delta(U,X) \, \big| \, X=x \right]  
    \Bigg\} = 0.
\end{align*}
The solution to this first-order condition is
\begin{align*}
\gamma^{\rm MMSE}(y,x) \,&= \, \frac{1}{n} \sum_{i=1}^n \mathbb{E}_{*}[ \delta(U,X)|X=X_i]  
\nonumber \\ & \quad
 +\left(1+\frac{\phi''(1)}{2n\epsilon}\right)^{-1}\Big\{
\mathbb{E}_*[\delta(U,X)\,|\, Y=y,X=x]-\mathbb{E}_*[\delta(U,X)\, |\, X=x]
\Big\},
\end{align*}
where we have now written $ \widehat {\mathbb{E}}_X$ as $\frac 1 n \sum_{i=1}^n$.

The corresponding minimum local MSE estimator for
$\overline \delta = \mathbb{E}_* \left[\delta_{\beta} (U,X)\right]$ is then given by
\begin{align*}
    \widehat{\delta}^{\rm MMSE}
     =\frac{1}{n} \sum_{i=1}^n \gamma^{\rm MMSE}(Y_i,X_i)  
    &= \left[ 1 - \left(1+\frac{\phi''(1)}{2n\epsilon}\right)^{-1} \right] \frac{1}{n} \sum_{i=1}^n \mathbb{E}_{*}[ \delta(U,X)|X_i]  
    \\ & \qquad
       + \left(1+\frac{\phi''(1)}{2n\epsilon}\right)^{-1}  \frac{1}{n} \sum_{i=1}^n 
       \mathbb{E}_*[\delta(U,X)\,|\, Y=Y_i,X=X_i] ,
\end{align*}
which is   the result stated in equation \eqref{deltaMMSE} of the main text.

\subsection{Finite support\label{App_Bayes}}

Here we consider the case where $U$ has finite support and takes the values $u_1,u_2,...,u_K$ with probability $\omega_1^0,...,\omega_K^0$. Here we abstract away from $\beta$, $\sigma$, and covariates $X$.

\paragraph{Injective and non-injective models.}

Let $\delta_k=\delta(u_k)$, and denote $g_k=g(u_k)$ where $Y=g(U)$. Let $\overline{g}_1,...,\overline{g}_L$ denote the $L\leq K$ equivalence classes of $g_1,...,g_K$. We will denote as $\ell(k)\in\{1,...,L\}$ the index corresponding to the equivalence class of $g_k$, for all $k$. In addition, let $n_{\ell}=\sum_{i=1}^n\boldsymbol{1}\{Y_i=\overline{g}_{\ell}\}$ for all $\ell$, and denote $\omega_k^U=f(u_k)$ for all $k$.

It is useful to distinguish two cases. When $g$ is {injective}, $K=L$ and $\mathbb{E}_{p(f)}[\delta(U)\,|\, g(U)=g_k]=\delta_k$. So we have $\widehat{\delta}^{\rm P}=\frac{1}{n}\sum_{k=1}^Kn_k\delta_k$. This estimator does not depend on the assumed $f$. Moreover, as $\limfunc{min}_{k=1,...,K} \, n_k$ tends to infinity we have
$$\widehat{\delta}^{\rm P}\overset{p}{\rightarrow} \sum_{k=1}^K\omega_k^0\delta_k=\overline{\delta}.$$
Hence $\widehat{\delta}^{\rm P}$ is consistent for $\overline{\delta}$, irrespective of the choice of the reference density $f$, provided $\omega_k^U>0$ for all $k$.

When $g$ is {not injective}, $K\neq L$ and we have
\begin{align*}\widehat{\delta}^{\rm P}=&\frac{1}{n}\sum_{i=1}^n\sum_{\ell=1}^L\boldsymbol{1}\{Y_i=\overline{g}_{\ell}\}\mathbb{E}_{p(f)}[\delta(U)\,|\, g(U)=\overline{g}_{\ell}]=\frac{1}{n}\sum_{\ell=1}^Ln_{\ell}\mathbb{E}_{p(f)}[\delta(U)\,|\, g(U)=\overline{g}_{\ell}].\end{align*}
Moreover,
\begin{eqnarray*}\mathbb{E}_{p(f)}[\delta(U)\,|\, g(U)=\overline{g}_{\ell}]&=\sum_{k=1}^{K}{\Pr}_{p(f)}(U=U_k\,|\, g(U)=\overline{g}_{\ell})\delta_{k}\\
	&=\sum_{k=1}^{K}\frac{\omega_{k}^U\boldsymbol{1}\{\ell(k)=\ell\}}{\sum_{k'=1}^{K}\omega_{k'}^U\boldsymbol{1}\{\ell(k')=\ell\}}\delta_{k}
	=: \overline{\delta}_{\ell}^{U}.\end{eqnarray*}
Hence,
$$\widehat{\delta}^{\rm P}=\frac{1}{n}\sum_{\ell=1}^Ln_{\ell}\overline{\delta}_{\ell}^{U}.$$
Through $\overline{\delta}_{\ell}^{U}$, $\widehat{\delta}^{\rm P}$ depends on the prior $\omega^U$ in general, even as $\limfunc{min}_{\ell=1,...,L} \, n_{\ell}$ tends to infinity.

\paragraph{Bayesian interpretation.}

From a Bayesian perspective, one may view $\omega^0$ as a parameter, and put a prior on it. A simple conjugate prior specification is a Dirichlet distribution $\omega \sim \text{Dir}(K,\alpha)$, where $\alpha_k>0$ for $k=1,...,K$. We will focus on the posterior mean
$$\widehat{\delta}^{\rm D}=\mathbb{E}\left[\sum_{k=1}^K\delta_k \omega_k \,|\, Y\right]=\sum_{k=1}^K\delta_k\mathbb{E}\left[\omega_k \,|\, Y\right],$$
for a Dirichlet prior with $\alpha_k=M\omega_k^{U}$ for all $k$, where $M>0$ is a constant.

For all $\ell$, let $\overline{\alpha}_{\ell}=\sum_{k=1}^K\boldsymbol{1}\{\ell(k)=\ell\}\alpha_k$, and $\overline{\omega}_{\ell}=\sum_{k=1}^K\boldsymbol{1}\{\ell(k)=\ell\}\omega_k$. $(\overline{\omega}_{1},...,\overline{\omega}_{L})$ follows the Dirichlet distribution $\text{Dir}(L,\overline{\alpha})$. Moreover, for all $k$, $\omega_k/\overline{\omega}_{\ell(k)}$ is a component of a Dirichlet distribution with mean $\alpha_k/\overline{\alpha}_{\ell(k)}$. 

Unlike the $\overline{\omega}_{\ell}$'s, the $\omega_k/\overline{\omega}_{\ell(k)}$'s are not updated in light of the data since they do not enter the likelihood. Notice the link with the Bayesian analysis of partially identified models in Moon and Schorfheide (2012): here the $\overline{\omega}_{\ell}$'s are identified but the $\omega_k$'s are not, since for identical $g_k$'s the data provides no information to discriminate across $\omega_k$'s.

As a result, we have
\begin{align*}\mathbb{E}[\omega_k \,|\, Y]&=\mathbb{E}\left[\frac{\omega_k}{\overline{\omega}_{\ell(k)}} \overline{\omega}_{\ell(k)}\,|\, Y\right]=\mathbb{E}\left[\frac{\omega_k}{\overline{\omega}_{\ell(k)}} \right]\mathbb{E}\left[ \overline{\omega}_{\ell(k)}\,|\, Y\right]\\&=\frac{\alpha_k}{\overline{\alpha}_{\ell(k)}}\frac{n_{\ell}+\overline{\alpha}_{\ell}}{n+M}\overset{M\rightarrow 0}{\rightarrow} \frac{\omega_k^U}{\sum_{k'=1}^{K}\omega_{k'}^U\boldsymbol{1}\{\ell(k')=\ell(k)\}}\frac{n_{\ell(k)}}{n}.\end{align*}
It thus follows that
\begin{align*}\widehat{\delta}^{\rm D}\overset{M\rightarrow 0}{\rightarrow} \sum_{k=1}^K\delta_k\frac{\omega_k^U}{\sum_{k'=1}^{K}\omega_{k'}^U\boldsymbol{1}\{\ell(k')=\ell(k)\}}\frac{n_{\ell(k)}}{n}=\widehat{\delta}^{\rm P}.\end{align*}
Hence, under a diffuse Dirichlet prior centered around $\omega^U$, the Bayesian posterior mean coincides with the PAE.

\section{Posterior average effects in various settings\label{sec_other_ex2}}

In this section, we provide additional examples of models where PAE may be of interest, and we show illustrative simulations for two models.

\subsection{Models}

\paragraph{Linear regression.}

Consider the linear regression
$$Y_i=X_i'b+U_i.$$
Suppose that $\mathbb{E}[XU]=0$, and that the OLS estimator $\widehat{b}$ is consistent for $b$. Suppose also that the researcher is interested in the average effect $\overline{\delta}=\mathbb{E}_{f_0}[U^2XX']$. In this example $\overline{\delta}$ is multi-dimensional; see Appendix \ref{App_Ext}.

In this context, a model-based approach consists in modeling $U\,|\, X$, say, as a normal with zero mean and variance $s^2$, and computing 
$$\widehat{\delta}^{\rm M}=\widehat{s}^2\frac{1}{n}\sum_{i=1}^nX_iX_i',$$
where $\widehat{s}^2=\frac{1}{n}\sum_{i=1}^n(Y_i-X_i'\widehat{b})^2$ is the maximum likelihood estimator of $s^2$ under normality.

By contrast, a PAE is 
\begin{align*}
\widehat{\delta}^{\rm P}&=\frac{1}{n}\sum_{i=1}^n\mathbb{E}_{p_{\widehat{b},\widehat{s}}} \left[U^2XX'\,\big|\, Y=Y_i,X=X_i\right]\\
&=\frac{1}{n}\sum_{i=1}^n(Y_i-X_i'\widehat{b})^2X_iX_i'.
\end{align*}
This is the central piece in the White (1980) variance formula. $\widehat{\delta}^{\rm P}$ remains consistent for $\overline{\delta}$ absent normality or homoskedasticity of $U$. In this very special case, $\widehat{\delta}^{\rm P}$ is thus fully robust to misspecification of $f_{s}$, since $U_i$ is a deterministic
function of $Y_i$, $X_i$ and $b$.

\paragraph{Censored regression.}

Consider next the censored regression model
\begin{equation}Y_i=\limfunc{max}(Y_i^*,0),\text{  where } Y_i^*=X_i'b+U_i.\label{eq_censored}\end{equation}
In this model, $b$ can be consistently estimated under weak conditions. For example, Powell's (1986) symmetrically trimmed least-squares estimator is consistent for $b$ when $U\,|\, X$ is symmetric around zero, under suitable regularity conditions. In this setting, suppose that we are interested in a moment of the potential outcomes $Y_i^*$, such as $\overline{\delta}=\mathbb{E}_{f_0}[h(Y^*)]$ for some function $h$. As an example, the researcher may wish to estimate a feature of the distribution of wages using a sample affected by top- or bottom-coding. 

Following a model-based approach, let us assume that $U\,|\, X\sim {\cal{N}}(0,s^2)$, and estimate ${s}^2$ using maximum likelihood. A model-based estimator is then $\widehat{\delta}^{\rm M}=\frac{1}{n}\sum_{i=1}^n \mathbb{E}_{f_{\widehat{s}}} [h(X_i'\widehat{b}+U)]$. By contrast, a PAE is 
\begin{align*}
\widehat{\delta}^{\rm P}
& =\frac{1}{n}\sum_{i=1}^n\underset{\text{uncensored}}{\underbrace{\boldsymbol{1}\{Y_i>0\}h(Y_i)}}+\frac{1}{n}\sum_{i=1}^n\underset{\text{censored}}{\underbrace{\boldsymbol{1}\{Y_i=0\}\mathbb{E}_{p_{\widehat{b},\widehat{s}}}\left[h(X_i'\widehat{b}+U)\,\big|\,X_i'\widehat{b}+U\leq 0\right]}}.
\end{align*}
This estimator relies on actual $Y$'s for uncensored observations, and on imputed $Y$'s for censored ones.

The censored regression model illustrates an aspect related to the class of neighborhoods that our theoretical characterizations rely on. In model (\ref{eq_censored}), the researcher might want to impose that $U\,|\,X $ be symmetric around zero, which is the main assumption for consistency of the Powell (1986) estimator. It is possible to construct estimators that minimize local worst-case specification error in an $\epsilon$-neighborhood that only consists of symmetric densities $f_0$. However, PAE may no longer have minimum specification error in this class. More generally, the assumptions that justify the use of a particular estimator $\widehat{\beta}$ may suggest further restrictions on the neighborhood. Our worst-case specification error results are based on a class where such restrictions are not imposed. Indeed, the only additional restriction on $f_0$, beyond belonging to an $\epsilon$-neighborhood around $f_{\sigma_*}$, is that the population moment condition $\mathbb{E}_{P(\beta,f_0)} [\psi_{\beta,\sigma_{*}}(Y,X)]=0$ is assumed to hold,  and we do not impose further restrictions that might be natural in order to justify the validity of this moment condition.

\paragraph{Binary choice.}

Consider now the binary choice model
\begin{equation}
\label{eq_binary_choice}Y_i=\boldsymbol{1}\{X_i'b+U_i>0\}.
\end{equation}
In this model, Manski (1975, 1985) shows that $b$ is identified up to scale as soon as the median of $U\,|\, X$ is zero, under sufficiently large support of $X$. In addition, he provides conditions for consistency of the maximum score estimator $\widehat{b}$, again up to scale. Manski's conditions, however, are not sufficient to consistently estimate the average structural function (ASF, Blundell and Powell, 2004)
$$\overline{\delta}(x)=\mathbb{E}_{f_0}[\boldsymbol{1}\{x'b+U>0\}].$$

Let us take as reference parametric distribution for $U\,|\, X$ a normal with zero mean and variance ${s}^2$, and let $\widehat{s}^2$ denote the maximum likelihood estimator of $s^2$ given $\widehat{b}$, based on normality. Specifically, $\widehat{s}$ maximizes the probit log-likelihood $\sum_{i=1}^nY_i\log \Phi\left(\frac{X_i'\widehat{b}}{s}\right)+(1-Y_i)\log \Phi\left(-\frac{X_i'\widehat{b}}{s}\right)$. A model-based estimator of the ASF is $\widehat{\delta}^{\rm M}(x)=\Phi\left(\frac{x'\widehat{b}}{\widehat{s}}\right)$, and a posterior estimator is
\begin{align*}\widehat{\delta}^{\rm P}(x)=
& \, \frac{1}{n}\sum_{i=1}^n \left[ Y_i \, \frac{\min\left(\Phi\left(\frac{x'\, \widehat{b}}{\widehat{s}}\right),\Phi\left(\frac{X_i'\, \widehat{b}}{\widehat{s}}\right)\right)}{\Phi\left(\frac{X_i'\, \widehat{b}}{\widehat{s}}\right)}+(1-Y_i)\, \frac{\limfunc{max}\left(\Phi\left(\frac{x'\,\widehat{b}}{\widehat{s}}\right)-\Phi\left(\frac{X_i'\,\widehat{b}}{\widehat{s}}\right),0\right)}{1-\Phi\left(\frac{X_i'\,\widehat{b}}{\widehat{s}}\right)} \right] . 
\end{align*}
Unlike $\widehat{\delta}^{\rm M}(x)$, the posterior ASF estimator $\widehat{\delta}^{\rm P}(x)$ depends directly on the observations of the binary $Y_i$'s, in addition to the indirect data dependence through $\widehat{b}$ and $\widehat{s}^2$. In the next subsection we present simulations from an ordered choice model, which suggest that the informativeness of the posterior conditioning --- and the robustness properties of posterior estimators compared to model-based estimators --- depend crucially on the support of the dependent variable.

\paragraph{Panel data discrete choice.}

Our last example is the panel data model
$$Y_{it}=\boldsymbol{1}\{X_{it}'b+\alpha_i+\varepsilon_{it}>0\},\quad i=1,...,n,\quad t=1,...,T.$$
When $\varepsilon_{it}$ are i.i.d. standard logistic, $b$ can be consistently estimated using the conditional logit estimator (Andersen, 1970, Chamberlain, 1984). However, additional assumptions are needed to consistently estimate average partial effects such as the effect of a discrete shift of $\Delta $ along the $k$-th component of $X$,
$$\overline{\delta}=(\mathbb{E}_{f_0}[\boldsymbol{1}\{(X_t+\Delta  \cdot e_k)'b+\alpha+\varepsilon_{t}>0\}]-\mathbb{E}_{f_0}[\boldsymbol{1}\{X_t'b+\alpha+\varepsilon_{t}>0\}])/\Delta  ,$$
where $e_k$ is a vector of zeros with a one in the $k$-th position.  

The standard approach is to postulate a parametric random-effects specification for the conditional distribution of $\alpha$ given $X_1,...,X_T$, and to compute an average effect $\widehat{\delta}^{\rm M}$ with respect to that distribution. By contrast, a posterior estimator is computed conditional on the observations $Y_{i1},...,Y_{iT}$, for every individual $i$. As $T$ tends to infinity, such estimators are robust to misspecification of $\alpha$, provided $\varepsilon_t$ is correctly specified (Arellano and Bonhomme, 2009). Our analysis shows that they also have robustness properties when $T$ is fixed and $n$ tends to infinity.

Aguirregabiria \textit{et al.} (2018) show that conditional logit-like estimators can also be used to consistently estimate parameters in structural dynamic discrete choice settings. As an example, they study the Rust (1987) model of bus engine replacement in the presence of unobserved heterogeneity in maintenance and replacement costs. In such structural models, estimating average welfare effects of policies requires averaging with respect to the distribution of unobservables. PAE provide an alternative to the standard parametric model-based approach in this context.

\subsection{Simulations\label{App_Simu}}

Here we report the results of two simulation exercises, based on the fixed-effects model (\ref{FE_mod}), and on an ordered choice model.

\subsubsection{Fixed-effects model}

\paragraph{Skewness.}

Let us consider the fixed-effects model (\ref{FE_mod}). Suppose the parameter of interest is the skewness of $\alpha$ 
$$\overline{\delta}=\mathbb{E}_{f_0} \left[\alpha^3-3\frac{{\mu}_{\alpha}}{{s}_{\alpha}}-\left(\frac{{\mu}_{\alpha}}{{s}_{\alpha}}\right)^3\right].$$
For example, it is of interest to estimate the skewnesses of income components and how they evolve over time (Guvenen \textit{et al.}, 2014). Since the normal distribution is symmetric, the model-based normal estimator of skewness is simply $\widehat{\delta}^{\rm M}=0$, irrespective of the observations $Y_{ij}$. Hence, $\widehat{\delta}^{\rm M}$ is not informed by the data, even when the empirical distribution of the fixed-effects $\overline{Y}_i=\frac{1}{J}\sum_{j=1}^JY_{ij}$ indicates strong asymmetry. 

By contrast, a PAE based on a normal reference distribution is
$$\widehat{\delta}^{\rm P}=\frac{1}{\widehat{s}_{\alpha}^3}\frac{1}{n}\sum_{i=1}^n \mathbb{E}_{p(f_{\widehat{s}})}\left[\alpha^3\,\big|\, Y=Y_i\right]-3\frac{\widehat{\mu}_{\alpha}}{\widehat{s}_{\alpha}}-\left(\frac{\widehat{\mu}_{\alpha}}{\widehat{s}_{\alpha}}\right)^3.$$
It can be verified that
$$\widehat{\delta}^{\rm P}=\widehat{\rho}^3\frac{1}{\widehat{s}_{\alpha}^3}\frac{1}{n}\sum_{i=1}^n \left(\overline{Y}_i-\overline{Y}\right)^3,$$
where $\widehat{\rho}=\frac{\widehat{s}_{\alpha}^2}{\widehat{s}_{\alpha}^2+\widehat{s}_{\varepsilon}^2/J}$. Under mild conditions, and in contrast with $\widehat{\delta}^{\rm M}$, the posterior estimator $\widehat{\delta}^{\rm P}$ is consistent for the true skewness of $\alpha$ as $J$ tends to infinity. However, $\widehat{\delta}^{\rm P}$ is biased for small $J$ in general.

To provide intuition about the magnitude of the bias, we simulate data where all latent components are independent, $\varepsilon_j$ are standard normal, and $\alpha$ follows a skew-normal distribution (e.g., Azzalini, 2013) with zero mean, variance 1, and skewness $\approx .47$ corresponding to the skew-normal parameter $\delta=.99$. We take $n=1000$, and run $100$ simulations varying $J$ from 1 to 30. We estimate means and variances using minimum-distance based on first and second moment restrictions.

In the left panel of Figure \ref{fig_skew} we show the results. We see that the model-based estimator is equal to zero irrespective of the number $J$ of individual measurements. By contrast, the posterior estimator converges to the true skewness of $\alpha$ as $J$ increases, although it is biased for small $J$.

\begin{figure}
	\caption{Skewness and Gini estimates in the fixed-effects model\label{fig_skew}}
	\begin{center}
		\begin{tabular}{cc}
			Skewness & Gini\\
			\includegraphics[width=70mm, height=50mm]{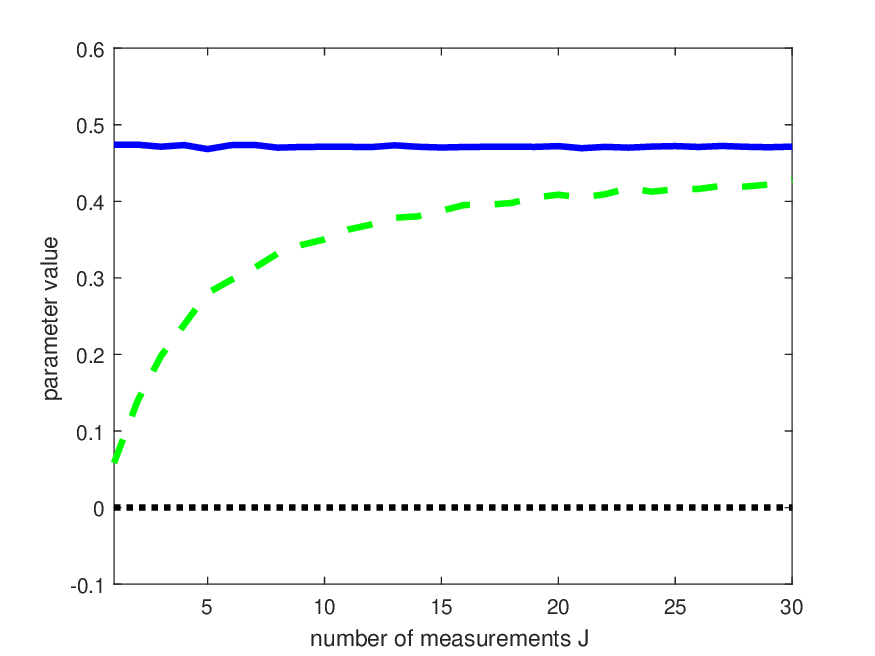} & 	\includegraphics[width=70mm, height=50mm]{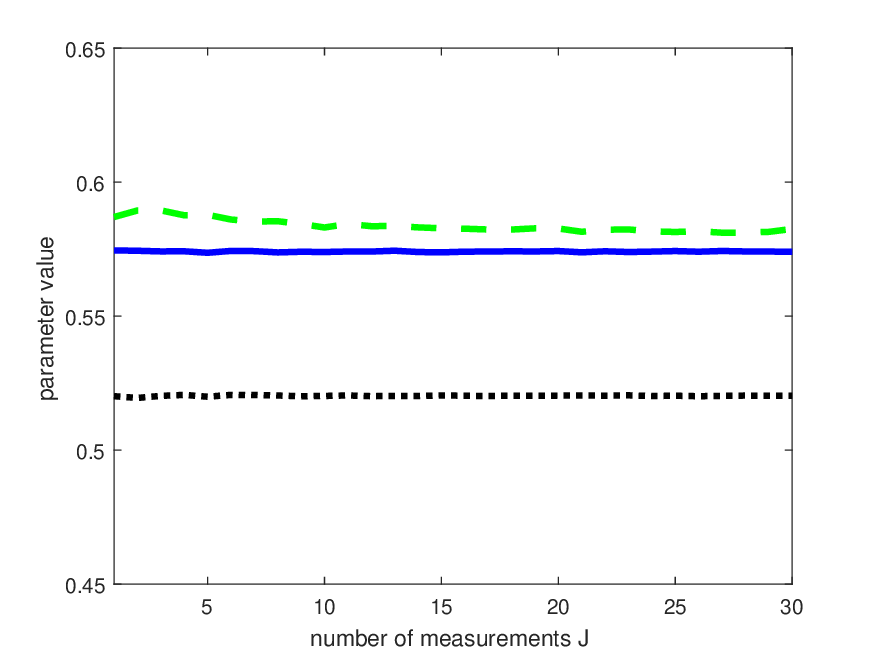}		
		\end{tabular}%
	\end{center}
	\par
	\textit{{\small Notes: true (solid), posterior (dashed), model-based (dotted). $n=1000$, $100$ simulations.}}
\end{figure}

\paragraph{Gini coefficient.}

We next focus on the Gini coefficient of $\alpha$: 
$$G=\frac{1}{2\mathbb{E}_{f_0}[\exp\left(\alpha\right)]}\iint |\exp(\alpha')-\exp(\alpha)|f_0(\alpha)f_0(\alpha')d\alpha d\alpha'.$$ 
In this case, a model-based estimator is
$$\widehat{G}^{\rm M}=2\Phi(\widehat{s}_{\alpha}/\sqrt{2})-1,$$
while a PAE is, following (\ref{nonlinear_approx}), 
$$\widehat{G}^{\rm P}=\widehat{G}^{\rm M}+\frac{1}{n}\sum_{i=1}^n\left(\mathbb{E}[\nabla \widehat{G}(\alpha)\,|\, Y_i]-\mathbb{E}[\nabla \widehat{G}(\alpha)]\right),$$
where
\begin{align*} \nabla\widehat{G}(\alpha)&=-\exp\left(\alpha-\widehat{\mu}_{\alpha}-\frac{1}{2}\widehat{s}_{\alpha}^2\right)\left(\widehat{G}^{\rm M}+1-2\Phi\left(\frac{\alpha-\widehat{\mu}_{\alpha}}{\widehat{s}_{\alpha}}\right)\right) +\left(1-2\Phi\left(\frac{\alpha-\widehat{\mu}_{\alpha}}{\widehat{s}_{\alpha}}-\widehat{s}_{\alpha}\right)\right).\end{align*}

In the right panel of Figure \ref{fig_skew} we show the simulation results. We see that in this case also the model-based estimator is insensitive to $J$. The posterior estimator has a lower bias, especially for larger $J$.

\subsubsection{Ordered choice model}

We next consider the ordered choice model
$$Y_i=\sum_{j=1}^J j\boldsymbol{1}\{\mu_{j-1}\leq Y_i^*\leq \mu_j\},\text{  where }Y_i^*=X_i'b+U_i,$$
for a sequence of {known} thresholds $-\infty=\mu_0<\mu_1<...<\mu_{J-1}<\mu_J=+\infty$. This model may be of interest to analyze data on wealth or income, say, where only a bracket containing the true observation is recorded. We focus on the average structural function 
$$ \overline{\delta}(x)=\mathbb{E}_{f_0}\left[\sum_{j=1}^J j\boldsymbol{1}\{\mu_{j-1}\leq x'b+U\leq \mu_j\}\right].$$

\begin{figure}
	\caption{Average structural function in the ordered choice model\label{fig_ordered}}
	\begin{center}
		\begin{tabular}{cc}
			$J=3$ & $J=10$\\
			\includegraphics[width=70mm, height=50mm]{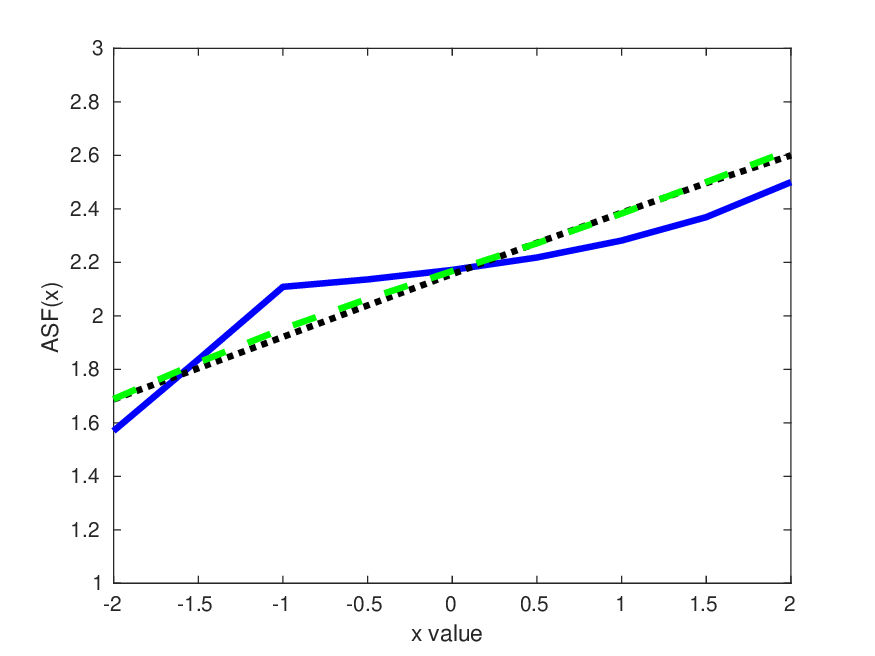} &	\includegraphics[width=70mm, height=50mm]{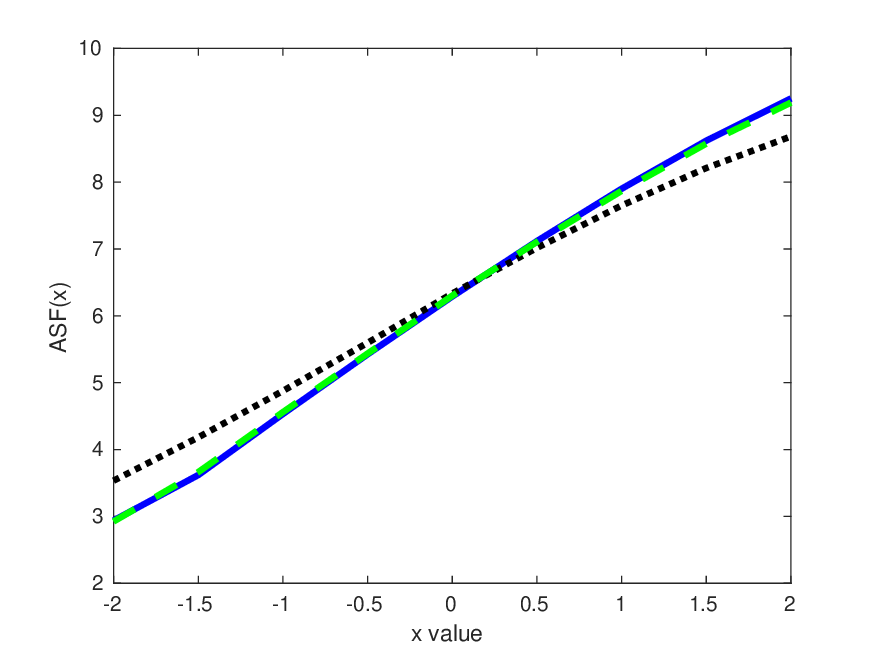}\\\end{tabular}
	\end{center}
	\par
	\textit{{\small Notes: true (solid), posterior (dashed), model-based (dotted). $n=1000$, $100$ simulations.}}
\end{figure}

We take as reference distribution $U\,|\, X \sim {\cal{N}}(0,s^2)$. In the simulated data generating process, $U$ is independent of $X$, distributed as a re-centered $\chi^2$ with mean zero and variance one. We simulate a scalar standard normal $X$. We set $n=1000$, $b_1=.5$, $b_0=0$, $s=1$, and $\mu$ as uniformly distributed between $-2$ and $2$. We estimate $b$ up to scale using maximum score (Manski, 1985). Specifically, using maximum score we regress $\boldsymbol{1}\{Y_i\leq j\}$ on $X_i$ and a constant, for all $j$, imposing that the coefficient of $X_i$ is one. We then regress the $J$ estimates on a common constant and the $\mu_j$, and obtain the implied estimate for $b$ by rescaling. For computation of maximum score, we use the mixed integer linear programming algorithm of Florios and Skouras (2008).

In Figure \ref{fig_ordered} we report the results for $J=3$ (left) and $J=10$ (right). We see that, when $J=3$, model-based and posterior estimators are similarly biased. By contrast, when $J=10$, the posterior estimator aligns well with the true average structural function, even though the model-based estimator is substantially biased. 

\clearpage
\section{Additional empirical results \label{App_Appli}}

\begin{figure}[h!]
	\caption{Distribution of posterior means of neighborhood effects\label{fig_neighb_PM}}
	\begin{center}
		\begin{tabular}{cc}
			Distribution function & Density\\
		\includegraphics[width=80mm, height=60mm]{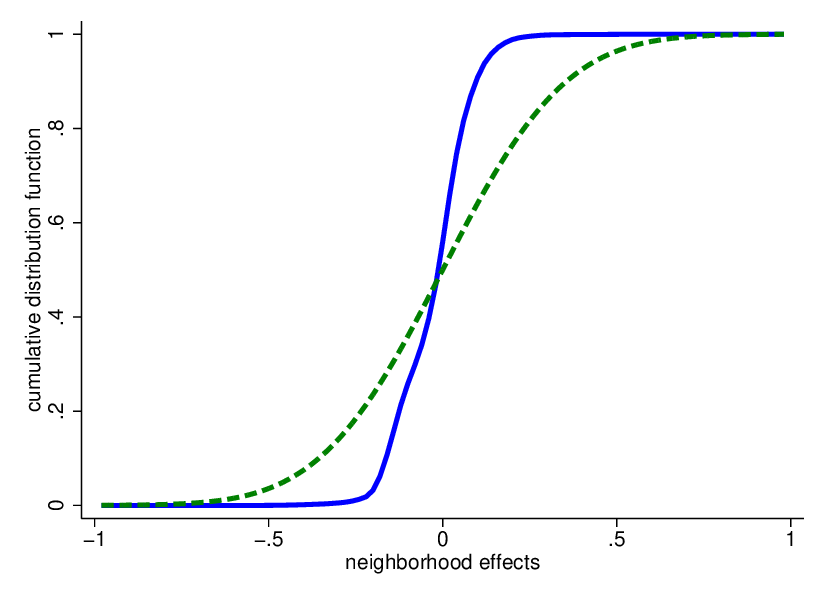}&	\includegraphics[width=80mm, height=60mm]{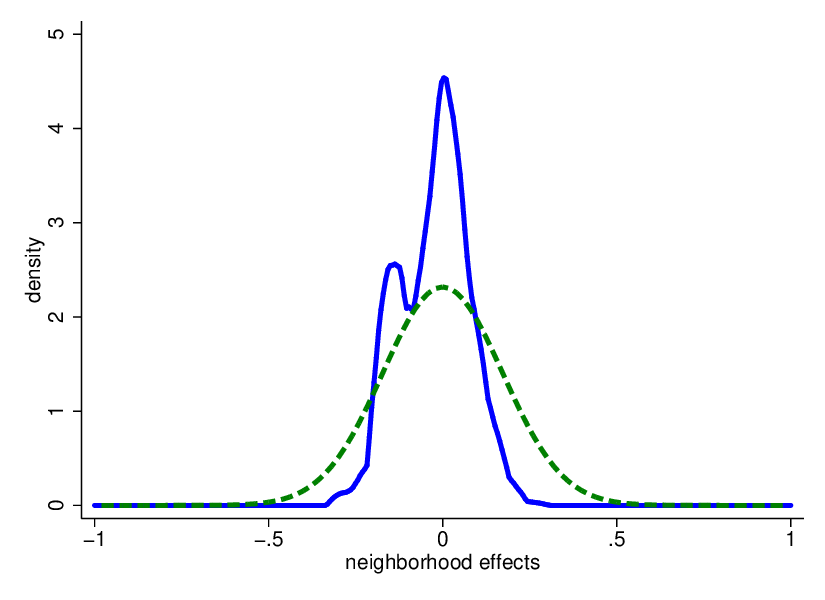}\\\end{tabular}
	\end{center}
	\par
	\textit{{\small Notes: Distribution of posterior means of $\mu_c$ (solid) and prior distribution (dashed). The distribution function is shown in the left graph, the density is shown in the right graph. Calculations are based on statistics available on the Equality of Opportunity website.}}
\end{figure}

\begin{figure}[h!]
	\caption{Posterior distribution of neighborhood effects, correlated random-effects specification\label{fig_neighb_FE_correl}}
	\begin{center}
		\begin{tabular}{cc}
				Distribution function & Density\\
			\includegraphics[width=80mm, height=60mm]{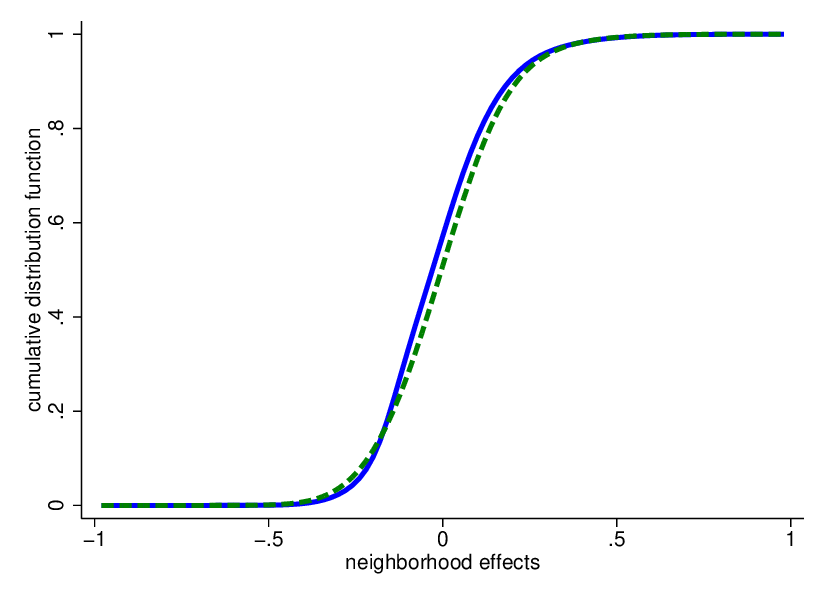}&	\includegraphics[width=80mm, height=60mm]{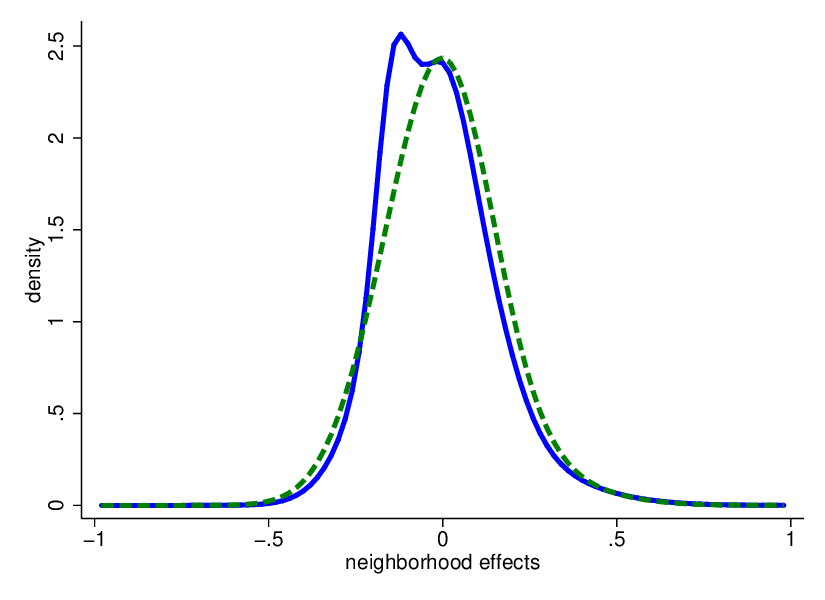}\\\end{tabular}
	\end{center}
	\par
	\textit{{\small Notes: Posterior distribution of $\mu_c$ (solid) and prior distribution (dashed), based on a correlated random-effects specification allowing for correlation between the place effects $\mu_c$ and the mean income of permanent residents $\overline{y}_c$. The distribution function is shown in the left graph, the density is shown in the right graph. Calculations are based on statistics available on the Equality of Opportunity website.}}
\end{figure}

\begin{figure}
	\caption{Distribution of neighborhood effects at the county level\label{fig_neighb_FE_cty}}
	\begin{center}
		\begin{tabular}{cc}
			Fixed-effects estimates & PAE\\
			\multicolumn{2}{c}{Distribution functions}\\
			\includegraphics[width=80mm, height=60mm]{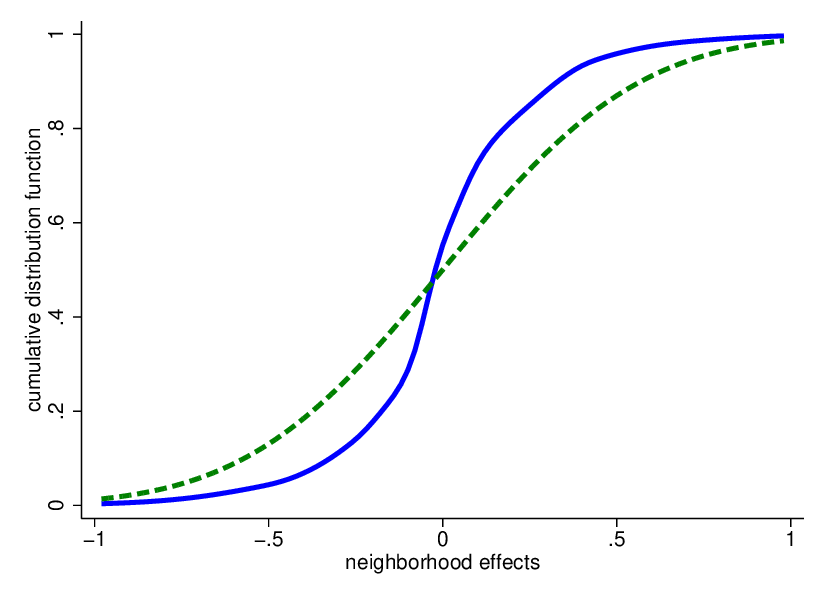}&	\includegraphics[width=80mm, height=60mm]{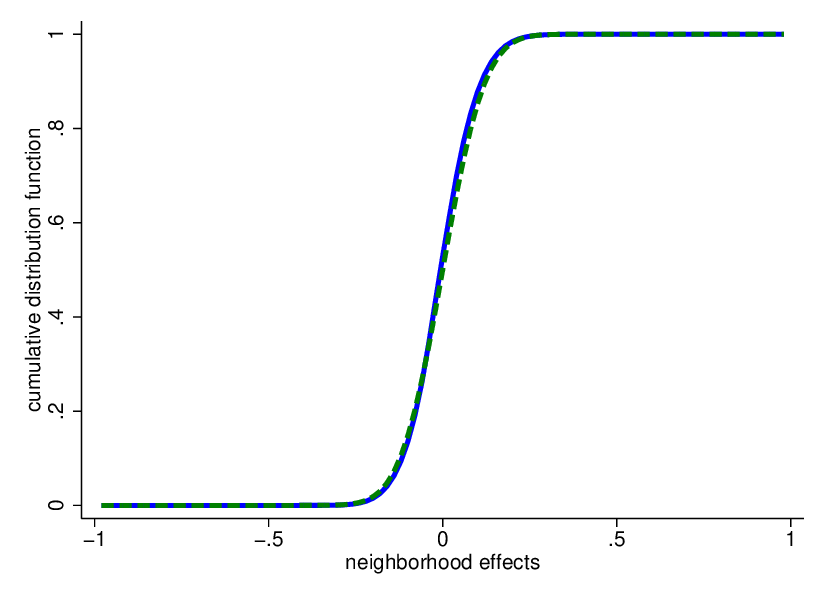}\\
				\multicolumn{2}{c}{Densities}\\
			\includegraphics[width=80mm, height=60mm]{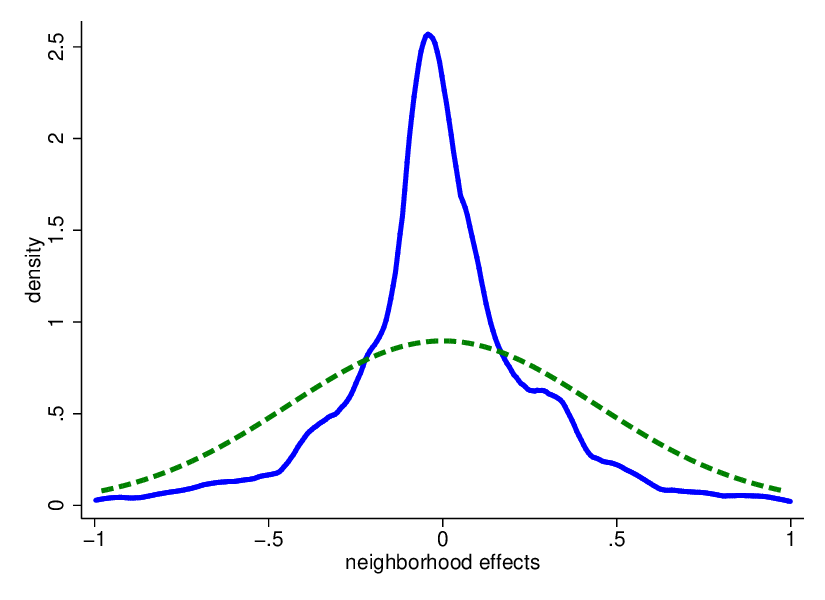}&	\includegraphics[width=80mm, height=60mm]{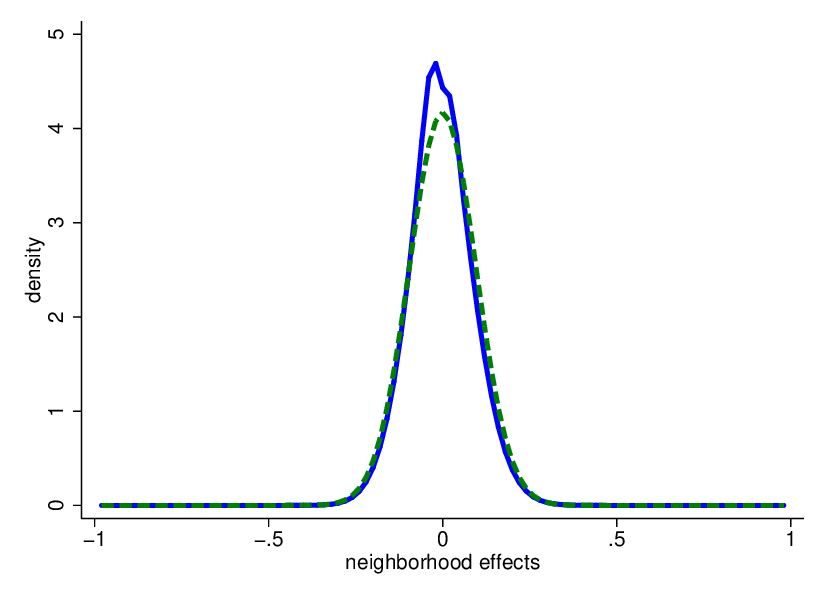}\\\end{tabular}
	\end{center}
	\par
	\textit{{\small Notes: In the left graphs, we show the distribution of fixed-effects estimates $\widehat{\mu}^{\rm county}_c$ (solid) and normal fit (dashed). In the right graphs, we show the posterior distribution of $\mu^{\rm county}_c$ (solid) and prior distribution (dashed). The distribution functions are shown in the top panel, the implied densities are shown in the bottom panel. Calculations are based on statistics available on the Equality of Opportunity website.}}
\end{figure}

\clearpage

\end{document}